\documentclass[12pt]{article}
\RequirePackage[hidelinks]{hyperref}
\usepackage[top=1in, bottom=1in, left=1in, right=1in]{geometry} 
\usepackage{amsthm,soul,cancel,mathrsfs,euscript,amsfonts,natbib,amsmath,xcolor,color,amssymb,amsmath,graphicx,float,booktabs,subcaption,amsthm,bm,bbm,multirow,threeparttable,textcomp,adjustbox,setspace,url,enumerate,graphicx,psfrag,epsf,algorithm}
\theoremstyle{plain}
\newtheorem{lemma}{Lemma}
\newtheorem{theorem}{Theorem}
\newtheorem{corollary}{Corollary}
\newtheorem{remark}{Remark}

\def\bw{{\bf w}}
\def\tr{\mbox{tr}}
\def\bepsilon{\boldsymbol{\epsilon}}
\def\betatilde{{\widetilde\beta}}
\def\bPitilde{\widetilde\Pi}
\def\Bsc{{\cal B}}
\def\Csc{{\cal C}}
\def\Dsc{{\cal D}}
\def\jtilde{{\widetilde j}}
\def\half{\frac{1}{2}}
\def\Abb{\mathbb{A}}
\def\Ibb{\mathbb{I}}
\def\bA{{\bf A}}
\def\bU{{\bf U}}

\def\bX{{\bf X}}
\def\bY{{\bf Y}}
\def\bZ{{\bf Z}}
\def\Usc{{\cal U}}
\def\Wbbhat{\widehat{\mathbb{W}}}
\def\bPi{\boldsymbol{\Pi}}
\def\bpi{\boldsymbol{\pi}}
\def\bmu{\boldsymbol{\mu}}
\def\argmindum{\mathop{\mbox{argmin}}}
\def\argmin#1{\argmindum_{#1}}

\def\argmax#1{\argmaxdum_{#1}}

\def\diag{\mbox{diag}}
\def\transpose{{\sf \scriptscriptstyle{T}}}
\def\trans{^{\transpose}}

\def\onetomany {one-to-many }
\def\onetoone {one-to-one }
\def\supone{^{\scriptscriptstyle \sf [1]}}
\def\suptwo{^{\scriptscriptstyle \sf [2]}}

\def\Vbb{\mathbb{V}}

\def\cos{\mbox{cos}}
\def\sumin{\sum_{i=1}^n}
\def\ellhat{\widehat{\ell}}
\def\Wbbhat{\widehat{\Wbb}}
\def\argmax{\mbox{argmax}}

\def\Initialpi{\widetilde{\bPi}}

\def\Xbb{\mathbb{X}}
\def\Ybb{\mathbb{Y}}
\def\Ubb{\mathbb{U}}

\def\Vbb{\mathbb{V }}
\def\Wbb{\mathbb{W}}
\def\addstar{} 
\def\addstartop{^\top} 
\def\bPihat{\widehat{\bPi}}
\def\Ebb{\mathbb{E}}
\def\hypersphere{ \EuScript{S}^{p-1}}
\def\mm{\mathcal{S}}
\def\boundgamma{\rho}
\def\ii{_{[\mm(\bPihat\suptwo),:]}}
\def\Gsc{[n]}
\def\nmis{n_{\sf\scriptscriptstyle mis}}

\begin{document}
	\def\spacingset#1{\renewcommand{\baselinestretch}%
		{#1}\small\normalsize} \spacingset{1}
	 \title{\Large\bf Spherical Regression under Mismatch Corruption with Application to Automated Knowledge Translation}
	\author{Xu Shi$^1$, Xiaoou Li$^2$, and Tianxi Cai$^3$ \\
		$^1$Department of Biostatistics, University of Michigan
		\\$^2$Department of Statistics, University of Minnesota
		\\$^3$Department of Biostatistics, Harvard University}
	\date{}
	\maketitle
	
	\bigskip
	\begin{abstract}
		Motivated by a series of applications in data integration, language translation, bioinformatics, and computer vision, we consider spherical regression with two sets of unit-length vectors when the data are corrupted by a small fraction of mismatch in the response-predictor pairs. We propose a three-step algorithm in which we initialize the parameters by solving an orthogonal Procrustes problem to estimate a translation matrix $\Wbb$ ignoring the mismatch. We then estimate a mapping matrix aiming to correct the mismatch using hard-thresholding to induce sparsity, while incorporating potential group information. We eventually obtain a refined estimate for $\Wbb$ by removing the estimated mismatched pairs.
		We derive the error bound for the initial estimate of $\Wbb$ in both fixed and high-dimensional setting. We demonstrate that the refined estimate of $\Wbb$ achieves an error rate that is as good as if no mismatch is present. We show that our mapping recovery method not only correctly distinguishes one-to-one and one-to-many correspondences, but also consistently identifies the matched pairs and estimates the weight vector for combined correspondence. 
		We examine the finite sample performance of the proposed method via extensive simulation studies, and with application to the unsupervised translation of medical codes using electronic health records data. 
	\end{abstract}
	
	\noindent
	{\it Keywords:}  electronic health records, hard-thresholding, mismatched data, ontology translation, spherical regression
	\vfill
	\newpage
	\spacingset{1.5} 
	\section{Introduction}
	
	Classical multivariate regression analysis studies the relationship between a response random vector and a predictor random vector, under the assumptions that the response-predictor pairs are correctly linked, and that the data lie in an unrestricted Euclidean space. However, modern large-scale datasets are frequently integrated from multiple heterogeneous data sources. Observations from different datasets are often imperfectly matched due to linkage error. In addition, in many real-world settings ranging from gene expression analysis to language processing, the response and predictor vectors represent directional data, which lie on the surface of a hypersphere \citep{gotsman2003fundamentals,xing2015normalized}. Motivated by the applications in automated translation of medical code, we propose in this paper novel multivariate regression procedures for spherical data in the presence of mismatch. We first detail the motivating examples and then discuss the statistical contributions of the paper.
	
	\subsection{Automated translation of medical codes \label{ICD-9motivation}}
	A motivating example is the translation of medical codes routinely documented in electronic health records (EHR). 
		An EHR is a digital version of a patient's medical records, which contain rich clinical information including medical history, diagnoses, medications, treatments, immunization, allergies, radiology images and laboratory test results.
	The Centers for Medicare and Medicaid Services (CMS) recently renamed the EHR Incentive Program from ``meaningful use" to ``promoting interoperability", aiming to improve the integration and sharing of health information among providers, clinicians, and patients. A key challenge is the lack of semantic interoperability because the ``languages" used in different EHR systems and across time may be inconsistent.
	For example, the International Classification of Diseases (ICD) codes describe medical diagnoses and procedures for billing purposes. Data on ICD codes are used extensively for biomedical research \citep[e.g]{yu2015toward,chen2013applying,parle2001prediction}. However, due to the coding incentives and the heterogeneity in healthcare systems, different providers may use alternative codes to record the same diagnosis or procedure, limiting the transportability of phenotyping algorithms and prediction models across systems. Translation of ICD codes between different healthcare systems can potentially overcome such challenges.
	
	Another example of code translation arises from the updating of ICD coding systems. All U.S. healthcare systems are federally mandated in 2015 to replace the 9th edition of ICD codes (ICD-9) with the 10th edition (ICD-10) for all claims of service, with a potential to convert to the 11th edition in 2022 \citep{icd11}.  Mappings between ICD-9 and ICD-10 codes are essential for linking and analyzing EHR data before and after the transition. Available manual annotations such as the General Equivalence Mappings (GEM) are intrinsically ambiguous due to the increase in specificity and number 
	of ICD-10 codes \citep{krive2015complexity}. In particular, a significant portion of the GEM mappings are \onetomany  mapping, and many are approximate matches. For example, the ICD-9 code 995.29 ``{\em unspecified adverse effect of other drug, medicinal and biological substance}" is mapped to over a hundred ICD-10 codes. The presence of  \onetomany  mapping and the inherent differences between the two coding systems pose substantial challenges to the translation of ICD-9 codes to ICD-10 codes.

	Manual translation of medical codes is not only immensely laborious but also error prone, signifying the need for data-driven translation methods. In this paper, we turn the problem of code translation into a statistical problem of mapping two sets of unit-length vectors, $\Ybb=[\bY_1, ..., \bY_n]\trans$ from one system and $\Xbb=[\bX_1, ..., \bX_n]\trans$ from another system, where $\bY_i$ and $\bX_i$ respectively represent semantic embedding vector (SEV) for the $i^{th}$ medical code in the two systems. The SEVs are generated from the \texttt{word2vec} word embedding algorithm, which essentially learns the interpretation of the medical codes from their co-occurrence patterns in the EHR data which reflect clinical practice \citep{mikolov2013distributed}. See Section~\ref{supp:W2Valgorithm} of the Supplementary Material and \cite{beam2018clinical} for details on the training of SEVs. 
			
			For example, Figure~\ref{grouping} presents select ICD-9 code SEVs from two healthcare systems, the Partners HealthCare System and the Veterans Health Administration. The ICD-9 codes are grouped into clinically meaningful phenotypes according to the ICD-to-phenotype mapping from the phenome-wide association study (PheWAS) catalogue \citep{denny2010phewas}. 
			Each point in Figure~\ref{grouping} represents an ICD-9 code SEV, which is color-coded by the PheWAS group.
			The directions of the SEVs encode the relationship, similarity, and clinical meaning of the codes. Particularly, SEVs of codes with more similar meanings are closer to each other. 
			We thus propose to achieve code translation by inferring a mapping between the two sets of data-driven embeddings, $\Ybb$ and $\Xbb$.
	\begin{figure}[!h]
		\centering
		\begin{subfigure}[t]{0.48\textwidth}
			\centering\includegraphics[width=0.5\textwidth]{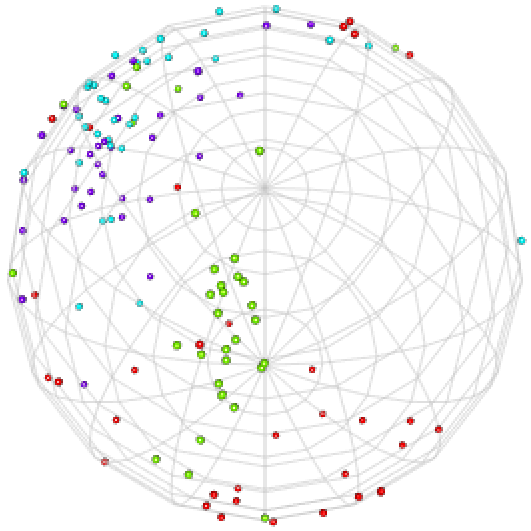}
			\caption{\label{fig:f1}
				Veterans Health Administration (VHA)}
		\end{subfigure}
		\begin{subfigure}[t]{0.48\textwidth}
			\centering\includegraphics[width=0.5\textwidth]{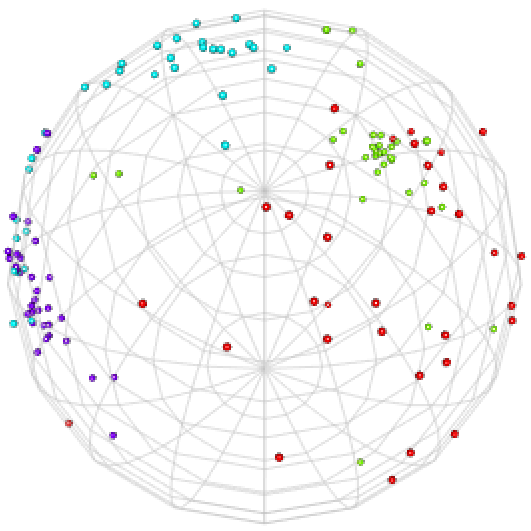}
			\caption{\label{fig:f2}
				Partners HealthCare Systems (PHS)}
		\end{subfigure}
		\caption{\label{grouping} First three principal components of ICD-9 code semantic embedding vectors in four select PheWAS groups from Veterans Health Administration and Partners HealthCare Systems.  Each point represents an ICD-9 code, and is color-coded by the PheWAS group.}
	\end{figure}
	
	In addition to medical code translation, regression with mismatched spherical data has applications in many other scientific problems. Examples include language processing \citep{xing2015normalized,wilson2015controlled}, bioinformatics \citep{sael2010binding,samarov2011local}, pose and correspondence determination in image processing \citep{gold1995new,zhou2014vision}, simultaneous localization and mapping in robotics \citep{kaess2015simultaneous,esteves20173d}, 
	shape matching and retrieval \citep{kazhdan2003rotation,papadakis2007efficient} and computer vision and pattern recognition \citep{marques2009subspace,cohen2018spherical}.

	\subsection{Spherical Regression with Mismatched Data}\label{sec:sphmis}
	
	We propose to create a mapping between the code-SEVs allowing for both \onetoone  and \onetomany  correspondences by developing a spherical regression model with mismatched data. Specifically,  we assume that $\bY_i$ relates to $\Xbb=[\bX_1, ..., \bX_n]\trans$ only through $(\bPi_{i\cdot} \Xbb\Wbb)\trans$, where $\bX_i$ and $\bY_i$ lie on the surface of a $p$-dimensional unit sphere denoted by $\hypersphere$, $\Wbb\in \mathcal{R}^{p\times p}$ is an orthogonal translation matrix satisfying $\Wbb\Wbb\trans = \Ibb_p$ with $\Ibb_p$ an identity matrix, and $\bPi = [\bPi_{1\cdot}\trans, ..., \bPi_{n\cdot}\trans]\trans\in \mathcal{R}^{n\times n}$ is a mapping matrix that corrects the potential mismatch.

	There is a growing literature on the shuffled linear regression problem of $\bY_i = (\bPi_{i\cdot} \Xbb\Wbb)\trans + \bU_i$ when $\bPi$ is a {\em permutation} matrix encoding only \onetoone  correspondence between $\Xbb$ and $\Ybb$ and no orthogonality constraint is imposed on $\Wbb$ \citep[e.g.]{pananjady2017denoising,pananjady2017linear,slawski2017linear,abid2017linear,hsu2017linear,unnikrishnan2018unlabeled}.
	It has been shown that the least squares estimator of $\Wbb$ is generally inconsistent without any additional constraints imposed on $\bPi$ \citep{pananjady2017denoising,pananjady2017linear,slawski2017linear}. When $\bPi$ is sparse in that only a small portion of the responses or predictors is permuted, $\Wbb$ can be consistently estimated \citep{slawski2017linear}. Algorithms for estimation of $\Wbb$ have also been studied \citep{hsu2017linear,abid2017linear,unnikrishnan2018unlabeled}.
	Estimation of the permutation matrix $\bPi$ is challenging both computationally and statistically.
	Specifically, permutation recovery is generally NP-hard unless $p=1$ or $ \bU_i = 0$ \citep{pananjady2017linear,hsu2017linear}. When $p=1$, estimation of $\bPi$ reduces to a sorting problem and thus is computationally tractable. Statistical limit in terms of conditions on the signal-to-noise ratio (SNR) required for the recovery of $\bPi$ has also been studied \citep{pananjady2017linear,slawski2017linear,hsu2017linear}.
	
	While existing literature on regression with mismatched data generally assumes Gaussian data with a random or fixed design matrix, this paper concerns the case where both $\bX_i$ and $\bY_i$ belong to $\hypersphere$. With perfectly matched data in the spherical domain $\hypersphere$, estimation of an orthogonal matrix $\Wbb\in SO(p)=\{A\in \mathcal{R}^{p\times p}: AA\trans =\Ibb_{p}\}$ that transforms the predictors to responses  has been referred to as the spherical regression 
	\citep{chang1986spherical,chang1989spherical,goodall1991procrustes,kim1998deconvolution,rosenthal2014spherical,di2018nonparametric}. Statistical inference beyond the classical setup of fixed dimension $p$ has also been considered recently \citep{paindaveine2017detecting}. However, the current literature is based on the assumption that the response and predictor are correctly linked. 
	
	In this paper, we fill the gaps by developing estimation procedures for $\Wbb$ and $\bPi$ with mismatched spherical data. 
	Instead of imposing  \onetoone  correspondence for $\bPi$, we focus on the setting where $\bPi$ is sparse with a block diagonal structure allowing for both \onetoone  and \onetomany  mappings. Specifically, we assume that the group information is available and mismatch is only expected to occur within a group. 
	In ICD code translation, for example, the phenotype or disease categories can be used as group information. Codes belonging to one disease category (e.g. rheumatoid arthritis) in one healthcare system will never be mapped to codes belonging to a different disease category (e.g. type II diabetes) in another healthcare system. Additional examples of grouping include pathway information for gene expression, brain regions for brain imaging, and semantic groups for clinical concepts in the Unified Medical Language Systems \citep{lindberg1993unified}.
	The group structure may not ease estimation of $\Wbb$ but can greatly reduce the difficulty in recovering $\bPi$. To the best of our knowledge, no existing method consider the recovery of a general mapping matrix leveraging group information. 
	The rest of the paper is organized as follows. We detail our model assumptions and estimation procedures in Section~\ref{method}. 
	In Section~\ref{sec:theory}, we investigate how the degree of mismatch influences the error rates, and we detail theoretical guarantees for our proposed method.
	We evaluate the performance of our proposed method via extensive simulation studies in Section~\ref{simu}. In Section~\ref{application} we apply the proposed method to translate ICD-9 codes between two healthcare systems using SEV data derived from the corresponding EHRs and to translate between ICD-9 and ICD-10 codes using SEV data derived from the same EHR system.
	We close with a discussion in Section~\ref{discussion}.

	\section{Method}\label{method}
	\subsection{Notation}\label{notation}
	We assume that the data consist of $n$ pairs of $p$-dimensional unit-length vectors in $\hypersphere$, i.e., $\Ybb=[Y_{ik}]_{n\times p}=[\bY_{1}, ..., \bY_{n}]\trans $ and $\Xbb=[X_{ik}]_{n\times p}=[\bX_{1}, ..., \bX_{n}]\trans$. The $n$ observations belong to $K$ groups indexed by $\{G_1, ..., G_K\} \subset \Gsc = \{1, ..., n\}$ and mismatch only occurs within group. Let $n_k = |G_k|$ denote the group size with $\sum_{k=1}^K n_k=n$, where for an index set $G$, $|G|$ denotes its cardinality. Without loss of generality, we assume that the observations are ordered by group and thus $\bPi = \diag\{\bPi^1, ..., \bPi^K\}$, where  $\bPi^{k}$ denotes the matrix that encodes the mapping among records within $G_k$. For indexes $i,j\in\Gsc $, let $i\sim j$ denote that $i$ and $j$ belong to the same group, i.e., $i,j\in G_k$ for some $k$. 
	
	For a matrix $\bA$, let $\bA_{i\cdot}$ and $\bA_{\cdot j}$ respectively denote its $i^{th}$ row and $j^{th}$ column, $\sigma_{i}(\bA)$ denote the $i^{th}$ largest singular value of $\bA$, and $\|\bA\|_F$ denote the Frobenius norm of $\bA$. For an index set $G$, let $\bA_{[G,:]}$ denote the rows of $\bA$ corresponding to $G$. Let $\|\cdot\|_2$ denote the $\ell_2$ norm of a vector. Let $\Ibb_n$ denote the $n\times n$ identity matrix, and we omit $n$ when it is self-explanatory. 
	For any mapping matrix $\bpi\in\mathcal{R}^{n\times n}$, let $\mm(\bpi)=\{i \in \Gsc: \bpi_{i\cdot}=\Ibb_{i\cdot} \}$ and $\Dsc(\Ibb, \bpi) = \{i \in \Gsc: \bpi_{i\cdot}\ne \Ibb_{i \cdot} \}$ respectively index the set of matched and mismatched units as determined by $\bpi$, with $\Dsc(\Ibb,\bpi)=\mm(\bpi)^c$ where for any set $\mm$, $\mm^c$ denotes its complement. Accordingly, let $\nmis = |\Dsc(\Ibb, \bPi)|$ denote the number of mismatched pairs in the data. 
	
	\subsection{Model Assumptions}
	\subsubsection{Spherical data and the von-Mises Fisher distribution\label{vMFdefine}}
	
	Unlike the Euclidean space, the $\hypersphere$ sample space features distinctive characteristics both theoretically and practically. 
	The most widely used distribution family for random vectors in $\hypersphere$ is the von-Mises Fisher (vMF) distribution. The $p$-dimensional vMF distribution with parameters $\bmu$ and $\kappa$, denoted by vMF$_{\bmu,\kappa,p}$, has density 
	\begin{equation}\label{vmf}
		f_{\text{vMF}}(\bY|\bmu;~\kappa) = C_{p}(\kappa)\exp(\kappa \bmu\trans  \bY) =  C_{p}(\kappa)\exp\{\kappa \cos(\bmu, \bY)\},
		\end{equation}
	where $\kappa\geq 0$ is a concentration parameter, $\bmu\in \mathcal{R}^p$ is the mean direction with $\|\bmu\|_2=1$, $C_{p}(\kappa)=\kappa^{p/2-1}/\{(2\pi)^{p/2}B_{p/2-1}(\kappa)\}$, and $B_{p/2-1}(\cdot)$ denotes the modified Bessel function of order $p/2-1$. The vMF distribution belongs to the exponential family and thus has many desirable statistical properties. For example, one can show that if $\bZ\sim N(\bmu,\Ibb_p/\kappa)$, then conditional on having unit length, $\bZ\big| \|\bZ\|_2\!=\!1$ follows vMF$_{\bmu,\kappa,p}$ distribution. In addition, for a random vector $\bY\sim \text{vMF}_{\bmu,\kappa,p}$, we have $E[\bY] = \gamma_{\kappa,p}\bmu$ and $E[\|\bY-E[\bY]\|^2_2]=1-\gamma_{\kappa,p}^2$, where 
	$\gamma_{\kappa,p}= B'_{p/2-1}(\kappa)/B_{p/2-1}(\kappa)-(p/2-1)/\kappa$ can be bounded as in the following lemma: 
	\begin{lemma}\label{lemma:eta-bound}
		For $p\geq 4$ and $\kappa>0$, $\max\{0,1-\frac{p-1}{2\kappa}\}< \gamma_{\kappa,p}< 1$.
	\end{lemma}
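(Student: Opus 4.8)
The plan is to first convert $\gamma_{\kappa,p}$ into a ratio of consecutive Bessel functions and then control that ratio through the Riccati differential equation it satisfies. Write $\nu=p/2-1$ and let $B_\nu$ denote the modified Bessel function appearing in \eqref{vmf}. Using the standard recurrence $B_\nu'(\kappa)=B_{\nu+1}(\kappa)+\frac{\nu}{\kappa}B_\nu(\kappa)$, the defining expression collapses to
\[
\gamma_{\kappa,p}=\frac{B_\nu'(\kappa)}{B_\nu(\kappa)}-\frac{\nu}{\kappa}=\frac{B_{\nu+1}(\kappa)}{B_\nu(\kappa)}=:\rho(\kappa).
\]
Positivity $\rho>0$ is immediate because every term in the power series of $B_\nu$ is positive, so the bound $\gamma_{\kappa,p}>0$ inside the $\max$ is free. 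It then remains to show $\rho<1$ and $\rho>1-\frac{p-1}{2\kappa}$.

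Next I would derive the governing Riccati equation. Combining $B_\nu'=\half(B_{\nu-1}+B_{\nu+1})$ with $B_{\nu-1}-B_{\nu+1}=\frac{2\nu}{\kappa}B_\nu$ and differentiating $\rho=B_{\nu+1}/B_\nu$ yields, after bookkeeping and using $2\nu+1=p-1$,
\[
\rho'(\kappa)=1-\rho(\kappa)^2-\frac{p-1}{\kappa}\,\rho(\kappa).
\]
Both inequalities will follow from this single identity by a first-crossing (maximum-principle) argument, anchored by the small-argument behavior $\rho(\kappa)\to0$ as $\kappa\to0^+$, which follows from $B_\mu(\kappa)\sim(\kappa/2)^\mu/\Gamma(\mu+1)$ so that $\rho(\kappa)\sim\kappa/(2(\nu+1))$.

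For the upper bound set $u=1-\rho$, so that $u'=-u(1+\rho)+\frac{p-1}{\kappa}\rho$. Since $u(\kappa)\to1>0$ as $\kappa\to0^+$, if $u$ ever vanished there would be a first such point $\kappa_1$ with $u(\kappa_1)=0$ and $u'(\kappa_1)\le0$; but $\rho(\kappa_1)=1$ forces $u'(\kappa_1)=\frac{p-1}{\kappa_1}>0$, a contradiction, so $\rho<1$ for all $\kappa>0$. For the lower bound, take the candidate subsolution $g(\kappa)=1-\frac{p-1}{2\kappa}$. A direct substitution gives the clean identity $1-g^2-\frac{p-1}{\kappa}g=\frac{(p-1)^2}{4\kappa^2}$, whereas $g'=\frac{p-1}{2\kappa^2}$; these compare as $\frac{p-1}{2}$ versus $\frac{(p-1)^2}{4}$, so $g$ is a \emph{strict} subsolution precisely when $p-1>2$, i.e.\ when $p\ge4$. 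Because $g(\kappa)\to-\infty$ while $\rho(\kappa)\to0$ as $\kappa\to0^+$, we have $\rho>g$ near the origin; a first-crossing argument identical to the one above---at any putative crossing point $\rho'$ equals $1-\rho^2-\frac{p-1}{\kappa}\rho$, which strictly exceeds $g'$ there---rules out $\rho$ ever descending to $g$. Hence $\rho>1-\frac{p-1}{2\kappa}$, and combining with $\rho>0$ gives $\rho>\max\{0,1-\frac{p-1}{2\kappa}\}$.

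The main obstacle is twofold. First, reducing $\gamma_{\kappa,p}$ to the Bessel ratio and then to the Riccati form requires careful use of the three-term recurrence together with the derivative identity, with the dimension entering only through $2\nu+1=p-1$. Second, the maximum-principle steps must be made airtight: one must verify the small-$\kappa$ asymptotics of $\rho$ so the crossing arguments start from the correct side, and check that the subsolution inequality is \emph{strict}, which is exactly where the hypothesis $p\ge4$ is used (for $p=3$ it degenerates to equality). An alternative that sidesteps the ODE is to invoke an Amos-type lower bound $\rho(\kappa)\ge(\sqrt{\kappa^2+a^2}-a)/\kappa$ with $a=(p-1)/2$, since $\sqrt{\kappa^2+a^2}\ge\kappa$ immediately yields $\rho\ge1-a/\kappa=1-\frac{p-1}{2\kappa}$; I would keep the self-contained Riccati proof as the primary route and use this as a cross-check.
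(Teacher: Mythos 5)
Your proof is correct, and it takes a genuinely different route from the paper's. The paper derives the representation $\gamma_{\kappa,p}=B_{p/2-1}'(\kappa)/B_{p/2-1}(\kappa)-(p/2-1)/\kappa$ from the moment generating function of $Z_1$ and then closes the argument by citing an inequality from \cite{baricz2010bounds} (the bound $\kappa B_{\nu}'(\kappa)/B_{\nu}(\kappa)>\kappa-1/2$ for $p\ge 4$), treating the upper bound $\gamma_{\kappa,p}<1$ as immediate from $\gamma_{\kappa,p}=\Ebb(Z_1)$ with $|Z_1|\le 1$. You instead rewrite $\gamma_{\kappa,p}$ as the consecutive-order ratio $\rho=B_{\nu+1}/B_{\nu}$ via the standard recurrence, derive the Riccati equation $\rho'=1-\rho^2-\frac{p-1}{\kappa}\rho$ (your algebra checks out: $2\nu+1=p-1$), and obtain both bounds by first-crossing comparison arguments anchored at the small-$\kappa$ asymptotics $\rho\sim\kappa/p$. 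The subsolution computation $1-g^2-\frac{p-1}{\kappa}g=\frac{(p-1)^2}{4\kappa^2}$ versus $g'=\frac{p-1}{2\kappa^2}$ is correct and makes completely transparent where $p\ge 4$ enters (strictness fails at $p=3$), something the paper leaves buried inside the cited reference. What your approach buys is self-containedness --- the Baricz inequality the paper invokes is itself proved by essentially this kind of monotonicity analysis, so you have in effect reproved the needed special case --- plus an explicit treatment of the upper bound and of positivity, which the paper's written proof does not spell out. What the paper's route buys is brevity and a probabilistic derivation of the Bessel-ratio identity (via the MGF) that is reused elsewhere in the supplement. The only points to tighten are routine: note that $\rho$ is smooth on $(0,\infty)$ because $B_{\nu}>0$ there, so the ``first crossing point'' exists whenever a crossing occurs and the one-sided derivative inequality at that point is legitimate.
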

	The above results are proved in Section~\ref{app-lemma} of the Supplementary Material.
	Intuitively, random vectors following the $\text{vMF}_{\bmu,\kappa,p}$ distribution are symmetrically distributed on $\hypersphere$ concentrating around the mean direction $\bmu$. The expectation is of the same direction as $\bmu$ but lies inside the sphere, i.e., $\gamma_{\kappa,p}< 1$. As the distribution gets more concentrated around $\bmu$,  the expectation gets closer to $\bmu$.
	In addition, the large deviation bounds for sums of i.i.d copies of $\|\bY-\bmu\|_2^2$ derived in Proposition~\ref{lemma:tail-bound-von-miss} of the Supplementary Material may be of independent interest.

	\subsubsection{Unified loss function on the hypersphere\label{norm}}
	The spherical data are also unique in that the loss function defined on the hypersphere unifies a lot of commonly used distance measures. Here we formally introduce our objective function for estimating $\Wbb$ and illustrate such unifying property. 
	To ease exposition, we first consider a simplified scenario with $\bPi=\Ibb_n$ under which we may estimate the translation matrix $\Wbb$ by minimizing the Frobenius norm
	\begin{equation}\label{modelW}
		\Wbbhat = \argmin{\Wbb: \Wbb\Wbb\trans  = \Ibb_p}{\ellhat_0(\Wbb)},~\text{where } \ellhat_0(\Wbb)= \| \Ybb- \Xbb \Wbb\|_F^2.
		\end{equation}
	The role of $\Wbb$ is to align the spaces spanned by columns of $\Xbb$ and $\Ybb$ such that samples in $\Ybb$ and $\Xbb\Wbb$ can be compared in distance.
	The orthogonal parameterization $\Wbb\Wbb\trans =\Ibb$ ensures that the transformed data remain on the sphere, i.e. $\|\Wbb\trans \bX_{i}\|_2=\|\bX_{i}\|_2=1$.

	Because both $\bX_{i}$ and  $\bY_{i}$ have unit length, minimizing the loss function is equivalent to maximizing the cosine similarities between $\bY_{i}$ and its transformed counterpart $\Wbb\trans \bX_{i}$. 
	In addition, the cosine similarity is equal to the inner product when the vectors are of unit lengths. To summarize, we have the following equivalence
	\begin{equation*}
		\argmin{\Wbb: \Wbb\Wbb\trans  = \Ibb_p}{\ellhat_0(\Wbb)} 
		=\underset{\Wbb: \Wbb\Wbb\trans  = \Ibb_p}{\argmax}{\;\sumin \cos(\bY_{i}, \Wbb\trans \bX_{i})}
		=\underset{\Wbb: \Wbb\Wbb\trans  = \Ibb_p}{\argmax}{\;\sumin \bY_{i}\trans \cdot (\Wbb\trans \bX_{i})}.
		\end{equation*}
	The loss function $\ellhat_0(\Wbb)$ also corresponds to the log-likelihood function under the vMF distribution. Specifically, $\ellhat_0(\Wbb)$ corresponds to the log-likelihood function under the model
	\begin{equation}\label{nopi}
		f_{\text{vMF}}(\bY_i|\Xbb;~\kappa) = C_{p}(\kappa)\exp(\kappa \bmu_i\trans  \bY_i) \quad \mbox{with}\quad \bmu_i = \Wbb\trans\bX_i = \Wbb\trans(\Ibb_{i\cdot}\Xbb)\trans \ \mbox{and} \ \Wbb\Wbb\trans = \Ibb_p
		\end{equation}
	with $\bY_i|\Xbb,i\in\Gsc$ independent.
	We thus target an objective on the hypersphere $\hypersphere$ unifying the Frobenius norm, the cosine similarity, the inner product, and the likelihood function of the von Mises-Fisher distribution.
	
	\subsubsection{Model Assumptions under Mismatch with Group Structure \label{sparsepi}}
	Building upon the above objective, we consider the general scenario in the presence of mismatch with $\bPi\neq \Ibb_n$. 
	Estimating $\bPi$ and $\Wbb$ without any constraint is infeasible due to the large number of parameters. In addition to $\bPi$ being block diagonal, we assume that only a small fraction of mismatch occurs and hence  $\nmis= o(n)$.  However, we do not constrain $\bPi$ to be a permutation matrix and accommodate more complex mismatch patterns. For example, if $\bX$ and $\bY$ represent ICD-10 and ICD-9 codes respectively, $\bY_i$ may not be mapped to any single ICD-10 code but rather needs to be represented by a combination of multiple ICD-10 codes in $\Xbb$. We also allow some columns of $\bPi$ to be zero vectors, indicating that the corresponding unit of $\Xbb$ does not link to any response in $\Ybb$. 
	In the presence of mismatch, we assume that $\bY_i \mid \Xbb,i\in\Gsc$ are independent and follow
	\begin{equation}\label{withpi}
		f_{\text{vMF}}(\bY_i|\Xbb;~\kappa) = C_{p}(\kappa)\exp(\kappa \bmu_{\bPi,i}\trans  \bY_i) \quad \mbox{with}\quad \bmu_{\bPi,i} = \Wbb\trans(\bPi_{i\cdot}\Xbb)\trans \ , \ \Wbb\Wbb\trans = \Ibb_p
		\end{equation}
	and $\|(\bPi_{i\cdot}\Xbb)\trans \|_2=1$ to ensure that the mapped vector $(\bPi_{i\cdot}\Xbb)\trans$ remains on $\hypersphere$. 
	A necessary condition for $\|(\bPi_{i\cdot}\Xbb)\trans \|_2=1$ is $\frac{1}{\sqrt{n_k}}\leq \|\bPi_{i\cdot}\|_2\leq \frac{1}{\sigma_{n_k}(\Xbb_{[G_k,:]})},\text{ for all $i\in G_k$}$, which is shown in Lemma~\ref{remark:pi}.
	We further assume that $n>p>\max_{1\leq k\leq K}n_k$ and $\kappa\neq 0$.

	\subsection{Iterative spherical regression mapping (iSphereMAP)}\label{est}
	
	We propose an iterative spherical regression mapping (iSphereMAP) method to estimate the translation matrix $\Wbb$ and the mapping matrix $\bPi$. 
	Although the iSphereMAP procedure can iterate until convergence, we find that the estimators stabilize after three steps and hence focus on the three-step procedure. 
	In step I, we simply initiate $\bPi$ as $\bPihat\supone = \Ibb_n$ and obtain an estimate of $\Wbb$ as
	\begin{equation}\label{modelW0}
		\Wbbhat\supone = 
		\argmin{\Wbb: \Wbb\Wbb\trans  = \Ibb_p}{\| \Ybb_{[\mm(\bPihat\supone),:]}- \Xbb_{[\mm(\bPihat\supone),:]} \Wbb\|_F^2}
		= \argmin{\Wbb: \Wbb\Wbb\trans  = \Ibb_p}{\| \Ybb- \Xbb \Wbb\|_F^2}
		= \argmin{\Wbb: \Wbb\Wbb\trans  = \Ibb_p}{\ellhat_0(\Wbb)}. \end{equation}
	The degree of dissimilarity between $\bPihat\supone$ and the true $\bPi$ is of size $\nmis=|\Dsc(\Ibb, \bPi)|=n-|\mm(\bPi)|$. 
	Solving for $\Wbb$ in the optimization problem (\ref{modelW0}) is a well-known orthogonal Procrustes problem \citep[e.g.]{schonemann1966generalized,gower2004procrustes}, the solution to which is the polar decomposition of $\Xbb\trans \Ybb$ \citep[e.g.]{higham1986computing}:
	\begin{equation*}
		\Wbbhat\supone=\Usc(\Xbb\trans \Ybb), \ \mbox{where for any nonsingular matrix $\Abb_{p \times p}$, } \Usc(\Abb)=\Abb(\Abb\trans \Abb)^{-\half}.
		\end{equation*}

	In step II, we obtain an improved estimator of $\bPi$ by mapping the translated data, $\Ybb$ and $\Xbb\Wbbhat\supone$. Recall that $\bPi=\text{diag}\{\bPi^1,\dots,\bPi^K\}$, where the mapping matrix for the $k^{th}$ group, $\bPi^k$, is an $n_k\times n_k$ matrix. We estimate each $\bPi^k$ using a hard-thresholding procedure as follows. First, we compute an initial estimate $\Initialpi^k$ by the ordinary least squares (OLS) as 
	\begin{equation*}
		\Initialpi^k=
		\Ybb_{[G_k,:]} (\Xbb_{[G_k,:]}\Wbbhat\supone)\trans (\Xbb_{[G_k,:]}\Xbb_{[G_k,:]}\trans )^{-1} .
		\end{equation*}
	Then to obtain a sparse estimate of $\bPi$, we apply hard-thresholding to $\Initialpi = \diag\{\Initialpi^1, ..., \Initialpi^K\}$ allowing for \onetomany  correspondence within group. 
	Specifically, for each $i\in \Gsc$, let
	\begin{equation*}
		\beta_i=1-\max_{j:j\sim i}\; \cos(\bPi_{i\cdot},\Ibb_{j\cdot}), \ \widetilde{\beta}_i=1-\max_{j:j\sim i}\; \cos(\Initialpi_{i\cdot},\Ibb_{j\cdot}) \text{, and }
		\jtilde_i = \argmax_{j:j\sim i}\; \cos(\Initialpi_{i\cdot},\Ibb_{j\cdot}).
		\end{equation*}
	Intuitively, one minus cosine similarity corresponds to distance, and thus $\beta_i$ measures how distinguishable $\bPi_{i\cdot}$ is from $\Ibb_{j\cdot}$ which encodes a \onetoone mapping.
	We can see that $\beta_i=0$ if $\bPi_{i\cdot}=\Ibb_{j\cdot}$ for some $j\sim i$, and $\beta_i \ne 0$ if $\bPi_{i\cdot}$ represents a  \onetomany  mapping. Thus, the
	support $\Csc =  \{i\in\Gsc: \beta_i\ne 0\}$ 
	indexes the rows where $\bPi_{i\cdot}$ corresponds to \onetomany  mapping. 
	To recover $\Csc$ and construct a sparse estimate of $\bPi$, denoted as $\bPihat\suptwo$, we threshold $\betatilde_i$ with a properly chosen $\lambda_n$ and obtain the $i^{th}$ row of $\bPihat\suptwo$ as
	\begin{equation}
		\bPihat\suptwo_{i\cdot}=   \Ibb_{\jtilde_i\cdot} \mathbbm{1}(\widetilde{\beta}_i\leq\lambda_n) + 
		\frac{\Initialpi_{i\cdot}}{\|(\Initialpi_{i\cdot}\Xbb)\trans\|_2} \mathbbm{1}(\widetilde{\beta}_i > \lambda_n) 
		\end{equation}
	where we suppressed $\lambda_n$ in $\bPihat\suptwo$ for ease of notation. 
	Thus, we set $\bPihat\suptwo_{i\cdot}$ to $\Ibb_{\jtilde_i\cdot}$ when $\widetilde{\beta}_i$ is small; but estimate $\bPi_{i\cdot}$ as $\Initialpi_{i\cdot}/\|(\Initialpi_{i\cdot}\Xbb)\trans\|_2$  when $\widetilde{\beta}_i$ is large. The $\ell_2$-normalized 
	estimator $\Initialpi_{i\cdot}/\|(\Initialpi_{i\cdot}\Xbb)\trans\|_2$ preserves unit length for the translated vector $(\Initialpi_{i\cdot}\Xbb)\trans$ and in fact is the solution to minimizing the constrained OLS problem under the spherical constraint.
	
	With a properly chosen $\lambda_n$, $\bPihat\suptwo$ consistently recovers $\bPi$ as detailed in Section~\ref{sec:thm_pi}. Intuitively,  to correctly classify $\bPi_{i\cdot}$ as a  \onetoone  or \onetomany  mapping, $\lambda_n$ should be
	chosen to be both below the smallest  non-zero signal of ${\beta}_i$ and above the estimation error of the zero-signals. 
	In practice, $\lambda_n$ is selected among a series of values in $(0,1-\frac{1}{\sqrt{2}})$ by cross-validation, where the upper bound was chosen because there is at most one $j$ that gives $\cos(\Initialpi_{i\cdot},\Ibb_{j\cdot})> \frac{1}{\sqrt{2}}$. Specifically, we use cross-validation optimizing the mean squared error for prediction of $\Ybb$, defined as $\sum_{cv}\|\Ybb_{cv}-\bPihat\suptwo\Xbb_{cv}\widehat{\Wbb}\|_F^2$, where $\Ybb_{cv}$ and $\Xbb_{cv}$ denote the combination of selected columns of $\Ybb$ and $\Xbb$, respectively, which serve as validation data.
	
	In step III, based on the updated mapping estimate $\bPihat\suptwo$, we obtain a refined estimator for $\Wbb$ using the subsample that we estimate to be correctly matched as
	\begin{equation*}
		\widehat{\Wbb}\suptwo=\Usc\left(\Xbb_{\ii}\trans \Ybb_{\ii}\right), \quad \mbox{where $\mm(\bPihat\suptwo)=\{i \in \Gsc: \bPihat\suptwo_{i\cdot}=\Ibb_{i\cdot} \}$. }
		\end{equation*}
		We detail the implementation of the above three-step iSphereMAP algorithm in Section~\ref{supp:ispheremap} of the Supplementary Material.
		Although the proposed algorithm can be iterated in practice, we show in the next section that $\Wbb$ and $\bPi$ can both be consistently estimated in three steps. 
	
	\section{Theoretical Properties of iSphereMAP Estimators\label{sec:theory}}
	\subsection{Properties of the initial translation matrix estimator $\Wbbhat\supone$}
	We first investigate whether $\Wbbhat\supone$ from the initial spherical regression (\ref{modelW}) can consistently estimate $\Wbb$ despite the presence of mismatch in the data. Intuitively, if only a small fraction of the data is mismatched, the distortion in $\Wbbhat\supone$ due to mismatch may be negligible. 
	The following theorem presents the error bound of $\Wbbhat\supone$, which is proved in Section~\ref{app-theorem-one} of the Supplementary Material.
	
	\begin{theorem}\label{thm:w-hat-consistent}
		For any $t>0$, 
		if $\gamma_{\kappa,p}\sigma_p(\Xbb)^2>  t \sqrt{n(1-\gamma_{\kappa,p}^2)}+2\gamma_{\kappa,p} \nmis $,
		then with probability at least $1-1/t^2$, 
		\begin{equation*}
			\|\Wbbhat\supone-\Wbb\|_F
			\leq \frac{t \sqrt{n(1-\gamma_{\kappa,p}^2)}+2\gamma_{\kappa,p} \nmis}{\gamma_{\kappa,p}\sigma_p(\Xbb)^2 - t \sqrt{n(1-\gamma_{\kappa,p}^2)}- 2\gamma_{\kappa,p} \nmis}.
			\end{equation*}
		
	\end{theorem}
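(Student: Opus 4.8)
The plan is to write $\Wbbhat\supone = \Usc(M)$ with $M := \Xbb\trans\Ybb$, compare $M$ to a deterministic ``clean'' matrix whose polar factor is exactly $\Wbb$, and then control the discrepancy. Using $\Ebb[\bY_i] = \gamma_{\kappa,p}\bmu_{\bPi,i} = \gamma_{\kappa,p}\Wbb\trans(\bPi_{i\cdot}\Xbb)\trans$, I would first compute $\Ebb[M] = \gamma_{\kappa,p}\Xbb\trans\bPi\Xbb\Wbb$ and single out
\begin{equation*}
	M_0 := \gamma_{\kappa,p}\Xbb\trans\Xbb\Wbb = (\gamma_{\kappa,p}\Xbb\trans\Xbb)\,\Wbb .
\end{equation*}
Since $\gamma_{\kappa,p}\Xbb\trans\Xbb$ is symmetric positive definite and $\Wbb$ is orthogonal, this is a polar decomposition, so $\Usc(M_0) = \Wbb$ exactly and $\sigma_p(M_0) = \gamma_{\kappa,p}\sigma_p(\Xbb)^2$. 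It therefore suffices to bound how far the orthogonal polar factor moves when $M_0$ is perturbed to $M$, i.e.\ to control $\Delta := M - M_0$ in Frobenius norm and then apply a polar-factor perturbation bound.

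Next I would split $\Delta = (M - \Ebb[M]) + (\Ebb[M] - M_0)$ into a stochastic and a bias term. For the stochastic term, $M - \Ebb[M] = \sum_{i=1}^n \bX_i(\bY_i - \Ebb[\bY_i])\trans$; by conditional independence of the $\bY_i$ given $\Xbb$ the cross terms vanish in expectation, so $\Ebb\|M-\Ebb[M]\|_F^2 = \sum_{i=1}^n \|\bX_i\|_2^2\,\Ebb\|\bY_i-\Ebb[\bY_i]\|_2^2 = n(1-\gamma_{\kappa,p}^2)$, using $\|\bX_i\|_2=1$ and the vMF variance identity $\Ebb\|\bY_i-\Ebb[\bY_i]\|_2^2 = 1-\gamma_{\kappa,p}^2$. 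Chebyshev's inequality then gives $\|M-\Ebb[M]\|_F \le t\sqrt{n(1-\gamma_{\kappa,p}^2)}$ with probability at least $1-1/t^2$, which is the source of both the first term and the probability level in the statement. For the bias term, $\Ebb[M]-M_0 = \gamma_{\kappa,p}\Xbb\trans(\bPi-\Ibb_n)\Xbb\Wbb = \gamma_{\kappa,p}\sum_{i}\bX_i[(\bPi_{i\cdot}-\Ibb_{i\cdot})\Xbb]$, where $\bPi-\Ibb_n$ has nonzero rows only for the $\nmis$ indices in $\Dsc(\Ibb,\bPi)$. Each such row contributes a rank-one matrix of Frobenius norm $\|\bX_i\|_2\,\|(\bPi_{i\cdot}\Xbb)\trans-\bX_i\|_2 \le 2$, since $(\bPi_{i\cdot}\Xbb)\trans$ and $\bX_i$ are both unit vectors. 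The triangle inequality and orthogonality of $\Wbb$ then give $\|\Ebb[M]-M_0\|_F \le 2\gamma_{\kappa,p}\nmis$, so on the good event $\|\Delta\|_F \le t\sqrt{n(1-\gamma_{\kappa,p}^2)} + 2\gamma_{\kappa,p}\nmis =: \eta$.

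Finally I would convert the bound on $\|\Delta\|_F$ into a bound on $\|\Usc(M)-\Usc(M_0)\|_F = \|\Wbbhat\supone-\Wbb\|_F$ through a perturbation bound for the orthogonal polar factor. Combining the sharp inequality $\|\Usc(M_0+\Delta)-\Usc(M_0)\|_F \le 2\|\Delta\|_F/\{\sigma_p(M)+\sigma_p(M_0)\}$ with Weyl's inequality $\sigma_p(M)\ge \sigma_p(M_0)-\sigma_1(\Delta)\ge \sigma_p(M_0)-\|\Delta\|_F$ and the elementary fact that $2\eta/(2\sigma-\eta)\le \eta/(\sigma-\eta)$ for $0\le\eta<\sigma$, one obtains
\begin{equation*}
	\|\Wbbhat\supone-\Wbb\|_F \le \frac{\eta}{\sigma_p(M_0)-\eta} = \frac{t\sqrt{n(1-\gamma_{\kappa,p}^2)}+2\gamma_{\kappa,p}\nmis}{\gamma_{\kappa,p}\sigma_p(\Xbb)^2 - t\sqrt{n(1-\gamma_{\kappa,p}^2)}-2\gamma_{\kappa,p}\nmis},
\end{equation*}
where the hypothesis $\gamma_{\kappa,p}\sigma_p(\Xbb)^2 > \eta$ guarantees both that $M$ is nonsingular, so $\Usc(M)$ is well defined, and that the denominator is positive. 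The main obstacle is precisely this last step: the naive route through the Procrustes optimality condition $\langle\Wbbhat\supone,M\rangle\ge\langle\Wbb,M\rangle$ only yields the factor-two-weaker bound $\|\Wbbhat\supone-\Wbb\|_F\le 2\eta/\sigma_p(M_0)$, so recovering the stated constant requires the sharp polar-perturbation inequality (equivalently, a careful accounting showing that only the skew part of $\Delta$ drives the rotation) rather than the crude first-order argument; the remaining moment and counting computations are routine.
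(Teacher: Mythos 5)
Your proposal is correct and follows essentially the same route as the paper's proof: both decompose $\Xbb\trans\Ybb$ around $M_0=\gamma_{\kappa,p}\Xbb\trans\Xbb\Wbb$ (whose polar factor is exactly $\Wbb$), bound the mismatch contribution by $2\gamma_{\kappa,p}\nmis$ and the stochastic contribution by $t\sqrt{n(1-\gamma_{\kappa,p}^2)}$ via the second-moment identity and Chebyshev, and finish with a sharp perturbation bound for the orthogonal polar factor. The only cosmetic difference is that the paper invokes a modification of Mathias's Theorem~2.4, namely $\|\Usc(A+\Delta)-\Usc(A)\|\leq 2[\sigma_p(A)+\sigma_{p-1}(A)-2\sigma_1(\Delta)]^{-1}\|\Delta\|$, in place of your $2\|\Delta\|_F/\{\sigma_p(M)+\sigma_p(M_0)\}$ bound plus Weyl; both reduce to the same $(\sigma_p(M_0)-\|\Delta\|_F)^{-1}\|\Delta\|_F$ estimate.
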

	\begin{remark}\label{remark:sigma_p}
		The quantity $\sigma_p(\Xbb)$ describes the colinearity of columns of $\Xbb$, with a larger value suggesting less linearly dependent rows. If $p>n$, $\sigma_p(\Xbb)=0$.
		When $n\geq p$ and rows of $\Xbb$ are stochastically generated with a uniform distribution over the surface of the hypersphere $\hypersphere$, $\sigma_p(\Xbb)$ is roughly of the order $O(\sqrt{n/p})$ as $n$ and $p$ grow. This rate decreases as $p$ increases, mainly because of the spherical assumption that rows of $\Xbb$ are of unit length.
	\end{remark}
	\begin{remark}\label{remark:kappa}
		The error bound in Theorem~\ref{thm:w-hat-consistent} also depends on the scaling factor $\gamma_{\kappa,p} \in (0, 1)$ introduced in Section~\ref{vMFdefine}. In fact, the term 
		\begin{equation*}\eta_{\kappa,p} \equiv 1-\gamma_{\kappa,p}^2\end{equation*}
		describes the inherent noise in the data, with $ E[\|\Xbb^\top(\Ybb-E[\Ybb])\|_F^2]={n(1-\gamma_{\kappa,p}^2)}$ and  $E[(\Ybb-E[\Ybb])(\Ybb-E[\Ybb])^\top]=(1-\gamma_{\kappa,p}^2)\Ibb_n$.
		The noise level $\eta_{\kappa,p}$, determined by the order of $p$ and $\kappa$, drives the precision of the iSphereMAP estimators.
		In particular, if $p$ and $\kappa$ are fixed, then $\eta_{\kappa,p}$ is a positive constant with $\eta_{\kappa,p}\in(0,1)$. The larger $\kappa$ is, the more concentrated the data is around $\bmu$, the closer 
		$\eta_{\kappa,p}$ is to $0$. If $p/\kappa= o(1)$ and $p\geq 4$, then $\eta_{\kappa,p}\to 0$ as $\kappa \to \infty$ by Lemma~\ref{lemma:eta-bound}. One can interpret the two scenarios of $p$ and $\kappa$ as noisy and approximately noiseless in analogy to the Gaussian setting.
	\end{remark}
	The following corollary simplifies the error bound of $\Wbbhat\supone$ in the scenarios when $\eta_{\kappa,p}$ is a fixed constant or goes to zero as discussed in Remark~\ref{remark:kappa}, which is proved in Section~\ref{app-coro-one} of the Supplementary Material. The conditions required to achieve consistency is weaker than that in \cite{chang1986spherical}.
	
	\begin{corollary}\label{coro:w-hat-large-kappa}
		Suppose $\gamma_{\kappa,p}>\boundgamma$ for some constant $\boundgamma\in(0,1)$ that does not depend on $\kappa$ and $p$, $n\to\infty$, and $\nmis=o(\sigma_p(\Xbb)^2)$. Then we have
		\begin{equation}
			\begin{split}
				\|\Wbbhat\supone-\Wbb\|_F & = \left\{ \begin{array}{ll}
					O_P\left(\frac{\sqrt{n}+\nmis}{\sigma_p(\Xbb)^2}\right) & \mbox{if $p$ and $\kappa$ are fixed, $\sqrt{n}=o(\sigma_p(\Xbb)^2)$}\\
					O_P\left(\frac{\sqrt{n\eta_{\kappa,p}}+\nmis}{\sigma_p(\Xbb)^{2}} \right) & \mbox{if $\sqrt{n\eta_{\kappa,p}}=o( \sigma_p(\Xbb)^2)$.}
				\end{array} \right.\label{eq:wrate_pkinf}
			\end{split}\end{equation}
		In particular, $\|\Wbbhat\supone-\Wbb\|_F$ converges to $0$ in probability in both cases.
	\end{corollary}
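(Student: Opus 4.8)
The plan is to substitute the high-probability bound of Theorem~\ref{thm:w-hat-consistent} and simplify it under the stated rate conditions, then convert the resulting fixed-confidence bound into an $O_P$ statement by taking the free parameter $t$ to be a large constant. Throughout I write $\eta_{\kappa,p}=1-\gamma_{\kappa,p}^2$, abbreviate the numerator of the Theorem~\ref{thm:w-hat-consistent} bound as $N_t=t\sqrt{n\eta_{\kappa,p}}+2\gamma_{\kappa,p}\nmis$, and write $D=\gamma_{\kappa,p}\sigma_p(\Xbb)^2$ for the leading denominator term, so that the theorem gives $\|\Wbbhat\supone-\Wbb\|_F\le N_t/(D-N_t)$ on an event of probability at least $1-1/t^2$, provided $D>N_t$.

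First I would fix an arbitrary constant $t>0$ and check that the hypothesis $D>N_t$ holds for all large $n$. Since $\gamma_{\kappa,p}>\boundgamma$ we have $D\ge \boundgamma\,\sigma_p(\Xbb)^2$, while using $\gamma_{\kappa,p}<1$ together with the rate assumptions $\sqrt{n\eta_{\kappa,p}}=o(\sigma_p(\Xbb)^2)$ and $\nmis=o(\sigma_p(\Xbb)^2)$ gives $N_t\le t\sqrt{n\eta_{\kappa,p}}+2\nmis=o(\sigma_p(\Xbb)^2)$ for that fixed $t$. Hence eventually $N_t\le \tfrac12\boundgamma\,\sigma_p(\Xbb)^2\le\tfrac12 D$, which both verifies the hypothesis and supplies the denominator lower bound $D-N_t\ge \tfrac12 D\ge \tfrac12\boundgamma\,\sigma_p(\Xbb)^2$. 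Substituting into the theorem yields, on the same event and for all large $n$,
\begin{equation*}
\|\Wbbhat\supone-\Wbb\|_F\le \frac{N_t}{D-N_t}\le \frac{2N_t}{\boundgamma\,\sigma_p(\Xbb)^2}\le \frac{2\max(t,2)}{\boundgamma}\cdot\frac{\sqrt{n\eta_{\kappa,p}}+\nmis}{\sigma_p(\Xbb)^2}=:M_t\,a_n,
\end{equation*}
where $a_n=(\sqrt{n\eta_{\kappa,p}}+\nmis)/\sigma_p(\Xbb)^2$ and $M_t$ depends only on $t$ and $\boundgamma$.

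To pass to $O_P$, the display shows that for each fixed $t$ there is an index $N(t)$ with $P(\|\Wbbhat\supone-\Wbb\|_F>M_t a_n)\le 1/t^2$ for all $n\ge N(t)$. Given $\epsilon>0$, choosing $t$ with $1/t^2<\epsilon$ produces a constant $M=M_t$ and a threshold beyond which $P(\|\Wbbhat\supone-\Wbb\|_F/a_n>M)<\epsilon$, which is exactly $\|\Wbbhat\supone-\Wbb\|_F=O_P(a_n)$; this is the second case of~\eqref{eq:wrate_pkinf}. The first case is immediate: when $p$ and $\kappa$ are fixed, $\eta_{\kappa,p}$ is a positive constant, so $\sqrt{n\eta_{\kappa,p}}=\Theta(\sqrt{n})$, $a_n$ is a constant multiple of $(\sqrt{n}+\nmis)/\sigma_p(\Xbb)^2$, and the hypothesis $\sqrt{n}=o(\sigma_p(\Xbb)^2)$ is precisely what forces $\sqrt{n\eta_{\kappa,p}}=o(\sigma_p(\Xbb)^2)$. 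Finally, $a_n\to 0$ under the rate assumptions (both $\sqrt{n\eta_{\kappa,p}}$ and $\nmis$ are $o(\sigma_p(\Xbb)^2)$), and since $\|\Wbbhat\supone-\Wbb\|_F=O_P(a_n)$ with deterministic $a_n\to 0$, convergence to $0$ in probability follows.

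The one delicate point I would be careful about is the order of quantifiers in the $O_P$ conversion: $t$ must be held \emph{fixed} (not growing with $n$) so that $N_t=o(\sigma_p(\Xbb)^2)$ and the hypothesis $D>N_t$ of Theorem~\ref{thm:w-hat-consistent} is eventually satisfied, with the confidence $1-1/t^2$ pushed toward $1$ only afterward by enlarging $t$. Everything else is routine substitution and bounding via $\gamma_{\kappa,p}\in(\boundgamma,1)$.
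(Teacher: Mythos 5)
Your proposal is correct and follows essentially the same route as the paper's proof: substitute the bound from Theorem~\ref{thm:w-hat-consistent}, use $\boundgamma<\gamma_{\kappa,p}<1$ to absorb the $\gamma_{\kappa,p}$ factors, invoke the rate conditions to show the numerator is $o(\sigma_p(\Xbb)^2)$ so the denominator stays of order $\sigma_p(\Xbb)^2$, and then pass to $O_P$ by enlarging the fixed $t$; the fixed-$(p,\kappa)$ case reduces to the second case since $\eta_{\kappa,p}$ is then a positive constant. Your write-up is actually more explicit than the paper's about the quantifier order in the $O_P$ conversion (fixing $t$ before letting $n\to\infty$), but this is a refinement of the same argument, not a different one.
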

	\begin{remark}
		When $p/\kappa=o(1)$, $\kappa\to\infty$, $p\geq 4$, we have that $\eta_{\kappa,p}=O(p/\kappa)$. In this case $\sqrt{n\eta_{\kappa,p}}=o( \sigma_p(\Xbb)^2)$ if $\sqrt{np/\kappa}=o( \sigma_p(\Xbb)^2)$. Thus we can consistently recover $\Wbb$ as long as the rate at which $\sigma_p(\Xbb)$ grows is faster than both $\nmis$ and $\sqrt{np/\kappa}$. In addition, note that $\sigma_p(\Xbb)\leq\|\Xbb\|_F=\sqrt{n}$. Therefore $\nmis=o(\sigma_p(\Xbb)^2)$ indicates $\nmis=o(n)$.
	\end{remark}
	\begin{remark}\label{remark2} 
		Assuming $\sigma_p(\Xbb)=O(\sqrt{n/p})$ as described in Remark~\ref{remark:sigma_p}, we can see from Corollary~\ref{coro:w-hat-large-kappa} that as $n\to\infty$, 
		$\|\Wbbhat\supone-\Wbb\|_F=o_P(1)$ under either of the following asymptotic regimes: (1) $p$ and $\kappa$ are fixed and $\nmis=o(n)$; or (2) $\kappa\to\infty$, $p\geq 4$, $p=o(\kappa)$, $\nmis=o(n/p)$
		and $p^3=o(n\kappa)$. 
	\end{remark}

	\subsection{Properties of the Mapping Matrix estimator} \label{sec:thm_pi}
	Since the mapping matrix estimator $\bPihat\suptwo$ is a thresholded version of the initial OLS estimator  $\Initialpi =\diag\{\bPitilde^1, ..., \bPitilde^{K}\}$, we first establish the convergence rate for $\bPitilde^k$ in the following theorem.
	
	\begin{theorem}\label{thm:thm-OLS-pi}
		If $n\to\infty$, $p\geq 4$, $n>p>\max_{1\leq k\leq K}n_k$, $\gamma_{\kappa,p}>\boundgamma $ for some constant $\boundgamma\in (0,1)$ that does not depend on $\kappa$ and $p$, $\sqrt{n\eta_{\kappa,p}}=o( \sigma_p(\Xbb)^2)$, and $\nmis=o(\sigma_p(\Xbb)^2)$, then
		\begin{equation}\label{onegrpPI}
			\|\Initialpi^k-\bPi^{k}\|_F
			= O_p\left(
			\sigma_{n_k}(\Xbb_{[G_k,:]})^{-1}\sqrt{n_k}\left\{
			\sqrt{\frac{p}{\kappa}}+\frac{\sqrt{n\eta_{\kappa,p}}+\nmis}{\sigma_p(\Xbb)^{2}})
			\right\}
			\right),
			\end{equation}
		for $k=1,\dots,K$. In addition, assume that $K\to\infty$ and $4\log K\leq p\min_{1\leq k\leq K}n_k$. Then,
		\begin{equation}\label{allgrpPI}
			\max_{1\leq k\leq K}\|\Initialpi^k-\bPi^{k}\|_F
			= O_p\left(
			[\min_{1\leq k\leq K}\sigma_{n_k}(\Xbb_{[G_k,:]})]^{-1}\max_{1\leq k\leq K}\sqrt{n_k}\left\{
			\sqrt{\frac{p}{\kappa}}+
			\frac{\sqrt{n\eta_{\kappa,p}}+\nmis}{\sigma_p(\Xbb)^{2}}
			\right\}
			\right).
			\end{equation}
	\end{theorem}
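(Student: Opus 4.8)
The plan is to treat $\Initialpi^k$ as an ordinary least squares estimator and compare it to the population target implied by model (\ref{withpi}). Writing $\Xbb_k=\Xbb_{[G_k,:]}$ and $\Ybb_k=\Ybb_{[G_k,:]}$, the vMF mean structure $E[\bY_i]=\gamma_{\kappa,p}\bmu_{\bPi,i}$ and the block-diagonal form of $\bPi$ give $E[\Ybb_k]=\gamma_{\kappa,p}\bPi^k\Xbb_k\Wbb$, so I write $\Ybb_k=\gamma_{\kappa,p}\bPi^k\Xbb_k\Wbb+\Delta_k$ with $E[\Delta_k]=0$. Substituting into $\Initialpi^k=\Ybb_k(\Wbbhat\supone)\trans\Xbb_k\trans(\Xbb_k\Xbb_k\trans)\inv$, adding and subtracting $\gamma_{\kappa,p}\bPi^k$, and using the orthogonality identity $\Wbb(\Wbbhat\supone)\trans-\Ibb=\Wbb(\Wbbhat\supone-\Wbb)\trans$ to collapse $\Xbb_k\Wbb(\Wbbhat\supone)\trans\Xbb_k\trans(\Xbb_k\Xbb_k\trans)\inv-\Ibb$, I obtain
\begin{equation*}
\Initialpi^k-\bPi^k=\underbrace{\gamma_{\kappa,p}\bPi^k\Xbb_k\Wbb(\Wbbhat\supone-\Wbb)\trans\Xbb_k\trans(\Xbb_k\Xbb_k\trans)\inv}_{A_k}+\underbrace{(\gamma_{\kappa,p}-1)\bPi^k}_{B_k}+\underbrace{\Delta_k(\Wbbhat\supone)\trans\Xbb_k\trans(\Xbb_k\Xbb_k\trans)\inv}_{C_k},
\end{equation*}
where $A_k$ is the bias induced by the error in $\Wbbhat\supone$, $B_k$ is the shrinkage bias due to $\gamma_{\kappa,p}<1$, and $C_k$ is the stochastic fluctuation.

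The two deterministic terms are controlled by operator-norm arguments resting on two structural facts: the unit-length constraint $\|(\bPi_{i\cdot}\Xbb)\trans\|_2=1$ forces every row of $\bPi^k\Xbb_k$ to have unit norm, so $\|\bPi^k\Xbb_k\|_F=\sqrt{n_k}$ and hence $\|\bPi^k\Xbb_k\|_{op}\le\sqrt{n_k}$; and the pseudoinverse $\Xbb_k\trans(\Xbb_k\Xbb_k\trans)\inv$, which has full row rank since $n_k<p$, has operator norm $\sigma_{n_k}(\Xbb_k)\inv$. With $\|\Wbb\|_{op}=1$ and the inequality $\|MNL\|_F\le\|M\|_{op}\|N\|_F\|L\|_{op}$, these give $\|A_k\|_F\le\gamma_{\kappa,p}\sqrt{n_k}\,\sigma_{n_k}(\Xbb_k)\inv\|\Wbbhat\supone-\Wbb\|_F$; substituting the rate for $\|\Wbbhat\supone-\Wbb\|_F$ from Corollary~\ref{coro:w-hat-large-kappa} reproduces the $\sigma_{n_k}(\Xbb_k)\inv\sqrt{n_k}(\sqrt{n\eta_{\kappa,p}}+\nmis)/\sigma_p(\Xbb)^2$ contribution. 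For $B_k$, the row bound $\|\bPi_{i\cdot}\|_2\le\sigma_{n_k}(\Xbb_k)\inv$ of Lemma~\ref{remark:pi} gives $\|\bPi^k\|_F\le\sqrt{n_k}\,\sigma_{n_k}(\Xbb_k)\inv$, while $1-\gamma_{\kappa,p}\le\sqrt{1-\gamma_{\kappa,p}^2}=\sqrt{\eta_{\kappa,p}}=O(\sqrt{p/\kappa})$ by Lemma~\ref{lemma:eta-bound}; hence $\|B_k\|_F$ is absorbed into the $\sqrt{p/\kappa}$ term.

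For the stochastic term, $\|\Wbbhat\supone\|_{op}=1$ and the pseudoinverse bound give $\|C_k\|_F\le\sigma_{n_k}(\Xbb_k)\inv\|\Delta_k\|_F$. Each row of $\Delta_k$ obeys $E\|\bY_i-E\bY_i\|_2^2=1-\gamma_{\kappa,p}^2=\eta_{\kappa,p}$, so $E\|\Delta_k\|_F^2=n_k\eta_{\kappa,p}$ and Markov's inequality yields $\|\Delta_k\|_F=O_p(\sqrt{n_k\eta_{\kappa,p}})=O_p(\sqrt{n_k}\sqrt{p/\kappa})$. Adding the three bounds completes the pointwise statement (\ref{onegrpPI}).

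The main obstacle is upgrading the $C_k$ bound to hold uniformly over the $K$ groups; $A_k$ and $B_k$ transfer to (\ref{allgrpPI}) at once because they are governed by the single random quantity $\|\Wbbhat\supone-\Wbb\|_F$ together with $\min_k\sigma_{n_k}(\Xbb_k)$ and $\max_k\sqrt{n_k}$. To bound $\max_k\|\Delta_k\|_F$ I would use that $\|\Delta_k\|_F^2=\sum_{i\in G_k}\|\bY_i-E\bY_i\|_2^2$ is a sum of $n_k$ independent bounded summands, each an affine function $\|\bY_i-\gamma_{\kappa,p}\bmu_{\bPi,i}\|_2^2=(1+\gamma_{\kappa,p}^2)-2\gamma_{\kappa,p}\cos(\bmu_{\bPi,i},\bY_i)$ of a vMF cosine, and that on the sphere such a sum behaves like a chi-square statistic on roughly $n_k(p-1)$ degrees of freedom. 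The sharp large-deviation bound of Proposition~\ref{lemma:tail-bound-von-miss} then delivers a tail of the form $P(\|\Delta_k\|_F^2>C n_k\eta_{\kappa,p})\le\exp(-c\,p\,n_k)$, whose exponent scales with the total effective dimension $pn_k$ rather than with the crude sub-exponential rate a worst-case bound on the summands would give. A union bound over the $K$ groups yields $K\exp(-c\,p\min_k n_k)\to0$ precisely under the hypothesis $4\log K\le p\min_{1\le k\le K}n_k$, on which event $\|\Delta_k\|_F\le\sqrt{C n_k\eta_{\kappa,p}}$ for every $k$. Intersecting with the events controlling $A_k$ and $B_k$ and factoring out $(\min_k\sigma_{n_k}(\Xbb_k))\inv$ produces (\ref{allgrpPI}). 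The delicate point is verifying that the vMF tail genuinely concentrates at the chi-square rate $pn_k$, since this is exactly what calibrates the constant in $4\log K\le p\min_k n_k$.
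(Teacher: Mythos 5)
Your proof is correct and follows essentially the same route as the paper: the same perturbation decomposition of the OLS estimator driven by $\|\Wbbhat\supone-\Wbb\|_F$, the same operator-norm bounds $\|\bPi^k\Xbb_{[G_k,:]}\|_F=\sqrt{n_k}$ and $\|\Xbb_{[G_k,:]}\trans(\Xbb_{[G_k,:]}\Xbb_{[G_k,:]}\trans)\inv\|=\sigma_{n_k}(\Xbb_{[G_k,:]})\inv$, and the same appeal to Proposition~\ref{lemma:tail-bound-von-miss} plus a union bound for uniformity over the $K$ groups. The only cosmetic difference is that you center the residual at $E[\bY_i]=\gamma_{\kappa,p}\bmu_{\bPi,i}$, producing an explicit shrinkage term $(\gamma_{\kappa,p}-1)\bPi^k$, whereas the paper centers at $\bmu_{\bPi,i}$ and absorbs that bias into the residual $\Ubb_{[G_k,:]}$ controlled by the same tail bound.
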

	\begin{remark}\label{remark1}
		The term $\min_{1\leq k\leq K}\sigma_{n_k}(\Xbb_{[G_k,:]})$ indicates the within group variation of the design matrix rows. In particular, if we assume that the pairwise cosine similarity within each group is no greater than $a$ where $a\leq \frac{1}{\max n_k-1} $, then $\min_{1\leq k\leq K}\sigma_{n_k}(\Xbb_{[G_k,:]})\geq 1-(\max n_k-1)a$.
	\end{remark}
	\begin{remark}\label{remark12}
		If the number of groups $K$ is fixed, then derivation from (\ref{onegrpPI}) to (\ref{allgrpPI}) is trivial. Our result concerns the nontrivial scenario when $K\to\infty$, in which case proof of Equation~(\ref{allgrpPI}) requires specific analysis of the tail bound behavior of the vMF distribution detailed in Proposition~\ref{lemma:tail-bound-von-miss} of the Supplementary Material.
	\end{remark}
	\begin{remark}\label{regimelambda}
		We discuss the asymptotic regime required by Theorem~\ref{thm:thm-OLS-pi} for the case where all groups have equal group size with $n_k\equiv n/K$, $\kappa\to\infty$, $p\geq 4$, $p=o(\kappa)$, and $\sigma_p(\Xbb)$ is of the order $\sqrt{n/p}$ as described by Remark~\ref{remark:sigma_p}. First, $p$ needs to be small enough compared to $n$, $\kappa$ and $n\kappa$ ($p=o(\kappa)$ and $p=o(n^{1/3}\kappa^{1/3})$ by Remark~\ref{remark2}, and $p<n$) so that the error rate of $\Wbbhat\supone$ is controlled by Corollary~\ref{coro:w-hat-large-kappa}. Second, $p$ needs to be larger than $n_k\equiv n/K$ so that the OLS has a unique solution. Third, the mismatch needs to be sparse enough such that $\nmis=o(n/p)$ by Remark~\ref{remark2}.
		In summary, suppose $p=n^{r_1}$, $\kappa=n^{r_2}$, and $K=n^{r_3}$, then the conditions of Theorem~\ref{thm:thm-OLS-pi} are satisfied when  $0<r_3<1$, $1-r_3<r_1<\min(1,(1+r_2)/3,r_2)$, and $\nmis=o(n^{1-r_1})$.
	\end{remark}

	Interpretation of Theorem~\ref{thm:thm-OLS-pi} is relatively straightforward. The origin of the error in the initial OLS estimate of $\bPi^k$ is four-fold. 
	First, the inherent error of the vMF distribution contributes the term $\sqrt{p/\kappa}$. This is a unique tail bound property of the vMF distribution which we derive in Proposition~\ref{lemma:tail-bound-von-miss} of the Supplementary Material. In particular, when $p$ is fixed, or $p=o(\kappa)$, then as the concentration parameter $\kappa$ goes to infinity, the data approach the noiseless situation and this term goes to zero.
	Second,  by Corollary~\ref{coro:w-hat-large-kappa}, the estimation error of $\Wbb$ in the previous step contributes the term $\sigma_p(\Xbb)^{-2}(\sqrt{n\eta_{\kappa,p}}+\nmis)$. Third, the error bound of $\Initialpi^k$ is proportionally dependent on the size of $\bPi^k$. Lastly, if two rows within the same group have cosine similarity approaching one, then they are indistinguishable. Accordingly, the error bound is also scaled by the separability of rows in the design matrix $\Xbb_{[G_k,:]}$ as discussed in \textit{Remark}~\ref{remark1}. The proof of Theorem~\ref{thm:thm-OLS-pi} can be found in Section~\ref{app-theorem-two} of  the Supplementary Material.

	With the additional thresholding step, $\bPihat\suptwo$ attains model selection consistency as summarized in the following theorem, which is proved in Section~\ref{app-theorem-three} of the Supplementary Material.
	\begin{theorem}\label{thm:thresholding}
		Suppose that the assumptions in Theorem~\ref{thm:thm-OLS-pi} hold. Let $\Bsc_{\min} = \min_{i\in \mathcal{C}} \beta_i$ and 
		\begin{equation*}
			c_n=[\min_{1\leq k\leq K}\sigma_{n_k}(\Xbb_{[G_k,:]})]^{-1}\max_{1\leq k\leq K}\sqrt{n_k}\left\{
			\sqrt{\frac{p}{\kappa}}+\sigma_p(\Xbb)^{-2}(\sqrt{n\eta_{\kappa,p}}+\nmis)
			\right\}.
			\end{equation*}
		We further assume that $c_n\max_{1\leq k\leq K}\sqrt{n_k}\ll \Bsc_{\min}^2$, and $c_n\max_{i\in\mathcal{C} }\|\bPi_{i\cdot}\|_2\max_{1\leq k\leq K}\sqrt{n_k}\to 0$.
		Then, for $c_n \ll \lambda_n\ll \Bsc_{\min}$,  as $n\to \infty$,  the following holds with probability approaching one
		\begin{equation*}
			\begin{split}
				&\text{for all }i \in\mathcal{C},\;\max_{i\in\mathcal{C}}\|\bPihat\suptwo_{i\cdot}-\bPi_{i\cdot}\|_2 \to 0\\
				&\text{for all }i \notin\mathcal{C},\; \bPihat\suptwo_{i\cdot}=\bPi_{i\cdot}=\Ibb_{j\cdot} .
			\end{split}  
			\end{equation*}
	\end{theorem}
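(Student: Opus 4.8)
The plan is to establish the two conclusions of Theorem~\ref{thm:thresholding} --- convergence of the estimated rows for $i\in\Csc$ and exact recovery for $i\notin\Csc$ --- by carefully controlling how the hard-thresholding rule interacts with the row-wise error bound supplied by Theorem~\ref{thm:thm-OLS-pi}. The central quantity is $\betatilde_i = 1-\max_{j:j\sim i}\cos(\Initialpi_{i\cdot},\Ibb_{j\cdot})$, the estimated analogue of the signal $\beta_i$. First I would derive a uniform bound relating $|\betatilde_i-\beta_i|$ to $\|\Initialpi^k - \bPi^k\|_F$. Since cosine similarity with a fixed unit vector $\Ibb_{j\cdot}$ is Lipschitz in its first argument on the relevant domain, the perturbation $\|\Initialpi_{i\cdot}-\bPi_{i\cdot}\|_2$ controls $|\betatilde_i-\beta_i|$ up to a factor involving the norm $\|\bPi_{i\cdot}\|_2$ (to handle the normalization in the cosine). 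Invoking the maximal bound \eqref{allgrpPI} from Theorem~\ref{thm:thm-OLS-pi}, this gives a uniform estimate $\max_i|\betatilde_i-\beta_i| = O_p(c_n \max_k \sqrt{n_k})$ after accounting for the $\|\bPi_{i\cdot}\|_2$ scaling, which is exactly where the two additional assumptions on $c_n$ enter.

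Next I would handle the two cases separately using the threshold placement $c_n \ll \lambda_n \ll \Bsc_{\min}$. For $i\notin\Csc$, the true signal is $\beta_i=0$ and $\bPi_{i\cdot}=\Ibb_{j\cdot}$ for some $j\sim i$, so $\betatilde_i = |\betatilde_i - \beta_i|$ is bounded by the uniform error, which is $o(\lambda_n)$ with probability approaching one; hence $\betatilde_i \leq \lambda_n$ and the rule sets $\bPihat\suptwo_{i\cdot} = \Ibb_{\jtilde_i\cdot}$. It then remains to argue $\jtilde_i = j$, i.e.\ the argmax selects the correct coordinate --- this follows because the true row is $\Ibb_{j\cdot}$, so $\cos(\Initialpi_{i\cdot},\Ibb_{j\cdot})$ is near one while competing coordinates are bounded away, a gap preserved under the small perturbation. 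For $i\in\Csc$, the signal satisfies $\beta_i\geq\Bsc_{\min}\gg\lambda_n$, so $\betatilde_i \geq \beta_i - |\betatilde_i-\beta_i| \gg \lambda_n$ with high probability, triggering the second branch; the conclusion $\|\bPihat\suptwo_{i\cdot}-\bPi_{i\cdot}\|_2\to 0$ then reduces to showing the normalized estimator $\Initialpi_{i\cdot}/\|(\Initialpi_{i\cdot}\Xbb)\trans\|_2$ is close to $\bPi_{i\cdot}$, which follows from \eqref{allgrpPI} together with the fact that $\|(\bPi_{i\cdot}\Xbb)\trans\|_2 = 1$ by assumption, so the denominator is bounded away from zero.

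The main obstacle I anticipate is the uniform control over \emph{all} $i\in\Gsc$ simultaneously, not merely group-wise. Because $K\to\infty$, a naive union bound over groups need not close; this is precisely why the hypotheses impose $4\log K \leq p\min_k n_k$ and why Theorem~\ref{thm:thm-OLS-pi} already routed the $\max_k$ bound through the sharp vMF tail estimate of Proposition~\ref{lemma:tail-bound-von-miss}. I would therefore lean directly on \eqref{allgrpPI} rather than re-deriving concentration, and the delicate step is verifying that the two assumed rate conditions, $c_n\max_k\sqrt{n_k}\ll\Bsc_{\min}^2$ and $c_n\max_{i\in\Csc}\|\bPi_{i\cdot}\|_2\max_k\sqrt{n_k}\to 0$, are exactly what translate the Frobenius-norm bound on $\Initialpi^k-\bPi^k$ into a simultaneous row-wise bound on $|\betatilde_i-\beta_i|$ that beats $\lambda_n$ and $\Bsc_{\min}$ uniformly. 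The squared factor $\Bsc_{\min}^2$ appears because bounding $|\betatilde_i-\beta_i|$ through the cosine involves the signal magnitude quadratically near the separation boundary, so tracking that quadratic dependence correctly is the technical crux.
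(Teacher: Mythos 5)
Your proposal is correct and follows the same architecture as the paper's proof: restrict to the high-probability event on which the row-wise bound $\|\Initialpi_{i\cdot}-\bPi_{i\cdot}\|_2\lesssim c_n$ (up to a slowly diverging factor) delivered by Theorem~\ref{thm:thm-OLS-pi} holds, split on $i\in\Csc$ versus $i\notin\Csc$ using the threshold placement $c_n\ll\lambda_n\ll\Bsc_{\min}$, verify that the argmax $\jtilde_i$ picks the correct coordinate for $i\notin\Csc$, and control the normalization factor $\|(\Initialpi_{i\cdot}\Xbb)\trans\|_2=1+O(c_n\sqrt{n_k})$ for $i\in\Csc$ so that the assumption $c_n\max_{i\in\Csc}\|\bPi_{i\cdot}\|_2\max_k\sqrt{n_k}\to 0$ closes the estimation bound. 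The one step where you genuinely diverge is the conversion of the row-wise error into a perturbation of $\betatilde_i$: you use a direct Lipschitz estimate for $\bv\mapsto\cos(\bv,\Ibb_{j\cdot})$, whose local Lipschitz constant is of order $1/\|\bPi_{i\cdot}\|_2\le\sqrt{n_k}$, giving the linear bound $|\betatilde_i-\beta_i|=O(c_n\max_k\sqrt{n_k})$, whereas the paper chains Lemma~\ref{lemma:vec-norm-cos} with the triangle-type cosine inequality of Lemma~\ref{lemma:cos-lower}, which pays a square root and yields only $\betatilde_i\ge\Bsc_{\min}-2\sqrt{c_n\max_k\sqrt{n_k}}$ (again up to the diverging factor); it is precisely this square root that forces the hypothesis $c_n\max_k\sqrt{n_k}\ll\Bsc_{\min}^2$. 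Your linear route would need only $c_n\max_k\sqrt{n_k}\ll\Bsc_{\min}$, which is implied by the stated hypothesis since $\Bsc_{\min}$ is bounded, so your argument is if anything slightly sharper; the paper's route trades sharpness for two reusable elementary lemmas. One correction: your closing explanation that the $\Bsc_{\min}^2$ factor reflects a ``quadratic dependence of the cosine on the signal near the separation boundary'' is not accurate --- under your own Lipschitz bound no square appears at all, and in the paper the square is an artifact of composing the two cosine lemmas (a linear bound on one-minus-cosine followed by a square-root-of-that bound on the angle), not of the signal entering quadratically.
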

	
	Theorem~\ref{thm:thresholding} states that, as $n$ increases, our hard-thresholding procedure can distinguish between \onetoone  and \onetomany  mapping, correctly locate the matched row for \onetoone mapping, and consistently estimate the weight vector for \onetomany  mapping. 
	
	\begin{remark}
		The model selection consistency in Theorem~\ref{thm:thresholding} requires $p/\kappa=o(1)$, under which the noise level $\eta_{\kappa,p}=E[\|\bY_i-E[\bY_i]\|_2^2]=o(1)$.  Although not directly comparable, a similar condition was required in \cite{pananjady2016linear} where they assumed the following univariate linear regression $\bY=\bPi \Xbb\bw+\bU$, with $\bPi$ being a permutation matrix and $\bX$ being Gaussian. They studied the maximum likelihood estimate of $\bPi$ with the restriction of $\bPi$ being a permutation matrix. They showed that exact permutation recovery requires that the signal-to-noise ratio goes to infinity at a polynomial order of $n$. We require the noise level $\eta_{\kappa,p}=o(1)$ but do not require a specific rate.
	\end{remark}

	\begin{remark}\label{remark:lambda}
		To provide some intuition for the choice of $\lambda_n$, we note that 
		if $i\in\mathcal{C}$, i.e., the true $\bPi_{i\cdot}$ in fact represents a \onetomany  mapping, then ${\beta}_i\neq 0$. Thus $\lambda_n$ should be chosen to be much smaller than the smallest non-zero signal $\Bsc_{\min}$. On the other hand, if $i\notin \mathcal{C}$, then $\beta_i = 0$ and $\lambda_n$ should be able to tolerate the error in the initial estimate $\Initialpi$ and correctly threshold $\widetilde{\beta}_i$ to zero. The lower bound $c_n$ represents the order of $\max_{1\leq k \leq K}\|\Initialpi^k-\bPi^k\|_F$ by Theorem~\ref{thm:thm-OLS-pi}. By letting $\lambda_n \gg c_n$, we would successfully set the corresponding $\betatilde_i$ to zero. 
		If $\Xbb$ is uniformly distributed on the sphere and $n_k\ll p$,  $\min_{1\leq k\leq K}\sigma_{n_k}(\Xbb_{[G_k,:]})$ is approximately constant rate. Under the asymptotic regime of Remark~\ref{regimelambda} we have
			$c_n = O(\sqrt{n^{r_1-r_2}})$,
			where $r_1-r_2<\min(0,1-r_2,(1-2r_2)/3)$, with $r_1=\log(n)/\log(p)$ and $r_2=\log(n)/\log(\kappa)$. If we further assume that $\Bsc_{\min}$ is constant rate, then $\lambda_n$ needs to satisfy
			$\sqrt{n^{r_1-r_2}}\ll \lambda_n \ll 1$.
	\end{remark}

	\subsection{Properties of the Refined translation matrix estimator $\Wbbhat\suptwo$ \label{refineW}} 
	From Corollary~\ref{coro:w-hat-large-kappa}, the error bound of the initial estimate $\widehat{\Wbb}\supone=\Usc(\Xbb\trans \Ybb)$ consists of two terms of order $\sigma_p(\Xbb)^{-2}\nmis$ and  $\sigma_p(\Xbb)^{-2}\sqrt{n\eta_{\kappa,p}}$ respectively, with the first term accounting for the mismatch error. If $\bPihat\suptwo$ accurately identifies the mismatch patterns, then one would expect $\Wbbhat\suptwo$ to have lower error due to the removal of the mismatched pairs in Step III. The following corollary summarizes the error rate of $\Wbbhat\suptwo$, which is proved in Section~\ref{app-coro-two} of the Supplementary Material. 
	
	\begin{corollary}\label{coro:refinement}
		Under the assumptions of Theorems~\ref{thm:thm-OLS-pi} and \ref{thm:thresholding}, as $n\to\infty$
		we have 
		\begin{equation*}
			\|\Wbbhat\suptwo-\Wbb\|_F
			= O_P\left(\frac{\sqrt{(n-\nmis)\eta_{\kappa,p}}}{\sigma_p(\Xbb_{[\mm(\bPi),:]})^{2}} \right) = O_P\left(\frac{\sqrt{n\eta_{\kappa,p}}}{\sigma_p(\Xbb)^{2}} \right). 
			\end{equation*}
	\end{corollary}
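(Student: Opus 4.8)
The plan is to leverage the model selection consistency of Theorem~\ref{thm:thresholding} to reduce the refined estimator to a clean spherical regression on the correctly matched subsample, and then invoke Theorem~\ref{thm:w-hat-consistent} (equivalently Corollary~\ref{coro:w-hat-large-kappa}) with zero mismatch. The central observation is that, under the assumptions inherited from Theorems~\ref{thm:thm-OLS-pi} and~\ref{thm:thresholding}, the estimated matched set coincides with the true one with high probability. Indeed, if $i\in\mm(\bPi)$ then $\bPi_{i\cdot}=\Ibb_{i\cdot}$ forces $\beta_i=0$, so $i\notin\Csc$; Theorem~\ref{thm:thresholding} then guarantees $\bPihat\suptwo_{i\cdot}=\bPi_{i\cdot}=\Ibb_{i\cdot}$, whereas for $i\in\Csc$ the thresholded estimate is the dense normalized least-squares row, which is not an identity row. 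Consequently $\mm(\bPihat\suptwo)=\mm(\bPi)$ on an event $E$ with $P(E)\to 1$.

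First I would condition on $E$, on which $\Wbbhat\suptwo = \Usc(\Xbb_{[\mm(\bPi),:]}\trans\Ybb_{[\mm(\bPi),:]})$ is a Procrustes solution computed from the fixed, deterministic index set $\mm(\bPi)$. For every $i\in\mm(\bPi)$ we have $\bmu_{\bPi,i}=\Wbb\trans\bX_i$, so the subsample $\{(\bX_i,\bY_i):i\in\mm(\bPi)\}$ obeys the unmismatched model~(\ref{nopi}) exactly. Thus Theorem~\ref{thm:w-hat-consistent} applies verbatim with sample size $n-\nmis=|\mm(\bPi)|$, design matrix $\Xbb_{[\mm(\bPi),:]}$, and mismatch count zero; its numerator then loses the $2\gamma_{\kappa,p}\nmis$ contribution, and the argument of Corollary~\ref{coro:w-hat-large-kappa} yields $\|\Wbbhat\suptwo-\Wbb\|_F = O_P(\sqrt{(n-\nmis)\eta_{\kappa,p}}/\sigma_p(\Xbb_{[\mm(\bPi),:]})^2)$. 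To make this rigorous I would use the decomposition $P(\|\Wbbhat\suptwo-\Wbb\|_F>a_n)\le P(E^c)+P(\{\|\Usc(\Xbb_{[\mm(\bPi),:]}\trans\Ybb_{[\mm(\bPi),:]})-\Wbb\|_F>a_n\})$, bounding the first term by Theorem~\ref{thm:thresholding} and the second by the fixed-subset corollary.

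It remains to pass to the second, cleaner expression. Here I would compare $\sigma_p(\Xbb_{[\mm(\bPi),:]})$ with $\sigma_p(\Xbb)$ through the identity $\Xbb\trans\Xbb = \Xbb_{[\mm(\bPi),:]}\trans\Xbb_{[\mm(\bPi),:]} + \Xbb_{[\Dsc(\Ibb,\bPi),:]}\trans\Xbb_{[\Dsc(\Ibb,\bPi),:]}$ and Weyl's inequality, giving $\sigma_p(\Xbb_{[\mm(\bPi),:]})^2 \ge \sigma_p(\Xbb)^2 - \|\Xbb_{[\Dsc(\Ibb,\bPi),:]}\|_F^2 \ge \sigma_p(\Xbb)^2-\nmis$, where the last bound uses that the $\nmis$ deleted rows are unit vectors. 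Since the assumptions impose $\nmis=o(\sigma_p(\Xbb)^2)$, the two denominators are of the same order, and together with $n-\nmis\le n$ this collapses the first rate into $O_P(\sqrt{n\eta_{\kappa,p}}/\sigma_p(\Xbb)^2)$.

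The \emph{main obstacle} is the statistical dependence between the data-driven selection $\mm(\bPihat\suptwo)$ and the subsample subsequently used to re-estimate $\Wbb$: because $\bPihat\suptwo$ is built from $\Wbbhat\supone$ and hence from the same $\Ybb$, one cannot naively treat the retained rows as a clean, independently drawn sample. The high-probability reduction $\mm(\bPihat\suptwo)=\mm(\bPi)$ is precisely what neutralizes this dependence, letting me replace the random index set by the fixed set $\mm(\bPi)$ before invoking the concentration arguments underlying Theorem~\ref{thm:w-hat-consistent}. A secondary technical point is verifying that the positivity condition $\gamma_{\kappa,p}\sigma_p(\Xbb_{[\mm(\bPi),:]})^2 > t\sqrt{(n-\nmis)\eta_{\kappa,p}}$ needed to apply Theorem~\ref{thm:w-hat-consistent} on the subsample continues to hold, which again follows from the singular-value comparison above together with the hypothesis $\sqrt{n\eta_{\kappa,p}}=o(\sigma_p(\Xbb)^2)$.
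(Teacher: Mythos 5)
Your proposal is correct and follows essentially the same route as the paper: both arguments condition on the high-probability event (from Theorem~\ref{thm:thresholding}) that the selected set $\mm(\bPihat\suptwo)$ contains no truly mismatched rows, so that the refined Procrustes problem reduces to the no-mismatch analysis of Theorem~\ref{thm:w-hat-consistent} on the fixed subsample $\Xbb_{[\mm(\bPi),:]}$, and both then use Weyl's perturbation bound with $\nmis=o(\sigma_p(\Xbb)^2)$ to replace $\sigma_p(\Xbb_{[\mm(\bPi),:]})$ by $\sigma_p(\Xbb)$. The only cosmetic difference is that the paper re-runs the polar-decomposition perturbation argument explicitly on the random index set and kills the mismatch term of $\Delta$ on the event $\mathcal{D}(\bPi_{[\mm(\bPihat\suptwo),:]},\Ibb_{[\mm(\bPihat\suptwo),:]})=\emptyset$, whereas you invoke Theorem~\ref{thm:w-hat-consistent} as a black box after identifying the selected set with the true one.
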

	\begin{remark}
		Since $\nmis=o(n)$ is a necessary condition as discussed in Remark~\ref{remark2},  $n-\nmis$ is of the same order as $n$ . In addition, $\sigma_p(\Xbb_{[\mm(\bPi),:]})^{2}$ and $\sigma_p(\Xbb)^{2}$ are of the same order when $\nmis=o(\sigma_p(\Xbb)^{2})$, which is shown in Section~\ref{app-coro-two} of the Supplementary Material. 
	\end{remark}
	
	\begin{remark}
		Corollary~\ref{coro:refinement} indicates that estimating $\Wbb$ using only pairs deemed as matched by $\bPihat\suptwo$ reduces the error due to mismatch at the cost of reduced sample size  $n-\nmis$. However, since $\nmis = o(n)$, $\Wbbhat\suptwo$ attains the same error rate as the estimator obtained with $\bPi$ given or $\bPi = \Ibb$. That is, the iSphereMAP estimator $\Wbbhat\suptwo$ achieves an error rate that is as good as if no mismatch is present. Moreover, compared to the error rate of $O_P\{(\sqrt{n\eta_{\kappa,p}}+\nmis)/\sigma_p(\Xbb)^{2} \}$ in (\ref{eq:wrate_pkinf}), $\Wbbhat\suptwo$ attains a lower error rate than that of $\Wbbhat\supone$ when $\sqrt{n\eta_{\kappa,p}} = o(\nmis)$.
	\end{remark}

	\section{Simulation}\label{simu}
	We conduct extensive simulation studies to evaluate the performance of our proposed iSphereMAP method for estimating both $\Wbb$ and $\bPi$ and to compare to the  \cite{mikolov2013exploiting} approach, referred to as the MT method hereafter. Specifically, for each $i$, the MT method finds $j_i=\arg\max_{j}\cos(\bY_i,\widehat{\Wbb}\bX_{j})$ without using group information, where $\widehat{\Wbb}$ is obtained from the OLS. We compare 
	(1) estimates of $\Wbb$ from our proposed spherical regression and from OLS, using full data and refined data; (2) estimates of $\bPi$ from the hard-thresholding procedure using group information, and from the MT method without group information.
	
	Throughout our simulation, we set $p=300$, $\kappa=150$, and all results are averaged over $100$ simulation datasets. This is a scenario where the noise level is much higher than the theoretical settings. 
	For a given sample size $n$, we let the true mapping matrix $\bPi$ include $\nmis=n^{\alpha}$ mismatched rows. We fix $n = 8000$ with $\alpha$ ranging from 0.35 to 0.93, corresponding to 0.3\% to 53\% of mismatched pairs among the entire data. We also fix $\alpha = 0.8$ but with $n$ varying from approximately 2000 to 8000. The sample size $n$ increases as the number of groups $K$ increases. Specifically, we prespecify a list of $1700$ unequal group sizes. We select the first $K$ group sizes in the list, with $K$ ranging from $100$ to $1700$, such that $n$ increases from approximately $2000$ to $8000$. 
	With a specific set of $(K,n,\alpha)$, we first simulate $\Xbb$ by generating $n$ vectors that follow a mixture of $K$ vMF distributions with concentration parameter $\kappa$, whose mean directions are $K$ group centers uniformly distributed on $\hypersphere$. The mixture weight for the distribution of the corresponding group is twice the weight for the other $K-1$ distributions.
	Then we generate $\bPi=\text{diag}\{\bPi^1,\dots,\bPi^K\}$, in which randomly selected $n-n^\alpha$ rows are copied from the corresponding rows of $\Ibb_n$, whereas the other $n^\alpha$ rows are specified to encode \onetoone  and \onetomany  mismatch patterns. We let half of the $n^\alpha$ rows be indicators that introduce permutation within group and the other half be weight vectors following the Uniform(0,1) distribution to introduce \onetomany  mapping.
	We specify the true transformation matrix $\Wbb$ by taking the left eigenvectors of a $p\times p$ matrix of standard normal random values. 
	Finally, we generate $\Ybb$ with mean directions $\bPi \Xbb \Wbb$ following the vMF distribution with concentration parameter $\kappa$.
	
	\begin{figure}[ht]
		\begin{centering}
			\makebox[\textwidth][c]{\includegraphics[width=0.9\textwidth, scale=.5]{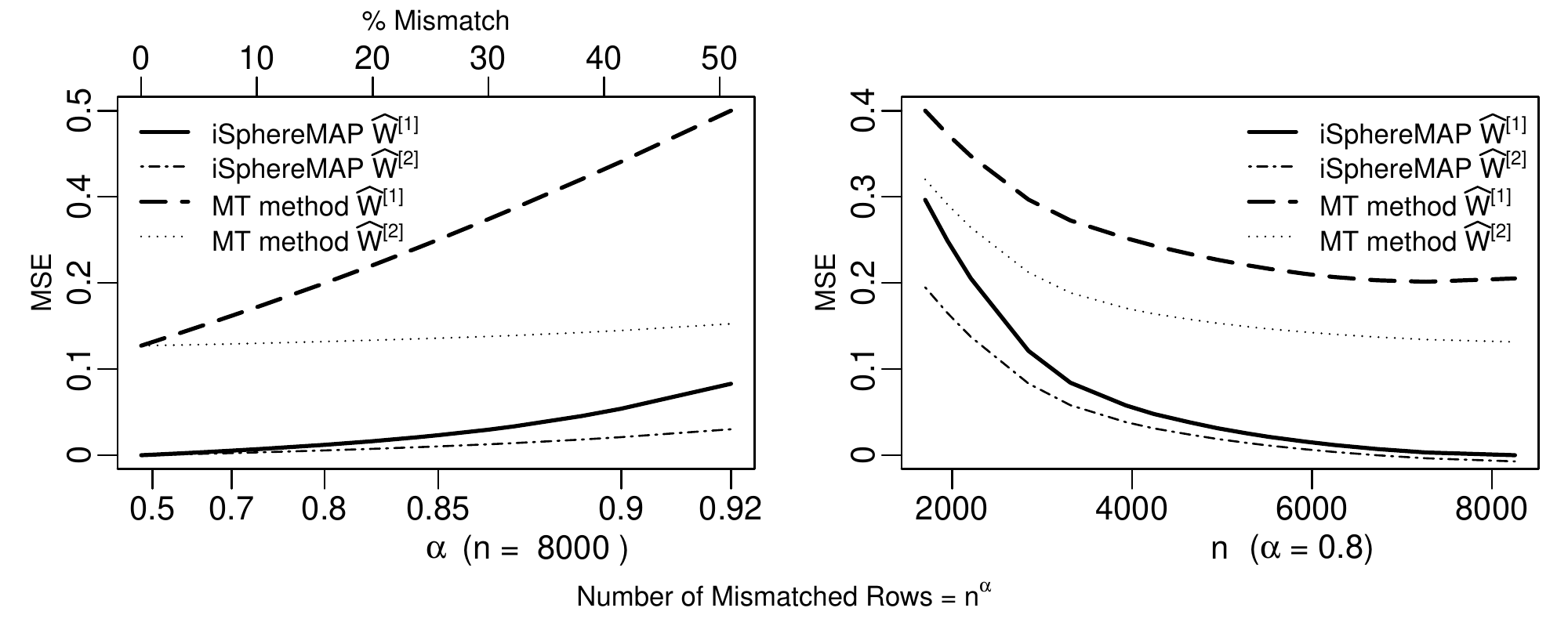}}
			\par \end{centering}
		\caption{\label{Wsimurslt} Performance of $\Wbbhat\supone$ and $\Wbbhat\suptwo$ obtained based on the proposed spherical regression and OLS in terms of the MSE (normalized by $p^{-1}=1/300$) under ranging amount of mismatch (left panel) and sample size (right panel).}
	\end{figure}
	
	We first summarize in Figure~\ref{Wsimurslt} the mean squared errors (MSEs) scaled by $p^{-1}$ of $\Wbbhat\supone$ and $\Wbbhat\suptwo$ from spherical regression and the MT method (OLS). The MSE is defined as the average of $\|\Wbbhat-\Wbb\|^2_F$ over simulated datasets.
	The spherical regression attains considerably smaller estimation error compared to the MT method in both $\Wbbhat\supone$ and $\Wbbhat\suptwo$. As $\alpha$ and correspondingly $\nmis$ increases, both methods suffer increased error as expected but the deterioration is much more drastic for the OLS. For a fixed $\alpha$, the estimation error of spherical regression approaches to zero at a much faster rate than that of the OLS as $n$ increases. 
	
	By removing the unmatched pairs, substantial improvement is observed in $\Wbbhat\suptwo$ compared to $\Wbbhat\supone$. In particular, when $\nmis = n^{\alpha}$ ranges from $n^{0.7}$ to $n^{0.93}$, the MSEs from both methods are notably smaller than that of the initial estimates. Our observation is consistent with our discussion in Section~\ref{refineW} that when the order of $\nmis$ is larger than $\sqrt{n\eta_{\kappa,p}}=n^{0.5}$, the error rate of the refined estimate will be improved. With $\alpha$ fixed and $n$ increasing, $\Wbbhat\suptwo$ also have a consistently smaller MSE than $\Wbbhat\supone$, with the difference in MSE between $\Wbbhat\supone$ and $\Wbbhat\suptwo$ from spherical regression decreasing as $n$ increases.

	We next evaluate the performance of $\bPihat\suptwo$ obtained using data $(\Wbbhat\supone\Xbb,\Ybb)$ with and without the aid of group information, where $\Wbbhat\supone$ is obtained from the spherical regression.  
	Note that without a group structure, initial OLS estimate $\Initialpi$ may not be obtained due to the high dimensionality. In this case, we estimate a permutation matrix using the MT method which matches rows of $\Wbbhat\supone\Xbb$ and $\Ybb$ using cosine similarity as distance metric.
	We evaluate both the \onetoone  match rate and the MSE of \onetomany  weight defined as follows. 
	The \onetoone  match rate is the percentage of correctly matched rows among all \onetoone  mappings. Specifically, we calculate the \onetoone  match rate as
	$|\{i:\widehat{\bPi}_i=\bPi_i, i\in\Csc^c\}|/|\Csc^c|$, where $\Csc^c$ is the complement of $\Csc$, i.e., the true index set of \onetoone  mapping.
	The MSE of \onetomany  weight is defined as the MSE of $\bPihat_{[\Csc,:]}$ normalized by its size $|\Csc|n$. We also access the percentage of correctly identified \onetomany  mappings, i.e., $\widehat{\Csc}\cap\Csc|/|\Csc|$, where $\widehat{\Csc}$ denotes the estimated set of \onetomany  mapping.
	
	\begin{figure}[ht]
		\begin{centering}
			\makebox[\textwidth][c]{\includegraphics[width=0.9\textwidth, scale=.5]{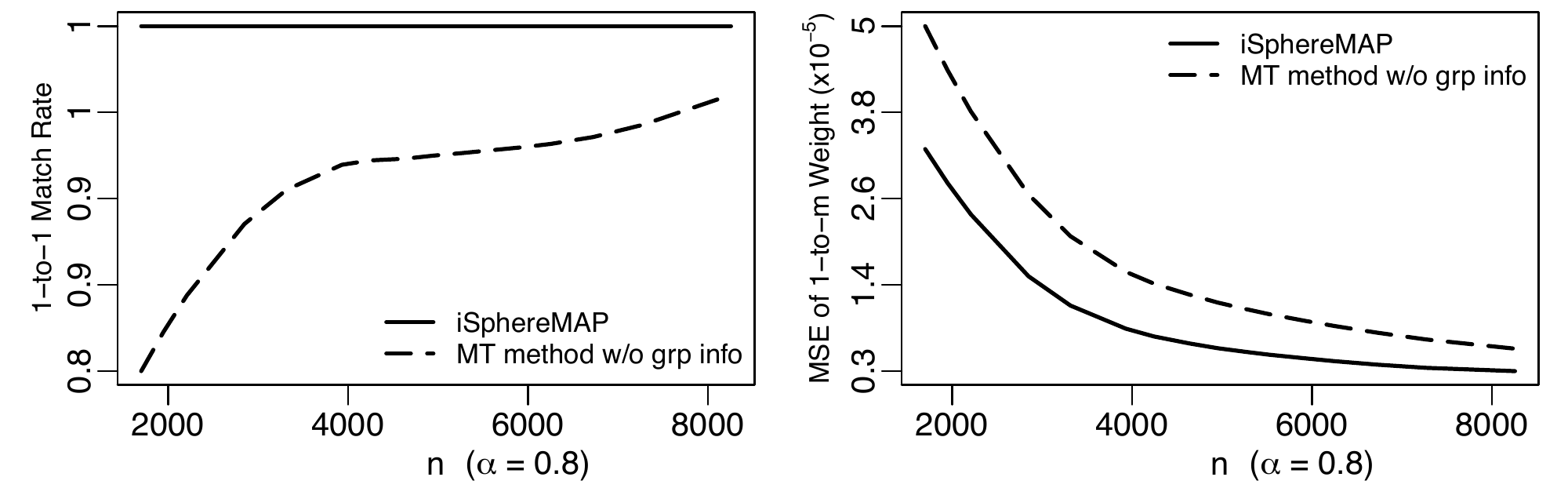}}
			\par \end{centering}
		\caption{\label{Pisimurslt} Performance of $\bPihat\suptwo$ obtained with and without group information in terms of the \onetoone  match rate (left panel) and the MSE of \onetomany  weight (right panel).}
	\end{figure}
	
	Figure~\ref{Pisimurslt} presents the performance of $\widehat{\bPi}$ obtained from our method with group information and from the MT method without group information, with a goal to understand the amount of accuracy gain from the group information. As $n$ increases, the match rate for \onetoone  mapping increases and the MSE of the weight vectors decreases. Our proposed method using group information outperforms the MT method without group structure in terms of both the \onetoone match rate and the MSE of \onetomany  mapping weight.
	Moreover, our proposed hard-thresholding procedure can correctly identify 95\% of the \onetomany  mappings on average across all scenarios, whereas the MT method does not allow for \onetomany  mapping.
	
			To further examine the robustness and efficiency of the iSphereMAP procedure, we performed simulation studies under three alternative scenarios:
			(I)  the block-diagonal structure of $\bPi$ is overly coarse; 
			(II) only \onetoone mapping is present; 
			and (III) a low noise level scenario compatible with the theoretical settings. Results from these scenarios are detailed  in Sections~\ref{wrong_grp_info}-\ref{supp:lessnoise} of the Supplementary Material.
			We observe that our method is not substantially sensitive to the overly coarse group structure. We thus generally recommend to be conservative in choosing the group structure. When only \onetoone mapping is present, our method remained better performance compared to the MT method, which is expected as the MT method is not customized to spherical data and does not utilize the group information. Lastly, with less noise in the data the estimators have relatively less MSE and match rate.
			
	\section{Application: ICD code Translation}\label{application}
	In this section, we employ the iSphereMAP method to (i) map the ICD-9 codes between two healthcare systems, the PHS and the VHA; and (ii) to automatically translate between ICD-9 and ICD-10 codes using VHA data. For the code mapping between healthcare systems, we focus on the ICD-9 codes since the majority of the codes recorded in the EHR are ICD-9 codes. In both examples, we use the \texttt{word2vec} algorithm to obtain SEVs for ICD codes  as detailed in Section~\ref{ICD-9motivation}. The code-SEVs are $\ell_2$-normalized.
	
	\subsection{Mapping ICD-9 codes between VHA and PHS\label{example1}} 
	The VHA is the largest integrated health care system of the united states, with an integrated EHR system adopted by all veterans hospitals and clinics \citep{VHA}.
		The PHS is a non-profit health care system founded by Brigham and Women's Hospital and Massachusetts General Hospital \citep{PHS}. The code SEVs for VHA were trained using data from about 18 million veterans. The PHS SEVs were trained using EHR data from about 62,000 patients that belong to the PHS Biobank cohort.
	There are a total of $n=8823$ ICD-9 code-SEVs each of dimension $p=300$ from the two systems available for analysis. Grouping information on the ICD codes is available through the ICD hierarchy \citep{world1977manual,icd_hierarchy}, the Clinical Classification Software \citep{CCS}, or the ICD-to-phenotype mapping provided by the PheWAS catalogue \citep{denny2010phewas}. 
	We chose the phenotype code (namely phecode) as it represents clinically meaningful phenotypes. Due to the hierarchical nature of the phecodes, we collapsed all phecodes with the same integer values into the same group, resulting in $K = 578$ groups. 
	The ICD-9 codes from different phecode groups represent distinct phenotypes and thus are unlikely to be confused with each other. As such, no mismatch is expected to occur across groups. On the other hand, we expect to see mismatch within groups. In fact, it has been shown that the level of agreement among coders and agencies in assigning medical codes for a specific disease or procedure can be poor \citep{austin2002multicenter,o2005measuring}, in part due to the fact that multiple codes can be appropriate for describing the same diagnosis. 
	The ICD-9 code SEVs trained from VHA and PHS data have been presented in Figure~\ref{grouping} of Section~\ref{ICD-9motivation}. 
	We can see that the code-vectors in VHA and PHS generally show distinct patterns, reflecting the variation in languages used in the two healthcare systems that necessitates alignment of the two language spaces. In addition, although the codes are clustered by the phecode group, many of the groups are distributed on top of each other, suggesting the difficulty in matching the codes without prior group information.

	\begin{figure}[h]
		\begin{subfigure}[t]{\textwidth}
			\centering\includegraphics[width=0.8\textwidth]{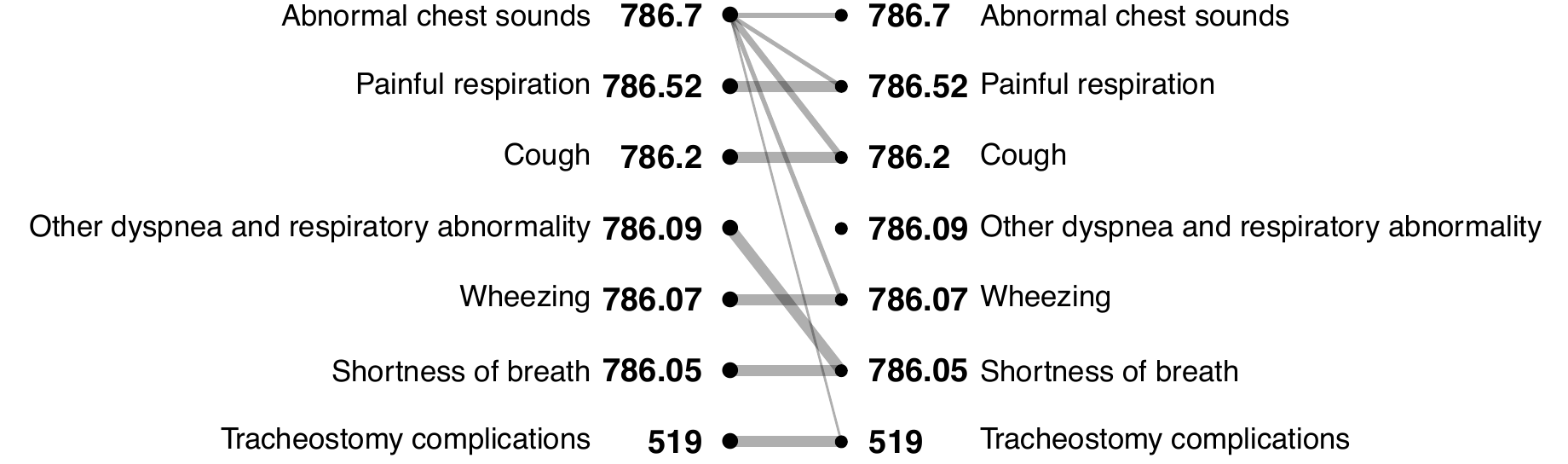}
			\caption{
				Symptoms of respiratory system}
		\end{subfigure}\vspace{0.1in}
		\begin{subfigure}[t]{\textwidth}
			\centering\includegraphics[width=0.8\textwidth]{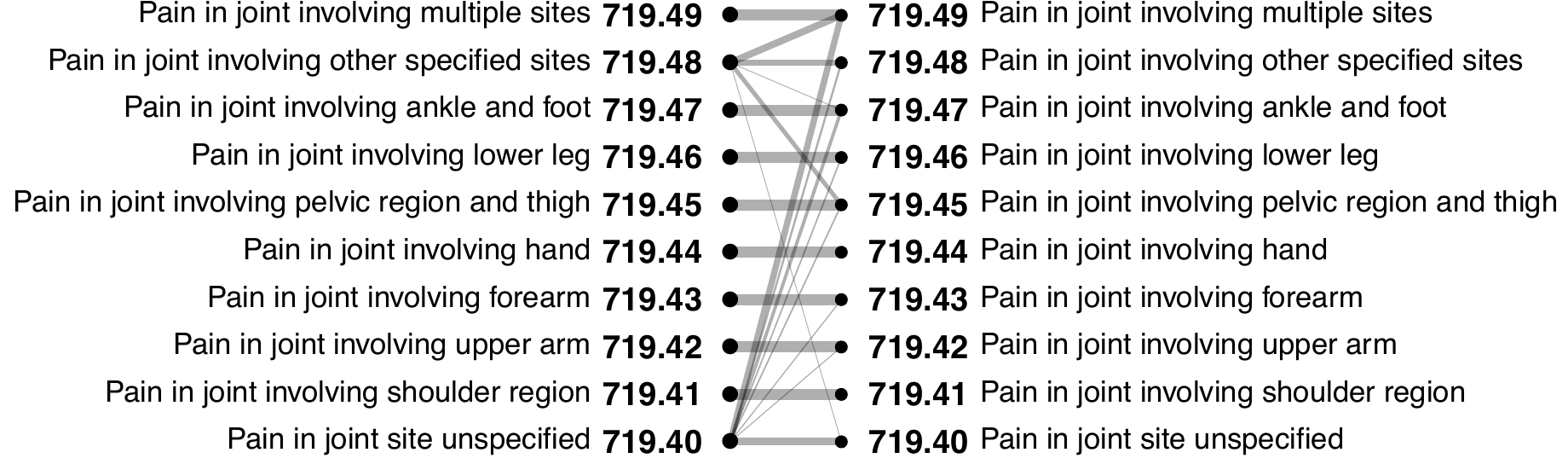}
			\caption{
				Pain in joint}
		\end{subfigure}
		\caption{\label{correspondence} Plot of the estimated mapping of codes from VHA (left) to PHS (right). Selected codes belong to the group describing (a) symptoms of respiratory system, and (b) pain in joint. Line width indicates the magnitude of weight vector components.}
	\end{figure}
	We select two groups of ICD-9 codes to present the result: one describing symptoms of respiratory system, the other describing pain in joint. 
	Figure~\ref{correspondence} presents the estimated mapping of codes from VAH (left) to PHS (right) from the iSphereMAP procedure. Thicker lines indicate larger weight for the corresponding codes on the right, and we do not link codes with negative weights. 
	In Figure~\ref{correspondence} (a), ICD-9 code 786.09 describing ``Other dyspnea and respiratory abnormality" is mapped to multiple codes with higher weights on both itself and code 786.05 describing ``shortness of breath", which is semantically similar to ``dyspnea". These two codes are likely to be used in an exchangeable manner.
	In Figure~\ref{correspondence} (b), most codes have a \onetoone  correspondence. However, codes 719.40 and 719.48 in VHA are mapped to multiple codes in PHS. Both codes describe joint pain with unspecified sites. It is thus reasonable to interpret these codes by combinations of codes associated with different specific sites or unspecified sites. The above observed patterns have been validated by domain experts.
	
	\subsection{Translation between ICD-9 and ICD-10 codes}\label{example2}
	We also apply our method to automatically map between ICD-9 and ICD-10 codes using VHA data. We train ICD-9 and ICD-10 code-vectors using data from non-overlapping time period, thus each set of vectors forms a language space. We take the GEM mapping \citep{GEM_2013} as a benchmark. As discussed in Section~\ref{ICD-9motivation}, due to the complexity and large number of ICD-10 codes, many mappings are \onetomany  or approximate match in GEM. For example, Figure~\ref{icd910fig} (a) displays the GEM mapping for ICD-9 codes in the rheumatoid arthritis (RA) group, which includes one-to-one, one-to-many, and many-to-one mappings and all are marked as ``approximate". 
	When an ICD-9 code should map to the combination of the corresponding ICD-10 codes according to the GEM mapping, e.g. ``714.2" in Figure~\ref{icd910fig} (a), we duplicate the ICD-9 code vector rows to match the number of ICD-10 codes to introduce mismatch error in the data.
	We define a group for pairs of GEM-linked ICD-9 and ICD-10 codes as one in which all ICD-9 codes have the same phecode up to the first decimal point to achieve moderate group sizes.
	Our final dataset includes $n=11025$ ICD-9 and ICD-10 SEV pairs ($p=600$) belonging to $K=1463$ groups, with 42\% \onetomany mapping and 58\% \onetoone mapping.
	\begin{figure}[!h]
		\begin{subfigure}[t]{\textwidth}
			\includegraphics[width=1\textwidth]{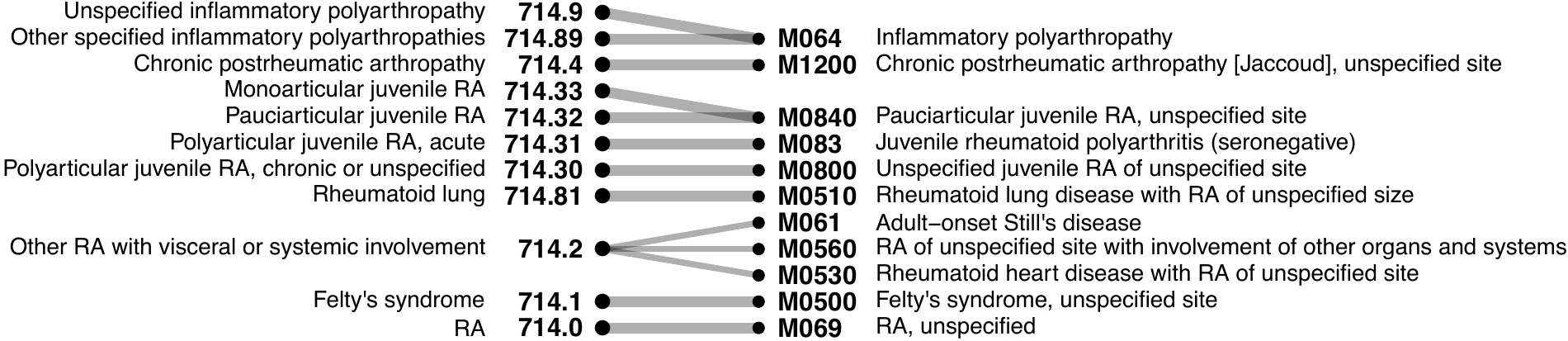}
			\caption{
				GEM ICD-9-to-10 mapping for rheumatoid arthritis (RA)
			}
		\end{subfigure}\vspace{0.1in}
		\begin{subfigure}[t]{\textwidth}
			\includegraphics[width=1\textwidth]{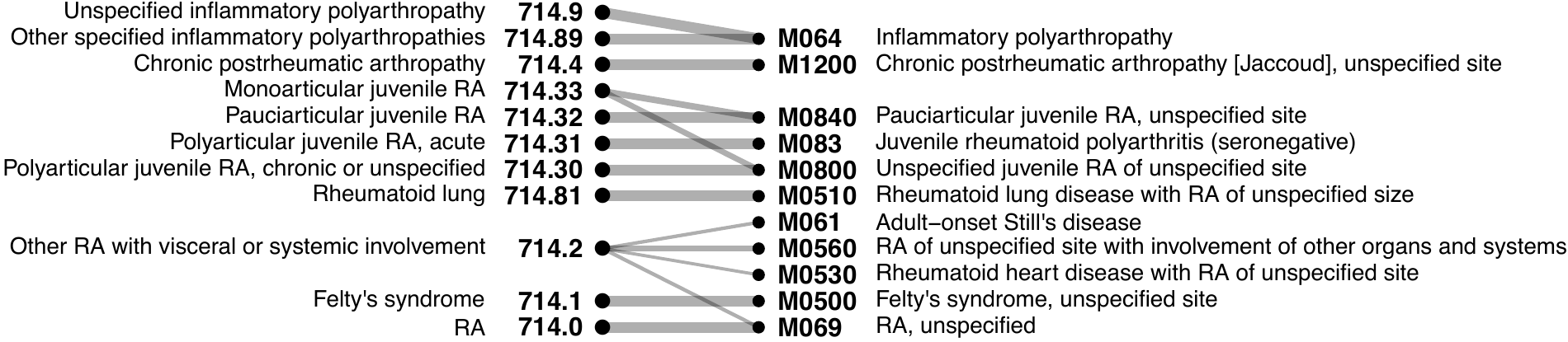}
			\caption{
				iSphereMAP estimated ICD-9-to-10 mapping for rheumatoid arthritis (RA)
			}
		\end{subfigure}\vspace{0.1in}
		\begin{subfigure}[t]{\textwidth}
			\includegraphics[width=1\textwidth]{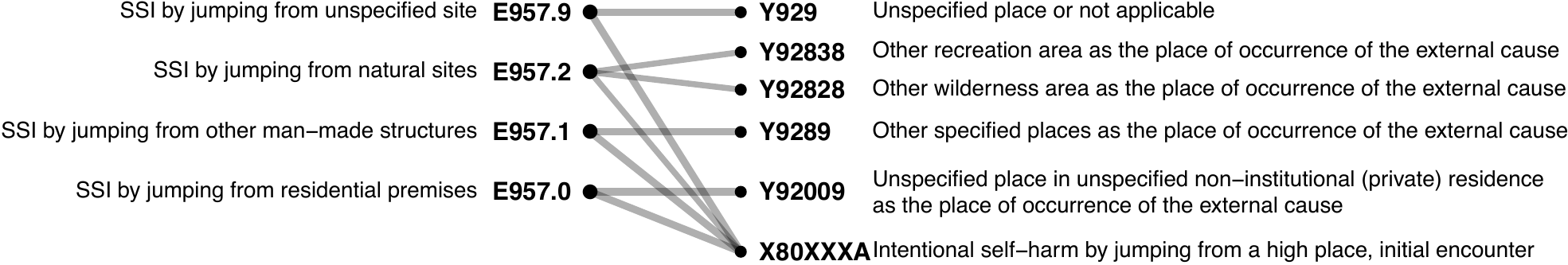}
			\caption{
				GEM ICD-9-to-10 mapping for suicide and self-inflicted injuries (SSI)
			}
		\end{subfigure}\vspace{0.1in}
		\begin{subfigure}[t]{\textwidth}
			\includegraphics[width=1\textwidth]{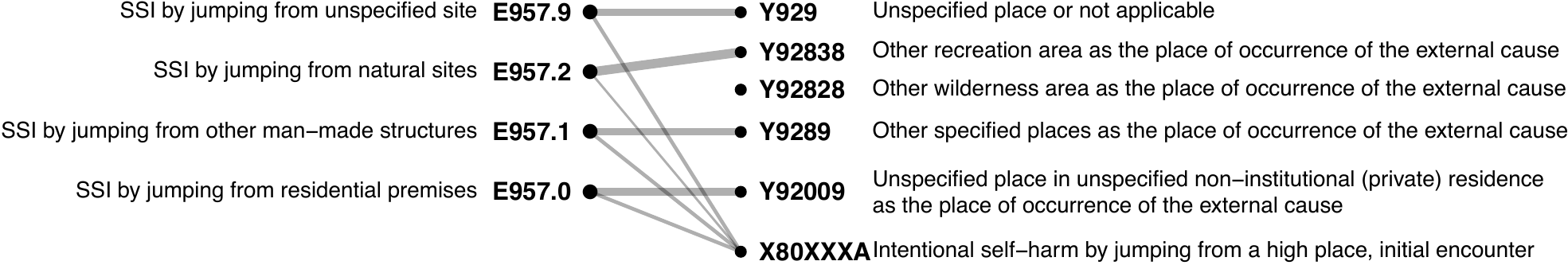}
			\caption{
				iSphereMAP estimated ICD-9-to-10 mapping for suicide and self-inflicted injuries (SSI)
			}
		\end{subfigure}
		\caption{\label{icd910fig} Plot of the manually created GEM mapping and data-driven mapping from ICD-9 to ICD-10 codes in the group describing rheumatoid arthritis (RA) and suicide and self-inflicted injuries (SSI). Line width indicates the magnitude of weight vector components.}
	\end{figure}
	
	Figure~\ref{icd910fig} (b) shows the estimated mapping from iSphereMAP, which is able to pick up different types of mapping patterns and only differs from the GEM mapping in very few codes. Additional interesting example of mapping for suicide and self-inflicted injuries (SSI) is presented in Figure~\ref{icd910fig} (c-d).
	We further investigate the proportions of correctly identified \onetoone and \onetomany mappings as well as the correctly matched code-pairs, taking the GEM mapping as the benchmark to validate the result. For comparison, we use the MT method with and without phecode-group structure in estimation of $\bPi$. 
	Using the phecode-group information, the MT method correctly matches 1298 (20\%) code-pairs among the 6359 code-pairs correctly identified as \onetoone mapping. Note that the MT method assumes that all mappings are one-to-one. However, without group information no code-pair can be correctly matched. 
	In contrast, our iSphereMAP correctly matches 2060 (49\%) code-pairs among the 4207 code-pairs correctly identified as \onetoone mapping. In addition, our method can further identify 54\% (2525) of the \onetomany mapping cases among 4666 \onetomany  mappings in total.

	\section{Discussion}\label{discussion}
	Data-driven semantic embeddings such as ICD code-SEVs are powerful approaches to learning the interpretation of medical codes in routine clinical practice which may differ when endorsed by different providers. We propose a novel code translation method with imperfectly linked embeddings by casting the translation problem into a statistical problem of spherical regression under mismatch. We detail the iSPhereMAP algorithm for estimating the translation matrix $\Wbb$ and the mapping matrix $\bPi$ and provide theoretical guarantees. In particular, we detail the extent of mismatch under which one may obtain a consistent estimate of $\Wbb$, and demonstrate that removing identified mismatched data based on the sparse estimate of $\bPi$ yields an improved estimator for $\Wbb$. In addition, we characterize conditions under which the support and magnitude of the mapping matrix $\bPi$ can be recovered. 
	Unlike existing methods in the literature on regression with mismatched data and machine translation, the  iSPhereMAP procedure allows for both \onetoone  and \onetomany  mapping, and can incorporate group structure when group information is available. Our method performs substantially better than methods limited to \onetoone correspondence and without using grouping information.
	Our methodological framework is particularly appealing because it can be extended to a wide range of applications, including confounding adjustment via text matching using text data in social science \citep{roberts2018adjusting,mozer2018matching}, and cross-language record linkage \citep{song2016cross,mcnamee2011cross}. The learned mapping matrix $\bPi$ and translation matrix $\Wbb$ have key practical value in transferring statistical models across systems \citep{torrey2010transfer}, capturing the pose of objects \citep{zhou2014vision}, estimating the relative angle of proteins \citep{sael2010binding} and so on.
	
		In the refined estimation of $\Wbb$, we only use data deemed correctly matched according to $\bPihat\suptwo$ to obtain $\Wbbhat\suptwo$. Removing mismatched data yields negligible information loss under the current setting of sparse mismatch with $\nmis=o(n)$. However, for settings with a large amount of mismatch, one may first use $\bPihat\suptwo$ to correct \onetoone mismatched data, and then estimate $\Wbb$ using all data that can be one-to-one mapped. As shown in a simulation study described in Section~\ref{supp:add_mismatch_1to1} of the Supplementary Material, further including corrected data can improve the performance, with improvement more substantial as the percentage of mismatch increases. Theoretical properties of such an alternative refinement strategy warrant further research, particularly for the setting where $\nmis/n \not\to 0$.
	
	The model selection consistency of $\bPi$ currently relies on an approximately noiseless condition where the noise level $\eta_{\kappa,p}=o(1)$, for which a sufficient condition is $\kappa\to\infty$, $p=o(\kappa)$, and $p\geq 4$. A similar condition that the signal-to-noise ratio goes to infinity was required in \cite{pananjady2016linear}. The seemingly stringent condition is in fact reasonable because in practice, normalization of the original data to unit length often substantially reduces the noise in the data. Our findings established a theoretical basis for future research on weaker conditions for mapping recovery.
	When the number of groups $K$ is relatively small such that some group size $n_k$ is larger than $p$, we may not be able to obtain an initial OLS estimate of $\bPi$. In this case, one may consider the alternative sparsity condition that $\|\bPi-\Ibb\|_1$ is small, under which shrinkage estimators such as the LASSO can be used to obtain $\Initialpi$. Modified iSphereMAP procedure under such settings warrants future research. In addition, although a fixed threshold was proposed to obtain a sparse estimator for $\bPi$, applying adaptive weights to allow the threshold to vary across groups and/or codes may further improve the performance. For example, a potential strategy is to adapt to the initial estimate $\Initialpi_{i\cdot}$ by measuring how distinguishable it is from a \onetoone mapping. As shown in Section~\ref{supp:adaptive_threshold} of the Supplementary Material, adaptive weighting shows promising performance in terms of the percentages of correctly identified \onetoone and \onetomany mappings.  Theoretical justification of adaptive thresholding warrants future research.

	Improving semantic interoperability of EHR data is a pressing need for both clinical practice and biomedical research. Our proposed novel code translation method offers a scalable and automated approach for EHR data harmonization. A caveat is that the heterogeneity in medical coding could be partially driven by patient characteristics. In particular, the {within-group} coding differences mainly correspond to coding practice heterogeneity, whereas the {across-group} differences, if present, may reflect patient population heterogeneity. 
			However, we believe that the practice patterns of medicine,  the clinical knowledge, and the comorbidity patterns of diseases are mostly shared and hence transferable across healthcare systems.
			Thus, provided that both healthcare systems have sufficient number of patients with the diseases that the ICD codes cover, the embeddings trained from different healthcare systems still have the potential of being translated.
			In addition, the proposed method depends on the assumption that the SEVs can be aligned via a rotation $\Wbb$, and the block-diagonal structure of $\bPi$ is correctly specified. Model diagnosis, sensitivity analysis with more flexible models, and validation of the group structure are hence imperative. Nevertheless, the iSphereMAP algorithm remains meaningful as the cosine similarity measures the closeness of code pairs regardless of the adequacy of the vMF model assumption. 
			A potential limitation is the need of expert knowledge to further investigate whether any statistical finding corresponds to an actual mapping between two sets of medical codes.
			Another limitation is the lack of symmetry in the learned mapping, which is a key issue of the state-of-the-art language translation algorithms. Learning a symmetric translation is an open question very much of interest.

	\clearpage
	\begin{center}
	{\large\bf Supplementary Material for ``Spherical Regression under Mismatch Corruption with Application to Automated Knowledge Translation"}
	\end{center}
	\setcounter{section}{0}
	\setcounter{figure}{0}
	\renewcommand\thesection{\Alph{section}}
	\renewcommand\thesubsection{\thesection.\arabic{subsection}}
	\renewcommand\thefigure{\thesection.\arabic{figure}}
	\newtheorem{prop}{Proposition}
	\newtheorem{innercustomlem}{Lemma}
	\newenvironment{customlem}[1]
	{\renewcommand\theinnercustomlem{#1}\innercustomlem}
	{\endinnercustomlem}
	\numberwithin{equation}{section}
	\numberwithin{theorem}{section}
	\numberwithin{lemma}{section}
	\numberwithin{remark}{section}
	\numberwithin{corollary}{section}
	\numberwithin{prop}{section}
	\medskip
	
	\begin{spacing}{1.5}	
		\section{Word embedding algorithms} \label{supp:W2Valgorithm}
		Word embedding is the collective name for a set of language modeling and feature learning techniques in natural language processing. 
		Essentially, words can be represented as low-dimensional vectors of real numbers, often referred to as word representations or word embeddings, such that words with similar meanings will be closer to each other.
		The embeddings are often trained in an unsupervised manner, and can then be used as input features in supervised tasks.
		The idea of word representation stems from a psychological claim that human learn the meaning of a word from its context.
		Specifically, words with similar meanings will tend to occur in similar contexts, and thus co-occurrence of words carries key information for learning semantic representations. This idea has led to two main streams of word embedding algorithms: (1) context prediction based, which makes predictions of a neighbor word within local context windows using neural network, such as the \texttt{word2vec} \citep{mikolov2013distributed}; (2) co-occurrence count based, which explicitly factorizes a word-context matrix that measures mutual information based on the co-occurrence count, such as latent semantic analysis (LSA) \citep{deerwester1990indexing} and global vectors (GloVe) \citep{pennington2014glove}.
		
		To fix notation, let $w\in V$ denote a word and $c\in V$ denote its context within a pre-specified window, where $V$ is the vocabulary, i.e. the collection of all words that appear in a specific {corpus}. Let $D=D_{{corpus}}$ denote the collection of observed word-context pairs $(w,c)$ in the {corpus}. Let $\#(w,c)$  denote the number of times the pair $(w, c)$ appears in $D$. Therefore $|D|=\sum_{w,c\in V}\#(w,c)$. Note that it is possible that $\#(w,c)=0$ for a particular word-context pair $(w,c), w,c\in V$. Let $p$ denote a prespecified dimension of the embeddings with $p\leq |V|$; let $\vec{w}$ and $\vec{c}$ denote the $p$-dimensional embeddings of a word $w$ and a context $c$ respectively; and let $W_{|V|\times p}=[\vec{w}_1,\dots,\vec{w}_{|V|}]\trans$ and $C_{|V|\times p}=[\vec{c}_1,\dots,\vec{c}_{|V|}]\trans$ denote the matrix of all word and context embeddings respectively. Further, define point-wise mutual information (PMI) for a word-context pair $(w,c)$ as \[\text{PMI}(w,c)=\log\frac{\hat{P}(w,c)}{\hat{P}(w)\hat{P}(c)},\] where $\hat{P}(w,c)=\#(w,c)/|D|, \hat{P}(w)=\sum_{c'}\#(w,c')/|D|, \hat{P}(c)=\sum_{w'}\#(w',c)/|D| $. Because when $\#(w,c)=0$, we have $\log(\#(w,c))=-\infty$, we further introduce the positive point-wise mutual information (PPMI), which is $PPMI(w,c)=\max(\text{PMI}(w,c),0)$.

		The set of \texttt{word2vec} algorithms utilize a single-layer neural network for prediction with neural network architectures such as continuous bag-of-words (CBOW) or skip-gram (SG). These prediction-based training algorithms can be further combined with negative sampling (NS) which randomly selects a small number of “negative” words and update their embeddings. Recently it has been shown that the skip-gram combined with negative-sampling (SGNS) implicitly factorizes a shifted pointwise mutual information matrix \citep{levy2014neural}, i.e., \[W\cdot C\trans\approx M^{PMI}-\log k \mathbf{1}\mathbf{1}\trans,\] where $M^{PMI}$ is a $|V|\times |V|$ matrix with $M_{w,c}^{PMI}=\text{PMI}(w,c)$, and $k$ is the prespecified number of negative samples. This discovery connected the prediction-based and count-based language models and showed that the underlying statistics for both models is the co-occurrence count.
		
		In contrast to the implicit factorization in \texttt{word2vec}, the GloVe explicitly factorizes a log-count matrix shifted by word/context-specific bias terms, i.e., \[W\cdot C\trans\approx M^{\log(\#(w,c))}-\vec{b_{w}}\mathbf{1}\trans-\mathbf{1}\vec{b_{c}}\trans,\] where $M^{\log(\#(w,c))}$ is a $|V|\times |V|$ log-count matrix with $M_{w,c}^{\log(\#(w,c))}=\log(\#(w,c))$, $\vec{b_{w}}$ and $\vec{b_{c}}$ are unknown bias terms for the word and the context that are estimated in parallel with the embeddings $W$ and $C$.
		It has been shown that the different performances of different word embedding algorithms are largely due to system design choices and hyperparameter optimizations, rather than the embedding algorithms themselves \citep{levy2015improving}. There is no global advantage to any single approach over the others.
		
		In Table~\ref{w2vtable} we provide a summary of different word embedding algorithms. These algorithms either implicitly or explicitly factorize a matrix derived from the co-occurrence of words and contexts. In SGNS with $k$ negative samples, the corresponding shifted point-wise mutual information (SPMI) derived in \cite{levy2014neural} is given by $\text{PMI}(w,c)-\log k$, although in practice the shifted \textit{positive} point-wise mutual information (SPPMI) may be used instead of the SPMI, which is defined as $\text{SPPMI}(w,c)=\max(\text{PMI}(w,c)-\log k,0)$.

		\begin{table}[h]
			{\scriptsize{
					\centering
					\begin{tabular}{|c|c|c|}
						\hline 
						Method & Low rank approximation & Definitions of matrices derived from co-occurrence\tabularnewline
						\hline 
						\hline 
						Basic Semantic Vector & \multirow{2}{*}{$\vec{w}=M_{w,\cdot}^{PPMI}$} & $M_{w,c}^{PMI}=\text{PMI}(w,c)=\log\frac{\hat{P}(w,c)}{\hat{P}(w)\hat{P}(c)}$,\tabularnewline
						(no dimensional reduction) &  & where $\frac{\hat{P}(w,c)}{\hat{P}(w)\hat{P}(c)}=\frac{\#(w,c)\cdot|D|}{\sum_{c'}\#(w,c')\cdot\sum_{w'}\#(w',c)}$\tabularnewline
						\hline 
						Traditional singular & \multirow{2}{*}{$W\cdot C^{T}\approx M^{PPMI}$} & $M_{w,c}^{PPMI}=\text{PPMI}(w,c)=\max(\text{PMI}(w,c),0)$\tabularnewline
						value decomposition (SVD) &  & $\vec{w}\cdot\vec{c}\approx\text{PMI}(w,c)$\tabularnewline
						\hline 
						Skip-Grams with & \multirow{2}{*}{$W\cdot C^{T}\approx M^{SPPMI}-\log k$} & $M_{w,c}^{SPPMI}=\text{SPPMI}(w,c)=\max(\text{PMI}(w,c)-\log k,0)$\tabularnewline
						Negative Sampling (SGNS) &  & $\vec{w}\cdot\vec{c}\approx\text{PMI}(w,c)-\log k$\tabularnewline
						\hline 
						\multirow{2}{*}{Global Vectors (GloVe)} & \multirow{2}{*}{$W\cdot C^{T}\approx M^{\log(\#(w,c))}-\vec{b_{w}}-\vec{b_{c}}$} & $\vec{w}\cdot\vec{c}+b_{w}+b_{c}\approx\log(\#(w,c))$,\tabularnewline
						&  & where $b_{w}$ and $b_{c}$ (scalars) are word/context-specific biases\tabularnewline
						\hline 
					\end{tabular}
			}}
			\caption{Summary of state-of-the-art word embedding algorithms}\label{w2vtable}
		\end{table}

		In our data application described in Section~\ref{application}, we define co-occurrence of a pair of ICD codes as the number of patients who has  the pair of ICD codes co-occur in his/her health record within any 30-day period. We count the co-occurrence of all pairs of ICD-9 codes within each healthcare system in order to translate codes between two healthcare systems as described in Section~\ref{example1}. That is, we obtain two co-occurrence matrices from the two healthcare systems. From each co-occurrence matrix we derive the SPPMI matrix.
		Similarly, we count the co-occurrence of all pairs of ICD-9 codes and the co-occurrence of all pairs of ICD-10 codes using the EHR data of the Veterans Health Administration (VHA) in order to map from ICD-9 codes to ICD-10 codes as described in Section~\ref{example2}. Again we derive two SPPMI matrices from the two sets of co-occurrence counts.
		
		We use singular value decomposition (SVD) of the SPPMI matrix to generate semantic embedding vectors (SEVs). Specifically, we first compute the SPPMI matrix $M^{\text{SPPMI}}$ with each element defined as $M^{\text{SPPMI}}_{w,c}=\max(\log \frac{\#(w,c)}{\sum_{c'}\#(w,c')\cdot (\sum_{w'}\#(w',c)/|D|)^{\alpha}}-\log k,0)$. Here $\alpha$ is a smoothing parameter that aims to shrink the PMI of any $w$ co-occurring with a rare context $c$, which was shown to be an effective way to tune the PMI \citep{levy2015improving}. Then, for a pre-specified dimension $p$, we compute the rank $p$ approximation of $M^{\text{SPPMI}}$.		
		That is, we obtain  $M^{\text{SPPMI}}\approx U_p\Sigma_p U_p\trans$, where $U_p$ is the first $p$ eigenvectors of $M^{\text{SPPMI}}$, and $\Sigma_p$ is a diagonal matrix of the first $p$ eigenvalues of $M^{\text{SPPMI}}$. 
		Finally, we obtain $W=U_p\sqrt{\Sigma_p}$. If $M^{\text{SPPMI}}$ is not symmetric, then we have $M^{\text{SPPMI}}\approx U_p\Sigma_p V_p\trans$ and we obtain $W=U_p\sqrt{\Sigma_p}+V_p\sqrt{\Sigma_p}$.
		We set the number of negative samples $k=10$, and smoothing parameter $\alpha=0.75$. The SVD of the SPPMI matrix was implemented using the augmented implicitly restarted Lanczos bidiagonalization algorithm  \citep{baglama2005augmented} with the \texttt{irlba} package in R \citep{baglama2017irlba}.

		\section{Proof of supporting lemmas}\label{app-lemma}
		
		\begin{customlem}{\ref{lemma:eta-bound}}
			For $p\geq 4$ and $\kappa>0$, $\max\{0,1-\frac{p-1}{2\kappa}\}< \gamma_{\kappa,p}< 1$.
		\end{customlem}
		
		\begin{proof}[Proof]
			Without loss of generality, assume $\bmu=(1,0,...,0)$. Then,
			$\gamma_{\kappa,p}=\Ebb(Z_1),$
			where $\bZ=(Z_1,..,Z_p)\trans \sim \text{vMF}_{\bmu,\kappa,p}$.
			The moment generating function of $Z_1$ as
			$M_{Z_1}(\lambda)=C_p(\kappa)/C_p(\kappa+\lambda)$ as shown in the proof of Proposition~\ref{lemma:tail-bound-von-miss}.
			Thus, we have
			\begin{equation*}
			\gamma_{\kappa,p}=\Ebb(Z_1)=(\log M_{Z_1}(\lambda))'|_{\lambda=0}
			= -\frac{C_p'(\kappa)}{C_p(\kappa)}
			= \frac{B'_{p/2-1}(\kappa)}{B_{p/2-1}(\kappa)}-\frac{p/2-1}{\kappa}.
			\end{equation*}
			According to the equation below (2.6) in \cite{baricz2010bounds}, we have
			\begin{equation*}
			\frac{B'_{p/2-1}(\kappa)}{B_{p/2-1}(\kappa)}\kappa> \kappa-1/2,
			\end{equation*}
			for $p\geq 4$.
			Combining the above two inequalities, we have
			\begin{equation*}
			\gamma_{\kappa,p}\geq 1-\frac{1}{2\kappa}- \frac{p-2}{2\kappa}=\max\{ 1- \frac{p-1}{2\kappa},0\}.
			\end{equation*}
			
		\end{proof}
		
		\begin{lemma}\label{lemma:variance-von-miss}
			If $\bZ\sim \text{vMF}_{\bmu,\kappa,p}$ and $\kappa>0$, then
			$E[\|\bZ-\gamma_{\kappa,p}\bmu\|^2] = 1-\gamma_{\kappa,p}^2$.
		\end{lemma}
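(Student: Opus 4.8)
The plan is to prove the identity by a straightforward bias--variance-type expansion of the squared Euclidean norm, exploiting two structural facts about the vMF law: that its support lies on the unit sphere $\hypersphere$, so $\|\bZ\|_2=1$ almost surely, and that its mean is $E[\bZ]=\gamma_{\kappa,p}\bmu$ with $\|\bmu\|_2=1$.

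First I would expand
\begin{equation*}
\|\bZ-\gamma_{\kappa,p}\bmu\|_2^2
= \|\bZ\|_2^2 - 2\gamma_{\kappa,p}\,\bmu\trans\bZ + \gamma_{\kappa,p}^2\|\bmu\|_2^2
\end{equation*}
and take expectations term by term. The first term is deterministic: since $\bZ\in\hypersphere$ we have $\|\bZ\|_2^2=1$, so $E[\|\bZ\|_2^2]=1$. The third term is also deterministic, and because $\|\bmu\|_2=1$ it contributes $\gamma_{\kappa,p}^2$. For the cross term, linearity of expectation together with $E[\bZ]=\gamma_{\kappa,p}\bmu$ gives $E[\bmu\trans\bZ]=\bmu\trans E[\bZ]=\gamma_{\kappa,p}\|\bmu\|_2^2=\gamma_{\kappa,p}$. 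Collecting the three pieces yields $E[\|\bZ-\gamma_{\kappa,p}\bmu\|_2^2]=1-2\gamma_{\kappa,p}^2+\gamma_{\kappa,p}^2=1-\gamma_{\kappa,p}^2$, as claimed.

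The only input that is not immediate from the spherical constraint is the mean formula $E[\bZ]=\gamma_{\kappa,p}\bmu$. I would justify it by the rotational symmetry of the vMF density about the axis $\bmu$ (the off-axis coordinates are symmetrically distributed and average to zero), together with the identity $\gamma_{\kappa,p}=E(Z_1)$ for $\bmu=(1,0,\dots,0)\trans$ already established in the proof of Lemma~\ref{lemma:eta-bound}. Since that computation is carried out in the preceding lemma and the unit-norm constraint is automatic on $\hypersphere$, there is essentially no remaining obstacle: once the mean and $\|\bZ\|_2=1$ are in hand the result is pure algebra, and indeed this lemma simply supplies the formal verification of the variance identity $E[\|\bY-E[\bY]\|_2^2]=1-\gamma_{\kappa,p}^2$ asserted in Section~\ref{vMFdefine}.
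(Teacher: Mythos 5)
Your proof is correct and is essentially the same argument as the paper's: the paper expands $\|\bZ-\gamma_{\kappa,p}\bmu\|_2^2$ in coordinates with $\bmu=(1,0,\dots,0)\trans$, uses $\|\bZ\|_2^2=1$ to collapse the off-axis terms, and invokes $\Ebb(Z_1)=\gamma_{\kappa,p}$ from the preceding lemma's computation, which is exactly your coordinate-free expansion specialized to that frame. The extra remark on rotational symmetry to justify the full vector identity $E[\bZ]=\gamma_{\kappa,p}\bmu$ is sound but not strictly needed, since only $E[\bmu\trans\bZ]$ enters the calculation.
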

		
		\begin{proof}[Proof]
			Let $\bZ=(Z_1,...,Z_p)\trans $ defined similarly as above. We have,
			\begin{equation*}\begin{split}
			E[\|\bZ-\gamma_{\kappa,p}\bmu\|^2]
			=& \Ebb((Z_1-\gamma_{\kappa,p})^2+Z_2+...+Z_p^2 )=  \Ebb((Z_1-\gamma_{\kappa,p})^2)+\Ebb(1-Z_1^2)\\
			=&  1 - 2\gamma_{\kappa,p}\Ebb(Z_1)+\gamma_{\kappa,p}^2
			=  1- \gamma_{\kappa,p}^2.
			\end{split}\end{equation*}
		\end{proof}
		
		\begin{lemma}\label{lemma:vec-norm-cos}
			For a vector $\bZ=(Z_1,...,Z_p)\in \mathcal{R}^p$, if $\|\bZ-\Ibb_{j\cdot}\|_2\leq r$ for $0<r<\frac{1}{2}$, then
			\begin{equation*}
			\cos(\bZ,\Ibb_{j\cdot})\geq 1-2r.
			\end{equation*}
		\end{lemma}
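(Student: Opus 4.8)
The plan is to reduce the statement to an elementary one-variable inequality. First I would observe that $\Ibb_{j\cdot}$ is the $j$-th standard coordinate vector, so it has unit length and $\bZ\trans\Ibb_{j\cdot}=Z_j$; consequently $\cos(\bZ,\Ibb_{j\cdot})=Z_j/\|\bZ\|_2$. The problem thus becomes one of lower-bounding the single coordinate $Z_j$ and upper-bounding the norm $\|\bZ\|_2$, both of which follow directly from the hypothesis $\|\bZ-\Ibb_{j\cdot}\|_2\le r$.

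Next, I would extract the two needed bounds from that hypothesis. Since $|Z_j-1|\le\|\bZ-\Ibb_{j\cdot}\|_2\le r$, we obtain $Z_j\ge 1-r$, and the restriction $r<\frac{1}{2}$ guarantees $Z_j\ge 1-r>0$. The triangle inequality gives $\|\bZ\|_2\le\|\Ibb_{j\cdot}\|_2+\|\bZ-\Ibb_{j\cdot}\|_2\le 1+r$, while the reverse triangle inequality gives $\|\bZ\|_2\ge 1-r>0$, so the cosine is genuinely well defined and positive. Combining these yields $\cos(\bZ,\Ibb_{j\cdot})=Z_j/\|\bZ\|_2\ge (1-r)/(1+r)$.

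Finally I would close the gap with the elementary inequality $(1-r)/(1+r)\ge 1-2r$, which holds because the difference of the two sides equals $2r^2/(1+r)\ge 0$ for any $r>-1$. Chaining this with the previous step gives $\cos(\bZ,\Ibb_{j\cdot})\ge 1-2r$, as claimed. There is no substantive obstacle here; the only point requiring care is to confirm that $\bZ\ne 0$ and $Z_j>0$, so that the cosine is defined and the displayed inequalities preserve their direction, and this is exactly what the assumption $r<\frac{1}{2}$ secures.
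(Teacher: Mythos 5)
Your proof is correct and follows essentially the same route as the paper: write $\cos(\bZ,\Ibb_{j\cdot})=Z_j/\|\bZ\|_2$, lower-bound the numerator and upper-bound the denominator from the hypothesis, and finish with an elementary inequality in $r$. The only difference is that you get $|Z_j-1|\le r$ directly from the coordinatewise bound, which is slightly sharper and simpler than the paper's $|Z_j-1|\le r(1+r)$ obtained via the identity $\|\bZ-\Ibb_{j\cdot}\|_2^2=\|\bZ\|_2^2+1-2Z_j$; both yield the stated conclusion.
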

		
		\begin{proof}[Proof]
			First, from $\|\bZ-\Ibb_{j\cdot}\|_2\leq r$, we have
			$1-r\leq\|\bZ\|_2\leq1+r.$
			Since $\|\bZ-\Ibb_{j\cdot}\|_2^2= \|\bZ\|_2^2+1-2Z_j$, we have
			\begin{equation*}
			|Z_j-1|=\frac{1}{2}|\|\bZ\|_2^2-1-\|\bZ-\Ibb_{j\cdot}\|_2^2|
			\leq\frac{1}{2}\{r(2+r)+ r^2\}=r(1+r).
			\end{equation*}
			It follows that
			\begin{equation*}
			\cos(\bZ,\Ibb_{j\cdot})=\frac{Z_j}{\|\bZ\|_2}
			\geq \frac{1-r(1+r)}{1+r}
			=1-\frac{r(2+r)}{1+r}\geq1-2r.
			\end{equation*}
		\end{proof}
		
		\begin{lemma}\label{lemma:cos-lower}
			For three vectors $\bX,\bY,\bZ\in \mathcal{R}^p$. If
			$\cos(\bX,\bY)\geq 1-\alpha$ and $\cos(\bY,\bZ)\leq \beta$, then $\cos(\bX,\bZ)\leq \beta+\sqrt{2\alpha}$.
		\end{lemma}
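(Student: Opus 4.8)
The plan is to translate both cosine-similarity hypotheses into statements about the Euclidean distance between the normalized vectors and then close the argument with a single application of Cauchy--Schwarz. First I would pass to the unit vectors $\hat{\bX}=\bX/\|\bX\|_2$, $\hat{\bY}=\bY/\|\bY\|_2$, and $\hat{\bZ}=\bZ/\|\bZ\|_2$, so that $\cos(\bX,\bY)=\hat{\bX}\trans\hat{\bY}$, $\cos(\bY,\bZ)=\hat{\bY}\trans\hat{\bZ}$, and $\cos(\bX,\bZ)=\hat{\bX}\trans\hat{\bZ}$. This reduces the claim to a statement about three points on the unit sphere and removes the magnitudes from consideration.

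Next I would convert the first hypothesis into a chordal-distance bound via the polarization identity $\|\hat{\bX}-\hat{\bY}\|_2^2 = 2-2\hat{\bX}\trans\hat{\bY} = 2\{1-\cos(\bX,\bY)\}$. Since $\cos(\bX,\bY)\geq 1-\alpha$, this gives directly $\|\hat{\bX}-\hat{\bY}\|_2\leq\sqrt{2\alpha}$, i.e.\ $\hat{\bX}$ sits within Euclidean distance $\sqrt{2\alpha}$ of $\hat{\bY}$. The final step is the decomposition $\cos(\bX,\bZ)=\hat{\bX}\trans\hat{\bZ}=(\hat{\bX}-\hat{\bY})\trans\hat{\bZ}+\hat{\bY}\trans\hat{\bZ}$: the second summand equals $\cos(\bY,\bZ)\leq\beta$ by the second hypothesis, while Cauchy--Schwarz bounds the first by $\|\hat{\bX}-\hat{\bY}\|_2\,\|\hat{\bZ}\|_2=\|\hat{\bX}-\hat{\bY}\|_2\leq\sqrt{2\alpha}$. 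Adding the two bounds yields $\cos(\bX,\bZ)\leq\beta+\sqrt{2\alpha}$, as claimed.

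I do not anticipate a genuine obstacle here: the statement is just a quantitative triangle inequality for the chordal distance on the sphere, and the Euclidean-distance-plus-Cauchy--Schwarz route delivers the stated constant with no trigonometric manipulation. The only points requiring mild care are the normalization to unit length (so that all three cosines become honest inner products) and tracking that the bound $\|\hat{\bX}-\hat{\bY}\|_2\leq\sqrt{2\alpha}$ is exactly what feeds the additive $\sqrt{2\alpha}$ term. Phrasing the argument instead through angles and $\arccos$ would introduce concavity estimates for the inverse cosine and a worse constant, so I would deliberately avoid that route.
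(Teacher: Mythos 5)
Your proof is correct, and it reaches the stated constant by a genuinely different decomposition than the paper's. The paper normalizes all three vectors, rotates so that $\bY=(1,0,\dots,0)$, and splits the inner product coordinate-wise as $\cos(\bX,\bZ)=X_1Z_1+\sum_{i\geq 2}X_iZ_i$, bounding the tail by Cauchy--Schwarz as $(1-X_1^2)^{1/2}(1-Z_1^2)^{1/2}\leq\sqrt{2\alpha}$ and the leading term via $X_1Z_1\leq Z_1\leq\beta$. You instead write $\widehat{\bX}=\widehat{\bY}+(\widehat{\bX}-\widehat{\bY})$, use the polarization identity to get $\|\widehat{\bX}-\widehat{\bY}\|_2\leq\sqrt{2\alpha}$, and apply Cauchy--Schwarz once to the remainder. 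The two arguments are cousins (same normalization, same single use of Cauchy--Schwarz, same constant), but your version has a concrete advantage: the term you bound by $\beta$ is exactly $\widehat{\bY}\trans\widehat{\bZ}=\cos(\bY,\bZ)$, whereas the paper's step $X_1Z_1\leq Z_1$ implicitly requires $Z_1=\cos(\bY,\bZ)\geq 0$ (it can fail for $Z_1<0$, e.g.\ $X_1=Z_1=-1$ breaks the paper's intermediate chain even though the lemma's conclusion still holds). Your route is therefore slightly more robust and coordinate-free, at no cost in the constant; the paper's route makes the geometric meaning of the two pieces (product of cosines plus a cross term) a bit more explicit.
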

		\begin{proof}[Proof]
			Without loss of generality, assume $\|\bX\|_2=\|\bY\|_2=\|\bZ\|_2=1$ and $\bY=(1,0,\dots,0)$. Then, we know $X_1\geq 1-\alpha$ and $Z_1\leq \beta$. Now we consider $\cos(\bX,\bZ)$. We have
			\begin{equation*}
			\begin{split}
			\cos(\bX,\bZ)=\sum_{i=1}^p X_iZ_i
			\leq X_1Z_1 + (\sum_{i=2}^p X_i^2)^{1/2}(\sum_{i=2}^p Z_i^2)^{1/2}
			= X_1Z_1+(1-X_1^2)^{1/2}(1-Z_1)^{1/2}.
			\end{split}
			\end{equation*}
			By assumptions on $X_1$ and $Z_1$, we further have
			\begin{equation*}
			X_1Z_1+(1-X_1^2)^{1/2}(1-Z_1)^{1/2}
			\leq Z_1 + (1-X_1^2)^{1/2}
			\leq \beta+(1-(1-\alpha)^2)^{1/2}
			\leq \beta + \sqrt{2\alpha}.
			\end{equation*}
			Combining the above two displays, we completes the proof.
		\end{proof}
		
		\begin{lemma}\label{remark:pi}
			To guarantee that $\bPi\Xbb$ is still on the hypersphere, $\bPi$ has to satisfy the following inequality
			\begin{equation*}
			\frac{1}{\sqrt{n_k}}\leq \|\bPi_{i\cdot}\|_2\leq \frac{1}{\sigma_{n_k}(\Xbb_{[G_k,:]})},\text{ for all $i\in G_k$}.
			\end{equation*}		
		\end{lemma}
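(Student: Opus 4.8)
The plan is to reduce the unit-norm requirement $\|(\bPi_{i\cdot}\Xbb)\trans\|_2=1$ to a statement about the single block $\Xbb_{[G_k,:]}$, and then sandwich $\|\bPi_{i\cdot}\|_2$ between the extreme singular values of that block. First I would exploit the block-diagonal structure $\bPi=\diag\{\bPi^1,\dots,\bPi^K\}$: for $i\in G_k$ the row $\bPi_{i\cdot}$ is supported on the columns indexed by $G_k$, so $\bPi_{i\cdot}\Xbb=\bw\,\bA$, where $\bw\in\mathcal{R}^{n_k}$ collects the entries of $\bPi_{i\cdot}$ on $G_k$ (so that $\|\bw\|_2=\|\bPi_{i\cdot}\|_2$) and $\bA=\Xbb_{[G_k,:]}\in\mathcal{R}^{n_k\times p}$. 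The spherical constraint then reads $\|\bw\,\bA\|_2=1$.

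The key step is the two-sided singular value bound for the linear map $\bw\mapsto \bw\,\bA$. Since $n_k<p$, the matrix $\bA$ has $n_k$ singular values $\sigma_1(\bA)\ge\cdots\ge\sigma_{n_k}(\bA)$; passing to the singular value decomposition of $\bA$ and using the invariance of the $\ell_2$ norm under the orthogonal factors yields
\begin{equation*}
\sigma_{n_k}(\bA)\,\|\bw\|_2 \;\le\; \|\bw\,\bA\|_2 \;\le\; \sigma_1(\bA)\,\|\bw\|_2.
\end{equation*}
Combining the left inequality with $\|\bw\,\bA\|_2=1$ gives $\|\bPi_{i\cdot}\|_2=\|\bw\|_2\le 1/\sigma_{n_k}(\Xbb_{[G_k,:]})$, which is precisely the claimed upper bound.

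For the lower bound I would control $\sigma_1(\bA)$ from above by $\sqrt{n_k}$. Because the rows of $\Xbb$ lie on $\hypersphere$, each $\bX_i$ has unit length, so $\sigma_1(\bA)\le\|\bA\|_F=\bigl(\sum_{i\in G_k}\|\bX_i\|_2^2\bigr)^{1/2}=\sqrt{n_k}$. The right inequality above then yields $1=\|\bw\,\bA\|_2\le\sigma_1(\bA)\,\|\bw\|_2\le\sqrt{n_k}\,\|\bPi_{i\cdot}\|_2$, and hence $\|\bPi_{i\cdot}\|_2\ge 1/\sqrt{n_k}$.

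I do not expect a substantive obstacle: the argument is essentially a packaging of the least-singular-value/operator-norm sandwich together with the row-unit-length property of $\Xbb$. The only points requiring a little care are the reduction to the block $\Xbb_{[G_k,:]}$ — which rests on $\bPi_{i\cdot}$ being supported on $G_k$, so that the relevant singular values are those of $\Xbb_{[G_k,:]}$ rather than of the full $\Xbb$ — and the implicit assumption that $\bA$ has full row rank (equivalently $\sigma_{n_k}(\Xbb_{[G_k,:]})>0$), so that the upper bound is finite and the constraint $\|\bw\,\bA\|_2=1$ is attainable.
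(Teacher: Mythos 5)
Your proposal is correct and follows essentially the same route as the paper's proof: restrict $\bPi_{i\cdot}$ to its support on $G_k$, sandwich $\|\bPi_{[i,G_k]}\Xbb_{[G_k,:]}\|_2$ between $\sigma_{n_k}(\Xbb_{[G_k,:]})\|\bPi_{[i,G_k]}\|_2$ and $\sigma_1(\Xbb_{[G_k,:]})\|\bPi_{[i,G_k]}\|_2$, and bound $\sigma_1(\Xbb_{[G_k,:]})\leq\sqrt{n_k}$ via the unit-length rows. Your added remarks on full row rank of $\Xbb_{[G_k,:]}$ and the explicit reduction to the block are sensible points of care that the paper leaves implicit, but the argument is the same.
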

		\begin{proof}[Proof]
			The spherical requirement is $1=\|\bPi\addstar_{i\cdot}\Xbb \|_2=\|\bPi\addstar_{[i,G_k]}\Xbb _{[G_k,:]}\|_2.$		On the other hand, we know 
			\begin{equation*}
			\|\bPi\addstar_{[i,G_k]}\|_2 \sigma_{n_k}(\Xbb _{[G_k,:]}) \leq\|\bPi\addstar_{[i,G_k]}\Xbb _{[G_k,:]}\|_2\leq \sigma_1(\Xbb _{[G_k,:]})\|\bPi\addstar_{[i,G_k]}\|_2\leq \sqrt{n_k}\|\bPi\addstar_{[i,G_k]}\|_2.
			\end{equation*}
			Thus, 
			\begin{equation*}
			\frac{1}{n_k}\leq \|\bPi\addstar_{[i,G_k]}\|_2
			\leq \frac{1}{\sigma_{n_k}(\Xbb _{[G_k,:]})}
			\end{equation*}
			
		\end{proof}

		\section{Tail analysis of the vMF distribution} \label{app-tail}
		
		\begin{prop}\label{lemma:tail-bound-von-miss}
			Let $\bmu\in \hypersphere$, $\bZ\sim \text{vMF}_{\bmu,\kappa,p}$, and $\bepsilon = \bZ-\bmu$. Then, for $p\geq 4$ and $\frac{p-1}{2\kappa}\leq \delta\leq 2$, the following statements hold.
			\begin{enumerate}
				\item $P(\bepsilon\trans \bmu\leq -\delta)\leq \exp\{
				- \delta\kappa  + \frac{1}{2}(p-1)(\log \kappa+1) - \frac{1}{2}(p-1)\log (\frac{\frac{1}{2}(p-1)}{\delta})
				\}$;
				\item $P(\|\bepsilon\|_2\geq \sqrt{2\delta})\leq \exp\{
				- \delta\kappa  + \frac{1}{2}(p-1)(\log \kappa+1) - \frac{1}{2}(p-1)\log (\frac{\frac{1}{2}(p-1)}{\delta})
				\}$.
				\item If we have $Q_1,...,Q_m$ be i.i.d copies of $\|\bepsilon\|_2^2$, then for $s\geq 0$, 
				\begin{equation}\label{eq:sum-tail}
				P\left\{\sum_{i=1}^m Q_i\geq \frac{m(p-1)}{\kappa} + \frac{m(p-1)}{\kappa}s \right\}
				\leq \exp\left\{
				- \frac{m(p-1)}{2}(s-\log (1+s))\right\}.
				\end{equation}
				\item Let $\{Q_{k,l},k=1, ..., K, l = 1, ..., n_k\}$ be $n = \sum_{k=1}^K n_k$ i.i.d realizations of
				$\|\bepsilon\|_2^2$. Then, for each $t>0$,
				\begin{equation}\label{eq:max-sum-tail}
				P\left\{\max_{1\leq k\leq K }\sum_{l=1}^{n_k} Q_{k,l}\geq 
				\frac{n_{\max}(p-1)}{\kappa}(1+s_t)\right\}\leq e^{-t},
				\end{equation}
				where $n_{\max} = \max_{1 \le k \le K} n_k$, $s_t\geq 0$ is the unique solution to 
				$s_t-\log(1+s_t)=\{2(\log K+t)\}/\{(p-1)n_{\min}\}$
				and $n_{\min} = \min_{1 \le k \le K} n_k$. 
				{ In particular, if $4\log K\le (p-1)n_{\min}$ and $t=\log K$, then $s_t\leq 3$ and}
				\begin{equation}
				P\left(\max_{1\leq k\leq K }\sum_{l=1}^{n_k} Q_{k,l}\geq 
				\frac{4n_{\max}(p-1)}{\kappa}\right)\leq \frac{1}{K}.
				\end{equation}
			\end{enumerate}

		\end{prop}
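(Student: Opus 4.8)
The plan is to treat all four parts through a single Chernoff / moment-generating-function (MGF) argument, the only non-elementary input being a sharp lower bound on the logarithmic derivative of the modified Bessel function. First I would record the identity that collapses (1) and (2) into one statement. Since $\|\bZ\|_2=\|\bmu\|_2=1$, we have $\|\bepsilon\|_2^2=\|\bZ-\bmu\|_2^2=2(1-\bZ\trans\bmu)=-2\bepsilon\trans\bmu$, so the events $\{\bepsilon\trans\bmu\leq-\delta\}$ and $\{\|\bepsilon\|_2\geq\sqrt{2\delta}\}$ coincide, and it suffices to bound $P(\bZ\trans\bmu\leq 1-\delta)$. Taking $\bmu=(1,0,\dots,0)$ without loss of generality, this is $P(Z_1\leq 1-\delta)$, and the MGF identity $M_{Z_1}(\lambda)=\Ebb e^{\lambda Z_1}=C_p(\kappa)/C_p(\kappa+\lambda)$, obtained exactly as in the proof of Lemma~\ref{lemma:eta-bound} by recognizing the integrand as an unnormalized vMF density, is the engine for everything that follows.

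Next I would establish the key analytic inequality. Writing $C_p(\kappa)/C_p(\kappa-\lambda)=(\kappa/(\kappa-\lambda))^{p/2-1}\,B_{p/2-1}(\kappa-\lambda)/B_{p/2-1}(\kappa)$ and integrating the bound $B_{p/2-1}'(t)/B_{p/2-1}(t)>1-1/(2t)$ for $p\geq4$, the same inequality from \cite{baricz2010bounds} used in Lemma~\ref{lemma:eta-bound}, over $t\in[\kappa-\lambda,\kappa]$, I obtain $B_{p/2-1}(\kappa-\lambda)/B_{p/2-1}(\kappa)\leq e^{-\lambda}(\kappa/(\kappa-\lambda))^{1/2}$, whence $M_{Z_1}(-\lambda)\leq e^{-\lambda}(\kappa/(\kappa-\lambda))^{(p-1)/2}$ for $0<\lambda<\kappa$. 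A Chernoff bound then gives $\log P(Z_1\leq 1-\delta)\leq-\lambda\delta+\frac{p-1}{2}\log\frac{\kappa}{\kappa-\lambda}$, and minimizing the right-hand side over $\lambda$ produces the interior optimizer $\lambda^\star=\kappa-\frac{p-1}{2\delta}$, which lies in $(0,\kappa)$ precisely under the hypothesis $\delta\geq\frac{p-1}{2\kappa}$. Substituting $\lambda^\star$ reproduces the stated exponent $-\delta\kappa+\frac{1}{2}(p-1)(\log\kappa+1)-\frac{1}{2}(p-1)\log\frac{(p-1)/2}{\delta}$, proving (1) and (2).

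For (3) I would apply the same MGF bound to $Q=\|\bepsilon\|_2^2=2(1-Z_1)$: for $0<\theta<\kappa/2$, $\Ebb e^{\theta Q}=e^{2\theta}M_{Z_1}(-2\theta)\leq(1-2\theta/\kappa)^{-(p-1)/2}$, which is exactly the MGF of a Gamma variable of shape $(p-1)/2$ and scale $2/\kappa$. Hence $\sum_{i=1}^m Q_i$ is sub-Gamma, dominated by the MGF of a Gamma with shape $m(p-1)/2$, scale $2/\kappa$, and mean $m(p-1)/\kappa$, and the textbook Gamma Chernoff tail, optimizing at $\theta=\frac{\kappa}{2}\cdot\frac{s}{1+s}$, delivers \eqref{eq:sum-tail} with its characteristic $s-\log(1+s)$ rate. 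For (4) I would use the group-wise version of (3): because $n_{\max}\geq n_k$ and $g(u)=u-\log(1+u)$ is increasing, the threshold $\frac{n_{\max}(p-1)}{\kappa}(1+s_t)$ exceeds $\frac{n_k(p-1)}{\kappa}(1+s_t)$, so each group obeys $P(\sum_l Q_{k,l}\geq\frac{n_{\max}(p-1)}{\kappa}(1+s_t))\leq\exp\{-\frac{n_{\min}(p-1)}{2}g(s_t)\}$; a union bound over the $K$ groups together with the defining equation $g(s_t)=2(\log K+t)/\{(p-1)n_{\min}\}$ collapses the total to $Ke^{-(\log K+t)}=e^{-t}$. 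The final display follows by noting that $4\log K\leq(p-1)n_{\min}$ forces $g(s_t)\leq1<g(3)=3-\log4$, hence $s_t\leq3$ and $1+s_t\leq4$.

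The main obstacle is the Bessel-ratio inequality $B_{p/2-1}(\kappa-\lambda)/B_{p/2-1}(\kappa)\leq e^{-\lambda}(\kappa/(\kappa-\lambda))^{1/2}$: it is what upgrades the naive exponent $p/2-1$ to the sharp $(p-1)/2$ and, crucially, what makes the optimizer $\lambda^\star$ feasible exactly on the stated range $\delta\geq(p-1)/(2\kappa)$. Everything else is Chernoff bookkeeping and monotonicity of $g$; the one point requiring genuine care is the conservative substitution of $n_{\max}$ in the threshold and $n_{\min}$ in the exponent in part (4), which must be arranged so that a single $s_t$ serves all groups simultaneously.
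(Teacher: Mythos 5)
Your proposal is correct and follows essentially the same route as the paper: reduce to $Z_1$ via the identity $\|\bepsilon\|_2^2=2(1-Z_1)$, compute the MGF as a ratio of normalizing constants, control the Bessel ratio by the Baricz bound (you re-derive the ratio form by integrating the logarithmic-derivative bound, the paper cites it directly), optimize the Chernoff exponent at $\lambda^\star=\kappa-\tfrac{p-1}{2\delta}$, and finish part (4) by a per-group application of part (3) plus a union bound with the conservative $n_{\max}$/$n_{\min}$ placement. Your recasting of part (3) as a sub-Gamma tail is only a repackaging of the paper's direct substitution $\delta=\tfrac{p-1}{2\kappa}(1+s)$; the optimizer and exponent are identical.
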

		
		\begin{remark}
			The second tail bound implies that $ \|\bepsilon\|_2^2 =O_p( p/\kappa )$ and $\bepsilon\trans \bmu = O_p( p/\kappa)$. 
		\end{remark}
		\begin{remark}
			For $1\leq p\leq 3$, less sharp tail bounds can also be developed.
		\end{remark}
		
		\begin{proof}[Proof of Proposition~\ref{lemma:tail-bound-von-miss}]
			Without loss of generality, we assume $\bmu=(1,0,...,0)$. Then, $\bepsilon = (Z_1 - 1,0,\dots,0)$, and $P(\bepsilon\trans \bmu\leq -\delta) = P( 1-Z_1\geq \delta )$.
			Using the Chernoff bound \citep{chernoff1952measure}, we can see that for all $\lambda>0$,
			\begin{equation}\label{eq:upper-projected-von-miss}
			P\left( 1-Z_1\geq \delta\right)
			\leq e^{-\lambda \delta}e^{\lambda} \Ebb(e^{-\lambda Z_1}) = e^{\lambda(1-\delta)}\Ebb(e^{-\lambda Z_1}).
			\end{equation}
			We proceed to calculate the moment generating function $\Ebb(e^{-\lambda Z_1})$.
			Let $f_{Z_1}(z_1)$ be the density function of $Z_1$.
			According to the density function of $\bZ$, we have the marginal density,
			\begin{equation}
			f_{Z_1}(z_1)= C_p(\kappa)\exp(\kappa z_1) \omega_{p-2}\left(\sqrt{1-z_1^2}\right),
			\end{equation}
			where $\omega_d(r)$ denotes the surface area of a $d-1$-dimensional sphere (living in a $d$-dimensional space) with the radius $r$, and $C_p(\kappa)= \kappa^{p/2-1}/\{(2\pi)^{p/2}B_{p/2-1}(\kappa)\}$ is the normalizing constant for vMF distribution, and $B_{\nu}(x)$ denotes the modified Bessel function. Then, 
			\begin{equation}
			\begin{split}
			\Ebb(e^{-\lambda Z_1})
			=& \int_{-1}^1 e^{-\lambda z_1} C_p(\kappa)\exp(\kappa z_1) \omega_{p-2}(\sqrt{1-z_1^2}) d z_1\\
			= & \frac{C_p(\kappa)}{C_p(\kappa-\lambda)} =
			\left(\frac{\kappa}{\kappa-\lambda}\right)^{\nu}\frac{B_{\nu}(\kappa-\lambda)}{B_{\nu}(\kappa)},
			\end{split}
			\end{equation}
			where we let $\nu=\frac{p}{2}-1$.
			Combining this with \eqref{eq:upper-projected-von-miss}, we have
			\begin{equation}\label{eq:one-step-tail-project}
			P( 1-Z_1\geq \delta)
			\leq e^{-\lambda \delta}e^{\lambda}\left(\frac{\kappa}{\kappa-\lambda}\right)^{\nu}\frac{B_{\nu}(\kappa-\lambda)}{B_{\nu}(\kappa)}.
			\end{equation}
			We use the following upper bound of $\frac{B_{\nu}(\kappa-\lambda)}{B_{\nu}(\kappa)}$, which is the equation (2.6) in \cite{baricz2010bounds}. For all $\nu\geq \frac{1}{2}$ and $0<x<y$,
			\begin{equation*}
			\frac{B_{\nu}(x)}{B_{\nu}(y)}
			<e^{x-y}\left(\frac{y}{x}\right)^{1/2}.
			\end{equation*}
			Setting $x=\kappa-\lambda$ and $y=\kappa$ in the above display and combining it with \eqref{eq:one-step-tail-project}, we have 
			\begin{equation*}
			P(1-Z_1\geq \delta)
			\leq \inf_{0\leq \lambda\leq \kappa}e^{-\lambda \delta}\left(\frac{\kappa}{\kappa-\lambda}\right)^{\nu+\frac{1}{2}}.
			\end{equation*}
			If $\kappa-\frac{\nu+\frac{1}{2}}{\delta}\geq 0$, 
			\begin{equation*}
			\inf_{0\leq \lambda\leq \kappa}e^{-\lambda \delta}\left(\frac{\kappa}{\kappa-\lambda}\right)^{\nu+\frac{1}{2}}
			= \exp\left\{
			- \delta\kappa  + (\nu+\frac{1}{2})(\log \kappa+1) - (\nu+\frac{1}{2})\log (\frac{\nu+\frac{1}{2}}{\delta})
			\right\},
			\end{equation*}
			where the minimum is achieved at $\lambda=\kappa-\frac{\nu+\frac{1}{2}}{\delta}$.
			Summarizing the above results, we have
			\begin{equation}\label{eq:bound-one-error}
			P(\bepsilon\trans \bmu\leq -\delta)\leq \exp\left\{
			- \delta\kappa  + \frac{1}{2}(p-1)(\log \kappa+1) - \frac{1}{2}(p-1)\log (\frac{\frac{1}{2}(p-1)}{\delta})
			\right\}
			\end{equation}
			for $p\geq 4$ and $\delta\geq\frac{p-1}{2\kappa}$.
			The tail bound of $\|\bepsilon\|_2$ is  straightforward based on the above inequality, because $\|\bepsilon\|_2^2 = 2(1-\bmu\trans \bZ)$.
			
			To establish \eqref{eq:sum-tail}, we note that from a similar Chernoff bound, 
			\begin{equation*}
			P(\sum_{i=1}^m Q_i\geq 2m\delta)
			\leq \inf_{\lambda\geq 0}\left(e^{-\lambda \delta}(\frac{\kappa}{\kappa-\lambda})^{\nu+\frac{1}{2}}\right)^m.
			\end{equation*}
			for $\delta\geq\frac{p-1}{2\kappa}$. According to \eqref{eq:bound-one-error}, the above display is simplified as
			\begin{equation*}
			P\left(\sum_{i=1}^m Q_i\geq 2m\delta\right)
			\leq \exp\left\{ m\left(
			- \delta\kappa  + \frac{1}{2}(p-1)(\log \kappa+1) - \frac{1}{2}(p-1)\log (\frac{\frac{1}{2}(p-1)}{\delta})\right)
			\right\}.
			\end{equation*}
			Let $\delta=\frac{p-1}{2\kappa}(1+s)$ in the above display for $s\geq 0$ and simplifying it, we arrive at
			\begin{equation}
			P\left(\sum_{i=1}^m Q_i\geq 2m\frac{p-1}{2\kappa}(1+s)\right)
			\leq \exp\left\{ -\frac{m(p-1)}{2}(s-\log(1+s)).
			\right\}.
			\end{equation}

			For \eqref{eq:max-sum-tail}, we first observe that for each $k$, $1\leq k\leq K$, according to \eqref{eq:sum-tail}, we have
			\begin{equation*}
			P\left(\sum_{l=1}^{n_k}Q_{k,l}\geq\frac{n_k(p-1)}{\kappa}(1+s_t)\right)
			\leq\exp\left\{-\frac{n_k}{n_{\min}}(\log K+t)\right\}
			\leq e^{-t}/K.
			\end{equation*}
			This further gives
			\begin{equation}
			\label{vmt1}
			\begin{split}
			P\left(\sum_{l=1}^{n_k}Q_{k,l}\geq\frac{n_{\max}(p-1)}{\kappa}(1+s_t)\right)
			\leq& K^{-1}e^{-t}.
			\end{split}
			\end{equation}
			By the union bound, we have
			\begin{equation*}
			P\left\{\max_{1\leq k\leq K}\sum_{l=1}^{n_k}Q_{k,l}\geq\frac{n_{\max}(p-1)}{\kappa}(1+s_t)\right\}
			\leq \sum_{i=1}^K P\left\{\sum_{j=1}^{n_k}Q_{k,l}\geq\frac{n_{\max}(p-1)}{\kappa}(1+s_t)\right\}
			\leq e^{-t},
			\end{equation*}
			where the last inequality is obtained by \eqref{vmt1}.
			This completes the proof for \eqref{eq:max-sum-tail}.
			
			If $4\log K \le (p-1)n_{\min}$ and $t=\log K$, then 
			$s_t-\log(1+s_t)\leq 1$. It follows that $s_t<3$ and
			\begin{equation*}
			P\left\{\max_{1\leq k\leq K}\sum_{j=1}^{n_k}Q_{k,l}\geq\frac{4n_{\max}(p-1)}{\kappa}\right\}
			\leq 	P\left\{\max_{1\leq k\leq K}\sum_{j=1}^{n_k}Q_{k,l}\geq\frac{n_{\max}(p-1)(1+s_t)}{\kappa}\right\}\\
			\leq e^{-t} = \frac{1}{K}.
			\end{equation*}
			
		\end{proof}
		
		\section{The iterative spherical regression mapping (iSphereMAP) algorithm}\label{supp:ispheremap}
		Below we detail our proposed iterative spherical regression mapping algorithm. Although the iSphereMAP procedure can iterate until convergence, we find that the estimators stabilize after three steps. 
		\begin{algorithm}[!htp]
			\small
			\caption{The iSphereMAP algorithm.}
			\textbf{Input}  spherical data $\Xbb$ and $\Ybb$ ordered by group, group information $G_k, k=1,\dots,K$ that defines data within a group $\Xbb_{[G_k,:]}$ and $\Ybb_{[G_k,:]}$, and the block-diagonal structure in the mapping matrix $\bPi=\text{diag}\{\bPi^1,\dots,\bPi^K\}$, tuning parameter $\lambda_n$ selected by cross-validation optimizing the mean squared error for prediction of $\Ybb$.		\vspace{3mm}\\
			\noindent\textbf{Initialize} the mapping matrix $\bPihat\supone=\Ibb$.\vspace{3mm}\\
			\noindent \textbf{Three-step procedure} where steps 2 and 3 can be iterated
			\begin{enumerate}
				\item[] \textbf{Step 1: spherical regression} to estimate the orthogonal translation matrix as
				$$\Wbbhat\supone = \argmin{\Wbb: \Wbb\Wbb\trans  = \Ibb_p}{\| \Ybb- \Xbb \Wbb\|_F^2}.$$
				\item[] \textbf{Step 2: map data} to obtain the mapping matrix $\bPihat\suptwo$ by the following two substeps
				\begin{enumerate}
					\item[(1)] \textbf{Ordinary least squares} to estimate an initial mapping matrix $\Initialpi = \diag\{\Initialpi^1, ..., \Initialpi^K\}$, where for each block $$\Initialpi^k=
					\Ybb_{[G_k,:]} (\Xbb_{[G_k,:]}\Wbbhat\supone)\trans (\Xbb_{[G_k,:]}\Xbb_{[G_k,:]}\trans )^{-1}.$$
					\item[(2)]	\textbf{Hard-thresholding} as follows $$\bPihat\suptwo_{i\cdot}=   \Ibb_{\jtilde_i\cdot} \mathbbm{1}(\widetilde{\beta}_i\leq\lambda_n) + 
					\frac{\Initialpi_{i\cdot}}{\|(\Initialpi_{i\cdot}\Xbb)\trans\|_2} \mathbbm{1}(\widetilde{\beta}_i > \lambda_n),$$ where $\widetilde{\beta}_i=1-\max_{j:j\sim i}\; \cos(\Initialpi_{i\cdot},\Ibb_{j\cdot})$ measures the distance between $\Initialpi_{i\cdot}$ and a \onetoone mapping $\Ibb_{j\cdot}$.
				\end{enumerate}
				\item[] \textbf{Step 3: refined spherical regression} using matched data to update the translation matrix as
				$$\Wbbhat\suptwo = \argmin{\Wbb: \Wbb\Wbb\trans  = \Ibb_p}{\| \Ybb_{\ii}- \Xbb_{\ii} \Wbb\|_F^2},$$ where $\mm(\bPihat\suptwo)=\{i \in \Gsc: \bPihat\suptwo_{i\cdot}=\Ibb_{i\cdot} \}$ indexes the set of matched units as determined by $\bPihat\suptwo$.
			\end{enumerate}
			\textbf{Output} Mapping matrix $\bPihat\suptwo$, orthogonal translation matrix $\Wbbhat\suptwo$.
		\end{algorithm}

		\section{Proof of theorems and corollaries} \label{app-theorem}
		\subsection{Proof of Theorem \ref{thm:w-hat-consistent}}\label{app-theorem-one}
		\begin{proof}[Proof of Theorem~\ref{thm:w-hat-consistent}]
			Write $\bepsilon_i = \bY_i - \gamma_{\kappa,p}\Wbb\trans(\bPi_{i\cdot}  \Xbb)\trans$ and 
			$\Vbb=\Ybb -E(\Ybb )=(\bepsilon_1, ..., \bepsilon_n)\trans$, where
			\begin{equation*}
			\gamma_{\kappa,p}= \frac{B'_{p/2-1}(\kappa)}{2B_{p/2-1}(\kappa)}-\frac{p/2-1}{\kappa}.
			\end{equation*}
			We have
			\begin{equation}
			\Ybb =\gamma_{\kappa,p}\bPi\addstar  \Xbb   \Wbb \addstar +\Vbb
			= \gamma_{\kappa,p}\Xbb   \Wbb \addstar  + \gamma_{\kappa,p}(\bPi\addstar -\Ibb)\Xbb   \Wbb \addstar  + \Vbb.
			\end{equation}
			Recall that we write $\Usc(A)=A(A\trans A)^{-1/2}$ for the polar decomposition of $A$. Then, by definition, 
			\begin{equation}\label{eq:w-hat-polar}
			\Wbbhat\supone=\Usc(\Xbb\trans \Ybb )= \Usc(\gamma_{\kappa,p} \Xbb\trans \Xbb  \Wbb \addstar + \Delta), 
			\quad \mbox{where} \ \Delta =\gamma_{\kappa,p}\Xbb\trans (\bPi\addstar -\Ibb)\Xbb   \Wbb \addstar  + \Xbb\trans \Vbb.
			\end{equation}
			On the other hand, since $\Xbb\trans \Xbb  $ is positive definite with smallest eigenvalue $\sigma_p(\Xbb  )>0$, 
			\begin{equation}\label{eq:w-polar}
			\Usc(\gamma_{\kappa,p}\Xbb\trans \Xbb  \Wbb \addstar) = \Xbb\trans \Xbb  \Wbb \addstar \Wbb \addstartop(\Xbb\trans \Xbb  )^{-1}\Wbb \addstar  = \Wbb \addstar .
			\end{equation}
			\eqref{eq:w-hat-polar} and \eqref{eq:w-polar} together imply 
			\begin{equation}
			\Wbbhat\supone-\Wbb \addstar = \Usc(  \gamma_{\kappa,p}\Xbb\trans \Xbb  \Wbb \addstar + \Delta) - \Usc(\gamma_{\kappa,p} \Xbb\trans \Xbb  \Wbb \addstar).
			\end{equation}
			We proceed to obtain an upper bound on $\|\Usc(\gamma_{\kappa,p} \Xbb\trans \Xbb  \Wbb \addstar + \Delta) - \Usc( \gamma_{\kappa,p}\Xbb\trans \Xbb  \Wbb \addstar)\|$, 
			where $\|\cdot\|$ denotes a unitary invariant matrix norm. We use results in Lemma \ref{lemma-mathias}, which is a slight modification of  Theorem 2.4 in \cite{mathias1993perturbation}. 
			\begin{lemma}[Modification of Theorem 2.4 in \cite{mathias1993perturbation}] \label{lemma-mathias}
				Let $A,\Delta$ be two $p\times p$ real matrices. Assume that $\sigma_p(A)-\sigma_1(\Delta)>0$. Then, for any unitary invariant norm $\|\cdot\|$,
				\begin{equation*}
				\|\Usc(A+\Delta)-\Usc(A)\|\leq 2
				[\sigma_p(A) + \sigma_{p-1}(A)-2\sigma_1(\Delta)]^{-1}\|\Delta\|.
				\end{equation*}
			\end{lemma}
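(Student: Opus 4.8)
\emph{Plan.} The statement is a perturbation bound for the orthogonal polar factor $\Usc(\cdot)$, so I would prove it by adapting the Sylvester-equation argument underlying Theorem~2.4 of \cite{mathias1993perturbation}. First I would note that the hypothesis $\sigma_p(A)-\sigma_1(\Delta)>0$ makes everything well defined: by Weyl's inequality $\sigma_p(A+\Delta)\geq \sigma_p(A)-\sigma_1(\Delta)>0$, so both $A$ and $A+\Delta$ are nonsingular and their polar factors $U:=\Usc(A)$ and $\tilde U:=\Usc(A+\Delta)$ are the unique orthogonal matrices in the decompositions $A=UH$ and $A+\Delta=\tilde U\tilde H$, with $H=(A\trans A)^{\half}$ and $\tilde H=((A+\Delta)\trans(A+\Delta))^{\half}$ positive definite. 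Setting $G=U\trans\tilde U$ (orthogonal) and using unitary invariance of $\|\cdot\|$, I have $\|\tilde U-U\|=\|G-\Ibb\|$, so it suffices to bound $\|G-\Ibb\|$.

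\emph{Fundamental identity.} Next I would produce an exact linear relation between $G-\Ibb$ and $\Delta$. From $U\trans A=H$ and $U\trans(A+\Delta)=G\tilde H$, subtraction gives $U\trans\Delta=G\tilde H-H$; subtracting the transpose of this identity and using $H=H\trans$ yields the skew-symmetric relation
\begin{equation*}
U\trans\Delta-\Delta\trans U = G\tilde H-\tilde H G\trans .
\end{equation*}
Writing $G=\Ibb+W$ and splitting $W=W_a+W_s$ into skew and symmetric parts, the right-hand side becomes $(W_a\tilde H+\tilde H W_a)+(W_s\tilde H-\tilde H W_s)$. The orthogonality $G\trans G=\Ibb$ forces $W+W\trans=-W\trans W$, so $W_s$ is of second order in $\|W\|$, and the leading balance is the Sylvester equation $W_a\tilde H+\tilde H W_a=(U\trans\Delta-\Delta\trans U)-(W_s\tilde H-\tilde H W_s)$.

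\emph{Separation constant and Weyl step.} The crux is that the operator $X\mapsto \tilde H X+X\tilde H$, restricted to skew-symmetric $X$, is invertible: in an eigenbasis of $\tilde H$ (eigenvalues $\sigma_1(A+\Delta)\geq\cdots\geq\sigma_p(A+\Delta)$) it acts as Schur multiplication by $\sigma_i(A+\Delta)+\sigma_j(A+\Delta)$, and on the off-diagonal entries carrying a skew matrix this multiplier is at least $\sigma_{p-1}(A+\Delta)+\sigma_p(A+\Delta)$. Hence $\|W_a\|\leq[\sigma_{p-1}(A+\Delta)+\sigma_p(A+\Delta)]^{-1}$ times the norm of the right-hand side, for any unitarily invariant norm. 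Combining this with $\|U\trans\Delta-\Delta\trans U\|\leq 2\|\Delta\|$ (the origin of the factor $2$ in the numerator), absorbing the second-order remainder, and applying Weyl once more in the form $\sigma_{p-1}(A+\Delta)+\sigma_p(A+\Delta)\geq \sigma_{p-1}(A)+\sigma_p(A)-2\sigma_1(\Delta)$, would give exactly the denominator $\sigma_p(A)+\sigma_{p-1}(A)-2\sigma_1(\Delta)$ of the lemma. Equivalently, one may quote the square-case bound $\|\tilde U-U\|\leq 2\|\Delta\|/[\sigma_{p-1}(A+\Delta)+\sigma_p(A+\Delta)]$ of \cite{mathias1993perturbation} and perform only the final Weyl substitution; this is the ``slight modification'' named in the title.

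\emph{Main obstacle.} The delicate part is making the separation-constant argument rigorous for a general unitarily invariant norm rather than only the spectral or Frobenius norm, i.e.\ bounding the skew-restricted inverse of $X\mapsto\tilde H X+X\tilde H$ as a Schur multiplier (a Davis--Kahan / double-operator-integral estimate), while simultaneously disposing of the second-order symmetric term $W_s\tilde H-\tilde H W_s$ so that the conclusion is genuinely non-asymptotic rather than first order in $\Delta$. Running the whole argument with $\tilde H$ (rather than $H$) is what allows Weyl's inequality to absorb these higher-order effects into the single correction $-2\sigma_1(\Delta)$, and this is the step I would treat most carefully.
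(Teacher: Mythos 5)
The paper does not prove this lemma at all: it is imported verbatim (up to an application of Weyl's inequality) from Theorem~2.4 of \cite{mathias1993perturbation}, so there is no in-paper argument to compare against. Your reconstruction of the underlying mechanism is essentially the right skeleton — the reduction to $\|U\trans\widetilde U-\Ibb\|$ by unitary invariance, the exact identity $U\trans\Delta-\Delta\trans U=G\widetilde H-\widetilde H G\trans$, the skew/symmetric splitting with $W_s=-\frac{1}{2}W\trans W$, the separation constant $\sigma_{p-1}+\sigma_p$ for the Sylvester operator restricted to skew matrices (this is exactly where the real case improves on the complex one), and the bound $\|U\trans\Delta-\Delta\trans U\|\leq 2\|\Delta\|$ producing the factor $2$. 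All of these identities check out.

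The gap is that the argument is never closed to the stated \emph{non-asymptotic} bound. You obtain a Sylvester equation for $W_a$ whose right-hand side still contains $W_s\widetilde H-\widetilde H W_s$, and you then assert that this second-order term can be ``absorbed'' and that Weyl's inequality accounts for it via the $-2\sigma_1(\Delta)$ correction. That is not how the correction arises: Weyl only converts $\sigma_{p-1}(A+\Delta)+\sigma_p(A+\Delta)$ into $\sigma_{p-1}(A)+\sigma_p(A)-2\sigma_1(\Delta)$; it does nothing to the quadratic remainder, and a naive bootstrap ($\|W_s\|\leq\frac{1}{2}\|W\|_2\|W\|$, then resubstitute) degrades the constant rather than reproducing it exactly. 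The standard way to close this is to avoid the second-order term altogether by integrating the \emph{first-order} perturbation bound along the segment $A_t=A+t\Delta$, $t\in[0,1]$: the hypothesis $\sigma_p(A)-\sigma_1(\Delta)>0$ keeps every $A_t$ nonsingular, the derivative of the polar factor satisfies a clean Sylvester equation with separation $\sigma_{p-1}(A_t)+\sigma_p(A_t)\geq\sigma_{p-1}(A)+\sigma_p(A)-2t\,\sigma_1(\Delta)$, and integrating in $t$ yields exactly the claimed constant. The same caveat applies to your Schur-multiplier step for a \emph{general} unitarily invariant norm (as opposed to Frobenius, which is all the paper actually uses): the off-diagonal separation bound $\lambda_{p-1}+\lambda_p$ requires a genuine argument there. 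Your own fallback — quote Mathias's theorem and apply only the Weyl substitution — is the correct and complete route, and is precisely what the paper does.
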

			Let $A=\gamma_{\kappa,p}\Xbb\trans \Xbb  \Wbb \addstar$ in Lemma \ref{lemma-mathias} . For any unitary invariant norm $\|\cdot\|$, we have
			\begin{equation}\label{eq:perturb-polor}
			\|\Usc( \gamma_{\kappa,p}\Xbb\trans \Xbb  \Wbb \addstar + \Delta) - \Usc( \gamma_{\kappa,p}\Xbb\trans \Xbb  \Wbb \addstar)\|
			\leq (\sigma_p(A) - \sigma_1(\Delta))^{-1}\| \Delta\|.
			\end{equation}
			To bound the right-hand side of the above display, we note that
			\begin{align}
			\sigma_p(A) & \geq\gamma_{\kappa,p}\sigma_p(\Xbb  )^2\sigma_p(\Wbb \addstar )
			=\gamma_{\kappa,p}\sigma_p(\Xbb  )^2, 	\label{eq:A-lower} \\
			\mbox{and}\quad \sigma_1(\Delta) & \leq \|\Delta\|_F\leq \gamma_{\kappa,p}
			\|\Xbb\trans (\bPi\addstar -\Ibb)\Xbb   \Wbb \addstar \|_F +\|\Xbb\trans \Vbb\|_F
			\label{eq:e1-bound-main}
			\end{align}
			Recall that $\mathcal{D}\equiv \Dsc(\bPi,\Ibb)=\{
			i\in\Gsc: \bPi\addstar_{i\cdot}\neq \Ibb_{i\cdot}
			\}$ indexes the mismatched rows.
			Then,
			\begin{equation}\label{eq:e1-bound-first}
			\begin{split}
			& \|\Xbb\trans (\bPi\addstar -\Ibb)\Xbb   \|_F
			= \|(\Xbb  _{[\mathcal{D},:]})\trans (\bPi\addstar_{[\mathcal{D},:]}-\Ibb_{[\mathcal{D},:]})\Xbb  \|_F
			\leq  \|\Xbb  _{[\mathcal{D},:]}\|_F\|(\bPi\addstar_{[\mathcal{D},:]}-\Ibb_{[\mathcal{D},:]})\Xbb  \|_F\\
			\leq & \|\Xbb  _{[\mathcal{D},:]}\|_F\{\|\bPi\addstar_{[\mathcal{D},:]}\Xbb  \|_F+\|\Ibb_{[\mathcal{D},:]}\Xbb  \|_F\}
			=2 \nmis .
			\end{split}
			\end{equation}
			For the last line of the above display, we used the spherical assumption and obtain that $\|\Xbb  _{[\mathcal{D},:]}\|_F=\sqrt{\nmis}$ and $\|\bPi\addstar_{[\mathcal{D},:]}\Xbb  \|_F=\sqrt{\nmis}$.
			Combining \eqref{eq:A-lower}, \eqref{eq:perturb-polor}, and \eqref{eq:e1-bound-first}, we have
			\begin{equation*}
			\|\Usc(  \Xbb\trans \Xbb  \Wbb \addstar + \Delta) - \Usc(  \Xbb\trans \Xbb  \Wbb \addstar)\|
			\leq \{\gamma_{\kappa,p}\sigma_p(\Xbb  )^2 - 2\gamma_{\kappa,p}\nmis- \|\Xbb\trans \Vbb\|_F\}^{-1}\|\Delta\|.
			\end{equation*}
			That is, 
			\begin{equation*}
			\|\Wbbhat\supone-\Wbb \addstar \|
			\leq  \{\gamma_{\kappa,p}\sigma_p(\Xbb  )^2 - 2\gamma_{\kappa,p}\nmis- \|\Xbb\trans \Vbb\|_F\}^{-1}\| \Delta\|.
			\end{equation*}
			In particular, if we take $\|\cdot\|$ to be $\|\cdot\|_F$ in the above inequality, then
			\begin{equation}\label{eq:stepone-w-error}
			\|\Wbbhat\supone-\Wbb \addstar \|_F
			\leq  \{\gamma_{\kappa,p}\sigma_p(\Xbb  )^2 - 2\gamma_{\kappa,p}\nmis- \|\Xbb\trans \Vbb\|_F\}^{-1}(2\gamma_{\kappa,p}\nmis+ \|\Xbb\trans \Vbb\|_F).
			\end{equation}
			To analyze the tail behavior of $\|\Xbb\trans \Vbb\|_F$, we note that
			\begin{equation}\label{eq:xe-f-norm-square}
			\begin{split}
			\Ebb(\|\Xbb\trans \Vbb\|_F^2)
			=  \Ebb\left(
			\tr(\Vbb\trans \Xbb  \Xbb\trans \Vbb)
			\right)= \Ebb\left(
			\tr(\Xbb  \Xbb\trans \Vbb\Vbb\trans )
			\right) =\tr\left(
			\Xbb  \Xbb \trans  \Ebb[\Vbb\Vbb\trans ]
			\right).
			\end{split}
			\end{equation}
			Since $\Vbb=(\bepsilon_1,...,\bepsilon_n)\trans $ and $\bepsilon_i$'s are centered and independent random vectors,	we have
			\begin{equation}
			\Ebb[\Vbb\Vbb\trans ]
			= (\Ebb\bepsilon_i\trans \bepsilon_j)_{1\leq i,j\leq n}
			=\mbox{diag}(\Ebb(\|\bepsilon_1\|_2^2,...,\Ebb(\|\bepsilon_n\|_2^2)).
			\end{equation}
			From Lemma \ref{lemma:variance-von-miss} in Appendix \ref{app-lemma}, the distribution of $\|\bepsilon_i\|_2^2$ does not depend on $\bmu$ and 
			\begin{equation}\label{eq:eet-exp}
			\Ebb[\Vbb\Vbb\trans ]
			= (\Ebb\bepsilon_i\trans \bepsilon_j)_{1\leq i,j\leq n}
			=(1-\gamma_{\kappa,p}^2)\Ibb_n.
			\end{equation}
			On the other hand,  the diagonal elements of $\Xbb \Xbb \trans $ are all ones because of the spherical assumption. Combining this fact with \eqref{eq:xe-f-norm-square} and \eqref{eq:eet-exp}, we arrive at
			\begin{equation}\label{corollaryuse35}
			\Ebb(\|\Xbb \trans \Vbb\|_F^2)
			= (1-\gamma_{\kappa,p}^2)\tr\left(\Xbb \Xbb \trans \right)
			= n (1-\gamma_{\kappa,p}^2).
			\end{equation}
			Now we apply Chebyshev inequality to $\|\Xbb \trans \Vbb\|_F$ and obtain that for all $t>0$
			\begin{equation}\label{eq:xe-cheby}
			P(\|\Xbb \trans \Vbb\|_F\geq t)
			\leq t^{-2} n (1-\gamma_{\kappa,p}^2),
			\end{equation}
			or, equivalently,
			\begin{equation}
			P(\|\Xbb \trans \Vbb\|_F\geq t\sqrt{n (1-\gamma_{\kappa,p}^2)})
			\leq t^{-2}
			\end{equation}
			for all $t>0$.
			Combining \eqref{eq:xe-cheby} and \eqref{eq:stepone-w-error}, we arrive at
			\begin{equation}\label{eq:what-bound-with-gamma}
			\|\Wbbhat\supone-\Wbb \addstar \|_F
			\leq \frac{2\gamma_{\kappa,p} \nmis+ t \sqrt{n(1-\gamma_{\kappa,p}^2)}}{\gamma_{\kappa,p}\sigma_p(\Xbb )^2 - 2\gamma_{\kappa,p} \nmis- t \sqrt{n(1-\gamma_{\kappa,p}^2)}}
			\end{equation}
			with probability that is at least $1-1/t^2$.
			
		\end{proof}
		
		\subsection{Proof of Corollary~\ref{coro:w-hat-large-kappa}}\label{app-coro-one}
		\begin{proof}[Proof of Corollary~\ref{coro:w-hat-large-kappa}]
			The proof for the case where both $p$ and $\kappa$ are fixed is straightforward because $\boundgamma<\gamma_{\kappa,p}<1$ is a constant. 
			Now we consider the case when $\kappa\to\infty$ and $p\geq4$. 
			By the assumption that $\gamma_{\kappa,p}>\boundgamma$ is bounded away from zero, we have
			\[
			\|\Wbbhat\supone-\Wbb \addstar \|_F
			\leq \frac{2 \nmis+ t \sqrt{n\eta_{\kappa,p}/\gamma_{\kappa,p}}}{\sigma_p(\Xbb )^2 - 2 \nmis- t \sqrt{n\eta_{\kappa,p}/\gamma_{\kappa,p}}}
			\] with probability at least $1-1/t^2$. Therefore, we have
			\[
			\|\Wbbhat\supone-\Wbb\|_F=O_P\left(\frac{\sqrt{n\eta_{\kappa,p}}+\nmis}{\sigma_p(\Xbb)^{2}} \right)
			\]
			if $\sqrt{n\eta_{\kappa,p}}=o( \sigma_p(\Xbb)^2)$ and $\nmis=o( \sigma_p(\Xbb)^2)$.  In particular, when $\nmis=o(\sigma_p(\Xbb))$ and $\sqrt{np/\kappa}=o(\sigma_p(\Xbb))$, we have
			$\|\Wbbhat\supone-\Wbb\|_F=o_p(1)$.
			
		\end{proof}
		
		\subsection{Proof of Theorem~\ref{thm:thm-OLS-pi}}\label{app-theorem-two}
		\begin{proof}[Proof of Theorem~\ref{thm:thm-OLS-pi}]
			For any $k \in \{1, ...,K\}$, let $\Ubb_{[G_k,:]}= \Ybb_{[G_k,:]}- \bPi\addstar_{[G_k,G_k]} \Xbb_{[G_k,:]}  \Wbb \addstar $ be the $n_k \times p$ residual matrix. 
			The OLS estimator for $\bPi^k$ is
			\begin{equation*}
			\Initialpi^k \equiv \Initialpi\trans _{[G_k,G_k]}
			=  (\Xbb _{[G_k,:]}\Xbb _{[G_k,:]}\trans )^{-1}\Xbb _{[G_k,:]}\Wbbhat\supone\Ybb _{[G_k,:]}\trans =
			\Ybb _{[G_k,:]}{(\Wbbhat\supone)}\trans \Xbb _{[G_k,:]}\trans (\Xbb _{[G_k,:]}\Xbb _{[G_k,:]}\trans )^{-1}.
			\end{equation*}
			Let $\Delta \Wbb =\Wbbhat\supone-\Wbb \addstar $, then
			\begin{equation*}
			\begin{split}
			\Initialpi^k
			=& 
			(\bPi^k\addstar  \Xbb _{[G_k,:]} \Wbb \addstar  + \Ubb_{[G_k,:]})
			(\Wbb \addstartop+\Delta \Wbb \trans )\Xbb _{[G_k,:]}\trans (\Xbb _{[G_k,:]}\Xbb _{[G_k,:]}\trans )^{-1}\\
			= &\bPi^k\addstar 
			+ \bPi^k\addstar  \Xbb _{[G_k,:]} \Wbb \addstar (\Delta \Wbb )\trans \Xbb _{[G_k,:]}\trans (\Xbb _{[G_k,:]}\Xbb _{[G_k,:]}\trans )^{-1}\\
			& + \Ubb_{[G_k,:]}\Wbb \addstartop\Xbb _{[G_k,:]}\trans (\Xbb _{[G_k,:]}\Xbb _{[G_k,:]}\trans )^{-1}
			+ \Ubb_{[G_k,:]}(\Delta \Wbb )\trans \Xbb _{[G_k,:]}\trans (\Xbb _{[G_k,:]}\Xbb _{[G_k,:]}\trans )^{-1}.
			\end{split}
			\end{equation*}
			In what follows, we find an upper bound of
			\begin{equation*}
			\begin{split}
			&\|\bPi^k\addstar  \Xbb _{[G_k,:]} \Wbb \addstar (\Delta \Wbb )\trans \Xbb _{[G_k,:]}\trans (\Xbb _{[G_k,:]}\Xbb _{[G_k,:]}\trans )^{-1}\|_F\\
			& + \|\Ubb_{[G_k,:]}\Wbb \addstartop\Xbb _{[G_k,:]}\trans (\Xbb _{[G_k,:]}\Xbb _{[G_k,:]}\trans )^{-1}\|_F
			+ \|\Ubb_{[G_k,:]}(\Delta \Wbb )\trans \Xbb _{[G_k,:]}\trans (\Xbb _{[G_k,:]}\Xbb _{[G_k,:]}\trans )^{-1}\|_F.
			\end{split}
			\end{equation*}
			Fro the first term, we note that $\sigma_1\{\Xbb _{[G_k,:]}\trans (\Xbb _{[G_k,:]}\Xbb _{[G_k,:]}\trans )^{-1}\} = \sigma_1 \{(\Xbb _{[G_k,:]}\trans \Xbb _{[G_k,:]}\trans )^{-1}\}^{1/2}
			=\sigma_{n_k}(\Xbb _{[G_k,:]})^{-1}$ 
			and hence
			\begin{equation*}
			\begin{split}
			\|\bPi^k\addstar  \Xbb _{[G_k,:]} \Wbb \addstar (\Delta \Wbb )\trans \Xbb _{[G_k,:]}\trans (\Xbb_{[G_k,:]}\Xbb _{[G_k,:]}\trans )^{-1}\|_F
			\leq & 
			\|\bPi^k\addstar  \Xbb _{[G_k,:]}\|_2\|\Delta \Wbb \|_F
			\|\Xbb _{[G_k,:]}(\Xbb _{[G_k,:]}\Xbb _{[G_k,:]}\trans )^{-1}\|_2\\
			\leq & \sigma_{n_k}(\Xbb _{[G_k,:]})^{-1}\sigma_1(\bPi^k\addstar  \Xbb _{[G_k,:]})\|\Delta \Wbb \|_F,
			\end{split}
			\end{equation*}
			where for a matrix $\bA$, $\|\bA\|_2$ denotes its spectral norm.
			For the second term, we have
			\begin{equation*}
			\|\Ubb_{[G_k,:]}\Wbb \addstartop\Xbb _{[G_k,:]}\trans (\Xbb _{[G_k,:]}\Xbb _{[G_k,:]}\trans )^{-1}\|_F\\	
			\leq  \sigma_{n_k}(\Xbb _{[G_k,:]})^{-1}\|\Ubb_{[G_k,:]}\|_F.		
			\end{equation*}
			For the third term, we have
			\begin{equation*}
			\begin{split}
			\|\Ubb_{[G_k,:]}(\Delta \Wbb )\trans \Xbb _{[G_k,:]}\trans (\Xbb _{[G_k,:]}\Xbb _{[G_k,:]}\trans )^{-1}\|_F
			\leq \sigma_{n_k}(\Xbb _{[G_k,:]})^{-1}\|\Ubb_{[G_k,:]}\|_F\|\Delta \Wbb \|_F.
			\end{split}
			\end{equation*}
			Combining these inequality, we have
			\begin{equation*}
			\begin{split}
			\|\Initialpi^k-\bPi\addstar_{[G_k,G_k]}\|_F
			\leq \sigma_{n_k}(\Xbb _{[G_k,:]})^{-1}\{
			\|\Ubb_{[G_k,:]}\|_F(1+\|\Delta \Wbb \|_F) + \sigma_1(\bPi^k\addstar  \Xbb _{[G_k,:]})\|\Delta \Wbb \|_F
			\}.
			\end{split}
			\end{equation*}
			We combine our analysis for different and arrive at		
			\begin{equation*}
			\begin{split}
			&\max_{1\leq k\leq K}\|\Initialpi^k-\bPi\addstar^k\|_F\\
			\leq & [\min_{1\leq k\leq K}\sigma_{n_k}(\Xbb _{[G_k,:]})]^{-1}\{
			\max_{1\leq k\leq K}\|\Ubb_{[G_k,:]}\|_F(1+\|\Delta \Wbb \|_F) + \max_{1\leq k\leq K}\sigma_1(\bPi^k\addstar  \Xbb _{[G_k,:]})\|\Delta \Wbb \|_F
			\}\\
			= & [\min_{1\leq k\leq K}\sigma_{n_k}(\Xbb _{[G_k,:]})]^{-1}\{
			(1+\|\Delta \Wbb \|_F)\max_{1\leq k\leq K} \|\Ubb_{[G_k,:]}\|_F+\|\Delta \Wbb \|_F\max_{1\leq k\leq K}\sqrt{n_k}
			\}.
			\end{split}
			\end{equation*}
			We proceed to analyzing the probabilistic properties of the above display. From Corollary~\ref{coro:w-hat-large-kappa}, we know that under the assumptions of Corollary~\ref{coro:w-hat-large-kappa}, 
			\begin{equation*}
			\|\Delta \Wbb \|_F=O_p(\sigma_p(\Xbb )^{-2}(\sqrt{n\eta_{\kappa,p}}+\nmis) )=o_p(1).
			\end{equation*}

			For $\max_{1\leq k\leq K} \|\Ubb_{[G_k,:]}\|_F$, we apply \eqref{eq:max-sum-tail} in Proposition~\ref{lemma:tail-bound-von-miss}. Then, we have that with probability at least $1-\frac{1}{K}$, 
			\begin{equation*}
			\max_{1\leq k\leq K} \|\Ubb_{[G_k,:]}\|_F
			\leq 2\max_{1\leq k\leq K}\sqrt{n_k}(\frac{p-1}{\kappa})^{1/2},
			\end{equation*}
			given that $4\log K\leq (p-1)n_{\min}$.
			Combining these, we have with the probability going to one, 
			\begin{equation*}
			\max_{1\leq k\leq K}\|\Initialpi^k-\bPi\addstar^k\|_F
			\leq 4c_n , \mbox{where}\ 
			c_n = \frac{\max_{1\leq k\leq K}\sqrt{n_k}\left\{
				\left(\frac{p}{\kappa}\right)^{1/2}+\sigma_p(\Xbb )^{-2}\left(\sqrt{n\eta_{\kappa,p}}+\nmis\right)
				\right\}}{\min_{1\leq k\leq K}\sigma_{n_k}(\Xbb _{[G_k,:]})}.
			\end{equation*}
			
			Assuming that $p=o(\kappa)$, we further have that with the probability converging to one, 
			\begin{equation*}
			\max_{1\leq k\leq K}\|\Initialpi^k-\bPi\addstar^k\|_F	\leq 
			4c_n.
			\end{equation*}
			which implies that $\|\Initialpi-\bPi\addstar \|_2= O_p\left(c_n\right)$.
			
		\end{proof}
		
		\subsection{Proof of Theorem~\ref{thm:thresholding}}\label{app-theorem-three}
		\begin{proof}[Proof of Theorem~\ref{thm:thresholding}]
			From Theorems~\ref{thm:w-hat-consistent} and \ref{thm:thm-OLS-pi},  we have for any $a_n\to\infty$, 
			$P(F_n)\to 1,$ where
			\begin{equation}
			F_n=\Big\{\max_{1\leq k\leq K}\|\Initialpi_k-\bPi\addstar_k\|_F
			\leq a_n c_n \text{ and }\|\Delta \Wbb \|_F\leq a_n\sigma_p(\Xbb )^{-2}(\sqrt{n\eta_{\kappa,p}}+\nmis)
			\Big\},
			\end{equation}
			and $c_n$ is defined above.
			From now on, we restrict our analysis on the event $F_n$ with some suitable choice of $a_n$. We first observe that when $F_n$ occurs, for each row
			\begin{equation}\label{eq:row-bound}
			\|\Initialpi_{i\cdot}-\bPi\addstar_{i\cdot}\|_2
			\leq d_n,
			\end{equation}
			where $d_n=a_n c_n$.
			We first use Lemma \ref{lemma:vec-norm-cos} to show that, 
			if $\bPi\addstar_{i\cdot}=\Ibb_{j\cdot}$ for some $j$, then $\bPihat\suptwo_{i\cdot}=\Ibb_{j\cdot}$. 
			In other words, we show that for all $i \notin\mathcal{C},\; \bPihat\suptwo_{i\cdot}=\bPi_{i\cdot}=\Ibb_{j\cdot}$.
			Lemma  \ref{lemma:vec-norm-cos} and \eqref{eq:row-bound} imply that if 
			$2d_n<\lambda_n<\frac{1}{2}$
			and $\bPi_{i\cdot}\addstar =\Ibb_{j\cdot}$ for some $j$, then we get $\bPihat\suptwo_{i\cdot}=\Ibb_{j\cdot}$. This result holds for all rows $i\notin \mathcal{C}$. Thus, given
			$c_n\ll\lambda_n<\frac{1}{2},$	we have the exact recovery for rows $i\notin\mathcal{C}$ on the event $F_n$ with any sequence $a_n$ such that $a_n\to\infty$ and $a_n\ll\lambda_n/c_n$.
			
			It remains to show that the hard thresholding does not have any effect on the rows with $i\in\mathcal{C}$. 
			We note that
			\begin{equation*}
			\left\|\frac{\Initialpi_{i\cdot}}{\|\bPi\addstar_{i\cdot}\|_2}-\frac{\bPi\addstar_{i\cdot}}{\|\bPi\addstar_{i\cdot}\|_2}\right\|_2\leq\frac{d_n}{\|\bPi_i\addstar \|_2}.
			\end{equation*}
			Similar to Lemma~\ref{lemma:vec-norm-cos}, we have
			\begin{equation}
			\cos\left(\frac{\Initialpi_{i\cdot}}{\|\bPi\addstar_{i\cdot}\|_2}, \frac{\bPi\addstar_{i\cdot}}{\|\bPi\addstar_{i\cdot}\|_2}\right)
			\geq 1-\frac{2d_n}{\|\bPi_i\addstar \|_2}.
			\end{equation}
			To bound $\cos(\Initialpi_{i\cdot},\Ibb_{j\cdot})$, we use Lemma \ref{lemma:cos-lower} in Appendix \ref{app-lemma} by setting $\bX=\Initialpi_{i\cdot}$, $\bY=\bPi\addstar_{i\cdot}$ and $\bZ=\Ibb_{j\cdot}$. 
			It follows that
			\begin{equation*}
			\cos(\Initialpi_{i\cdot},\Ibb_{j\cdot})
			\leq 1-\min_{i\in \mathcal{C} } \beta_j+2\sqrt{\frac{d_n}{\|\bPi_{i\cdot}\addstar \|_2}} = 1-\Bsc_{\min}+2\sqrt{\frac{d_n}{\|\bPi_{i\cdot}\addstar \|_2}}.
			\end{equation*}
			From Lemma~\ref{remark:pi}, $\|\bPi_{i\cdot}\addstar \|_2\geq 1/\sqrt{n_k}$. Thus, we arrive at
			\begin{equation*}
			\cos(\Initialpi_{i\cdot},\Ibb_{j\cdot})
			\leq 1-			\Bsc_{\min}+2\sqrt{d_n\max_{1\leq k\leq K}\sqrt{n_k}},
			\end{equation*}
			which implies
			\begin{equation*}
			1-	\widetilde{\beta}_i=	\max_{j:j\sim i}\cos(\Initialpi_{i\cdot},\Ibb_{j\cdot})
			\leq 1-			\Bsc_{\min}+2\sqrt{d_n\max_{1\leq k\leq K}\sqrt{n_k}}.
			\end{equation*}
			That is,
			$\widetilde{\beta}_i\geq \Bsc_{\min}-2\sqrt{d_n\max_{1\leq k\leq K}\sqrt{n_k}}.$
			Because we do hard-thresholding only when $	\widetilde{\beta}_i
			\leq \lambda_n$, and from the theorem assumptions we have
			$
			\Bsc_{\min}-2\sqrt{d_n\max_{1\leq k\leq K}\sqrt{n_k}}>\lambda_n
			$
			for large $n$, we can see that the hard-thresholding will not have any effect to the $i$'th row of $\Initialpi$ for sufficiently large $n$. 			This completes our proof for the model selection consistency part.

			We proceed to the estimation error bound of $\bPihat\suptwo_{i\cdot}$ for $i\in \mathcal{C}$. Without loss of generality, suppose $i\in G_k$.
			Recall that $\bPihat\suptwo_{i\cdot}=\xi_i\Initialpi_{i\cdot},$			where $\xi_i=\|\Initialpi_{i\cdot}\Xbb \Wbbhat\supone\|_2^{-1}$.
			Clearly,
			\begin{equation}\label{eq:bound-non-comb}
			\|\bPihat\suptwo_{i\cdot}-\bPi\addstar_{i\cdot}\|_2
			\leq \|\Initialpi_{i\cdot}\|_2 |\xi_i-1| + \|\Initialpi_{i\cdot}-\bPi\addstar_{i\cdot}\|_2
			\leq (\|\bPi\addstar_{i\cdot}\|_2+d_n)|\xi_i-1|+d_n.
			\end{equation}
			Now we consider an upper bound on $|\xi_i-1|$.
			We observe that for $i\in G_k$,
			\begin{equation}
			\begin{split}
			|1-1/\xi_i|=& \left|\|\Initialpi_{i\cdot}\Xbb \Wbbhat\supone\|_2-1\right|
			\leq \|\Initialpi_{i\cdot}\Xbb \Wbbhat\supone- \bPi\addstar_{i\cdot}\Xbb \Wbb \addstar \|_2\\
			\leq & \|\Initialpi_{i\cdot}\Xbb \Wbbhat\supone- \bPi\addstar_{i\cdot}\Xbb \Wbbhat\supone\|_2+\|\bPi\addstar_{i\cdot}\Xbb \Wbbhat\supone- \bPi\addstar_{i\cdot}\Xbb \Wbb \addstar \|_2\\
			\leq& d_n\sigma_1(\Xbb _{[G_k,:]})+\sigma_1(\bPi\addstar_{i\cdot}\Xbb )\|\Delta \Wbb \|_F
			\leq d_n\sqrt{n_k}+ \|\Delta \Wbb \|_F\\
			\leq & d_n \sqrt{n_k}+ a_n\sigma_p(\Xbb )^{-2}(\sqrt{n\eta_{\kappa,p}}+\nmis)
			\leq  2 d_n \sqrt{n_k}.
			\end{split}
			\end{equation}
			It follows that $|\xi_i-1|\leq|1-1/\xi_i|/(1/\xi_i)
			\leq 2 d_n \sqrt{n_k}/(1-2 d_n \sqrt{n_k}).$			Under  assumption of the theorem, for $n$ sufficiently large,  $d_n \sqrt{n_k}<1/4$,  we have
			$|\xi_i-1|\leq 4d_n \sqrt{n_k}<1.$	Combining this inequality with \eqref{eq:bound-non-comb} and the fact that $d_n\sqrt{n_k}<\frac{1}{4}$ again, we have
			\begin{equation}\label{eq:est-bound-mismatch}
			\|\bPihat\suptwo_{i\cdot}-\bPi\addstar_{i\cdot}\|_2
			\leq d_n\{(\|\bPi\addstar_{i\cdot}\|_2+d_n)4 \sqrt{n_k}+1\}
			\leq d_n (4\|\bPi\addstar_{i\cdot}\|_2\sqrt{n_k}+2)
			\leq 6 d_n \|\bPi\addstar_{i\cdot}\|_2\sqrt{n_k}.
			\end{equation}
			\sloppy To get the last inequality in the above display, we used Lemma~\ref{remark:pi}.
			In particular, if $
			c_n \max_{1\leq k\leq K}\max_{i\in\mathcal{C},i\in G_k}\|\bPi\addstar_{i\cdot}\|_2\sqrt{n_k}\to 0,$
			then with $a_n$ chosen such that $a_n\to\infty$ and $a_nc_n \max_{1\leq k\leq K}\max_{i\in\mathcal{C},i\in G_k}\|\bPi\addstar_{i\cdot}\|_2\sqrt{n_k}\to 0$, we have
			$d_n \max_{1\leq k\leq K}\max_{i\in\mathcal{C},i\in G_k}\|\bPi\addstar_{i\cdot}\|_2\sqrt{n_k}\to 0,$
			where $d_n=a_nc_n$.
			This together with \eqref{eq:est-bound-mismatch} implies that 
			\begin{equation*}
			\max_{i\in\mathcal{C}}\|\bPihat\suptwo_{i\cdot}-\bPi\addstar_{i\cdot}\|_2\to 0
			\end{equation*}
			on the event $F_n$. That is, all rows of $\bPihat\suptwo_{i\cdot}$ are consistent when $i\in\mathcal{C}$.
			
		\end{proof}

		\subsection{Proof of Corollary~\ref{coro:refinement}}\label{app-coro-two}	
		\begin{proof}[Proof of Corollary~\ref{coro:refinement}]
			The subsample we use to obtain $\Wbbhat\suptwo$ includes $\Xbb\ii$ and 
			\[
			\Ybb\ii=\gamma_{\kappa,p}\Xbb\ii\Wbb+\gamma_{\kappa,p}(\bPi\ii -\Ibb\ii)\Xbb \Wbb+\Vbb\ii,
			\]
			with sample size  $|\mm(\widehat{\bPi}\suptwo)|$.
			Therefore, we have that the refined estimate
			\[
			\begin{aligned}
			\widehat{\Wbb}\suptwo=&\Usc(\Xbb\ii\trans \Ybb\ii)\\
			=&\Usc(\gamma_{\kappa,p}\Xbb\ii\trans \Xbb\ii\Wbb+\gamma_{\kappa,p}\Xbb\ii\trans (\bPi\ii -\Ibb\ii)\Xbb\Wbb+\Xbb\ii\trans \Vbb\ii)
			\end{aligned}
			\]
			Let $A=\gamma_{\kappa,p}\Xbb\ii\trans \Xbb\ii\Wbb$, and 
			$$\Delta=\gamma_{\kappa,p}\Xbb\ii\trans (\bPi\ii -\Ibb\ii)\Xbb\Wbb+\Xbb\ii\trans \Vbb\ii.$$ 
			Then by the same arguments as the proof of Theorem~\ref{thm:w-hat-consistent}, we have $\Usc(A)=\Wbb$, and $\sigma_p(A)\geq \gamma_{\kappa,p}\sigma_p(\Xbb\ii)^2$.
			In addition, by Lemma 3 we have
			\[\|\Usc(A+\Delta)-\Usc(A)\|_F\leq (\sigma_p(A)-\sigma_1(\Delta))^{-1}\|\Delta\|_F\leq (\sigma_p(A)-\|\Delta\|_F)^{-1}\|\Delta\|_F.\]
			For $\|\Delta\|_F$, by the same argument as the proof of Theorem~\ref{thm:w-hat-consistent}, we have 
			\begin{equation}\label{deltaFana}
			\begin{aligned}
			\|\Delta\|_F\leq& \gamma_{\kappa,p}\|\Xbb\ii\trans  (\bPi\ii -\Ibb\ii)\Xbb\Wbb\|_F+\|\Xbb\ii\trans \Vbb\ii\|_F\\
			\leq& 2\gamma_{\kappa,p}|\mathcal{D}(\bPi\ii,\Ibb\ii)|+\|\Xbb\ii\trans \Vbb\ii\|_F
			\end{aligned}
			\end{equation}
			Next, define the event $\mathcal{I}=\{\mathcal{D}(\bPi\ii,\Ibb\ii)=\emptyset\}$, then for a positive $t$, we have $P(\|\Delta\|_F> t)\leq P(\mathcal{I}^c)+P(\|\Delta\|_F> t,\mathcal{I})$. First, under the assumptions in Theorem~\ref{thm:thresholding}, $P(\mathcal{I}^c)\to0$. Second, by (\ref{deltaFana}) we have
			\[
			\begin{aligned}
			P(\|\Delta\|_F> t,\mathcal{I})\leq& P(2\gamma_{\kappa,p}|\mathcal{D}(\bPi\ii,\Ibb\ii)|+\|\Xbb\ii\trans \Vbb\ii\|_F> t,\mathcal{I})\\
			=&P(\|\Xbb\ii\trans \Vbb\ii\|_F> t,\mathcal{I})
			=P(\|\Xbb_{[\mm(\bPi),:]}\trans \Vbb_{[\mm(\bPi),:]}\|_F> t,\mathcal{I})\\
			\leq&P(\|\Xbb_{[\mm(\bPi),:]}\trans \Vbb_{[\mm(\bPi),:]}\|_F> t).
			\end{aligned}
			\]
			By the Chebyshev inequality, we have
			\[
			\begin{aligned}
			P(\|\Delta\|_F> t,\mathcal{I})\leq& \frac{1}{t^2}\Ebb[\|\Xbb_{[\mm(\bPi),:]}\trans \Vbb_{[\mm(\bPi),:]}\|^2_F]
			= \frac{1}{t^2}(n-\nmis) \eta_{\kappa,p},
			\end{aligned}
			\]
			where the last equation follows the same argument as (\ref{corollaryuse35}), except the sample size here is $|\mm(\bPi)| = n - \nmis$ rather than $n$, with $\eta_{\kappa,p}=1-\gamma_{\kappa,p}^2$.
			It follows that 
			\begin{equation}\label{ppp}
			P(\|\Delta\|_F> t,\mathcal{I})\leq \frac{1}{t^2}(n-\nmis) \eta_{\kappa,p}
			\end{equation}  
			Therefore, 
			\begin{equation}\label{eq:original_refineWbound}
			P\left(\|\widehat{\Wbb}\suptwo-\Wbb \addstar \|_F
			\geq \frac{t \sqrt{(n-\nmis)\eta_{\kappa,p}}}{\gamma_{\kappa,p}\sigma_p(\Xbb_{[\mm(\bPi),:]} )^2 - t \sqrt{(n-\nmis)\eta_{\kappa,p}}}\right)
			\leq P(\mathcal{I}^c)+1/t^2,
			\end{equation}
			which further implies that as $n$ grows,
			\begin{equation*}
			\|\widehat{\Wbb}\suptwo-\Wbb \addstar \|_F=O_p\left\{\frac{ \sqrt{(n-\nmis)\eta_{\kappa,p}}}{\gamma_{\kappa,p}\sigma_p(\Xbb_{[\mm(\bPi),:]} )^2 }\right\} .
			\end{equation*}
			Note that \eqref{ppp} holds when assumptions of Theorem~\ref{thm:thresholding} are satisfied, under which we have
			\begin{equation}\label{eq:simplify_gamma_eta}
			\gamma_{\kappa,p}=1+o(1)\text{ and }
			\eta_{\kappa,p}=O(p/\kappa)=o(1).
			\end{equation}
			
			Moreover, by Weyl’s perturbation theorem (see, e.g. \cite{stewart1990computer}) and the fact that $\|\bX_i\|=1,\forall i$, we have $\sigma_p(\Xbb)^{2}- \nmis\leq \sigma_p(\Xbb_{[\mm({\bPi}),:]})^{2}\leq \sigma_p(\Xbb)^{2}$. By the assumption of Theorem~\ref{thm:thm-OLS-pi} that $\nmis=o(\sigma_p(\Xbb)^{2})$, we know that 
			\begin{equation}\label{eq:sigma_equiv}
			\sigma_p(\Xbb_{[\mm({\bPi}),:]})^{2}=(1+o(1))\sigma_p(\Xbb)^{2}.
			\end{equation}
			Because 
			$\sqrt{(n-\nmis)\eta_{\kappa,p}}<\sqrt{n\eta_{\kappa,p}}=o(\sigma_p(\Xbb)^2)$
			due to assumptions in Theorem~\ref{thm:thm-OLS-pi}, by (\ref{eq:sigma_equiv}) we have
			\begin{equation}\label{eq:spkapa}
			\sqrt{(n-\nmis)\eta_{\kappa,p}}=o(\sigma_p(\Xbb_{[\mm({\bPi}),:]})^2).
			\end{equation}
			
			Combining \eqref{eq:original_refineWbound}, \eqref{eq:simplify_gamma_eta}, \eqref{eq:sigma_equiv}, and \eqref{eq:spkapa}, we obtain that the error rate is
			\begin{equation}
			\|\widehat{\Wbb}\suptwo-\Wbb \addstar \|_F=O_p\left\{\frac{ \sqrt{(n-\nmis)\eta_{\kappa,p}}}{\sigma_p(\Xbb)^2}\right\}.
			\end{equation}
			
		\end{proof}
	\end{spacing}

	\newpage
	\section{Additional simulation results}\label{supp-additional-simu}
	We conduct additional simulation studies to evaluate the performance of our method under a few interesting scenarios in Sections~\ref{wrong_grp_info}-\ref{supp:lessnoise}, as well as to investigate the performance of an extension of the iSphereMAP algorithm in Section~\ref{supp:add_mismatch_1to1}.
	The data are generated following the same procedure as Section~\ref{simu} of the main manuscript except for the below specified distinctions.
	\subsection{Overly coarse group structure}\label{wrong_grp_info} 
	The block-diagonal structure of $\bPi$ is defined based on the group information, which may be inaccurate. In particular, if the block size is too small, then we misspecified $\bPi$ with too many zero entries. An extreme case is to specify that group size equals one, i.e. assuming $\bPi=\Ibb$. With such overly fine grouping, our estimator can miss a portion of mismatch patterns.
	
	In contrast, if the block size is too big, then we have a conservative model which could influence efficiency but not validity. In this section, we evaluate the performance of our proposed method in the scenario where the block diagonal structure in $\bPi$ is overly coarse. 
	Specifically, we generate the data following the procedure in Section~\ref{simu}. However, in the estimation procedure, the block diagonal structure in $\bPi$ is specified to be overly coarse, by combining two distinct groups into a larger group for every two out of five groups. 

	Following Section~\ref{simu}, we summarize in Figure~\ref{supp:fig:wrong_grp_info} the performance of $\bPihat\suptwo$ estimated from our method which utilizes incorrectly specified group information and from the MT method which does not utilize any group information, in terms of the match rate for \onetoone mapping and the mean squared error (MSE) of \onetomany mapping weight.
	We consider estimation via iSphereMAP under correct (in black) and overly coarse (in red) group information.
	We observe slightly increased MSE of the iSphereMAP estimator under overly coarse group structure. In addition, with sufficient sample size, overly coarse group structure has little impact on the \onetoone match rate. We thus generally recommend to be conservative in choosing the group structure to avoid model misspecification. In addition, our proposed method still outperforms the MT method which does not leverage group information. 
	
	\begin{figure}[!h]
		\begin{centering}
			\makebox[\textwidth][c]{\includegraphics[width=1\textwidth,
				scale=.5]{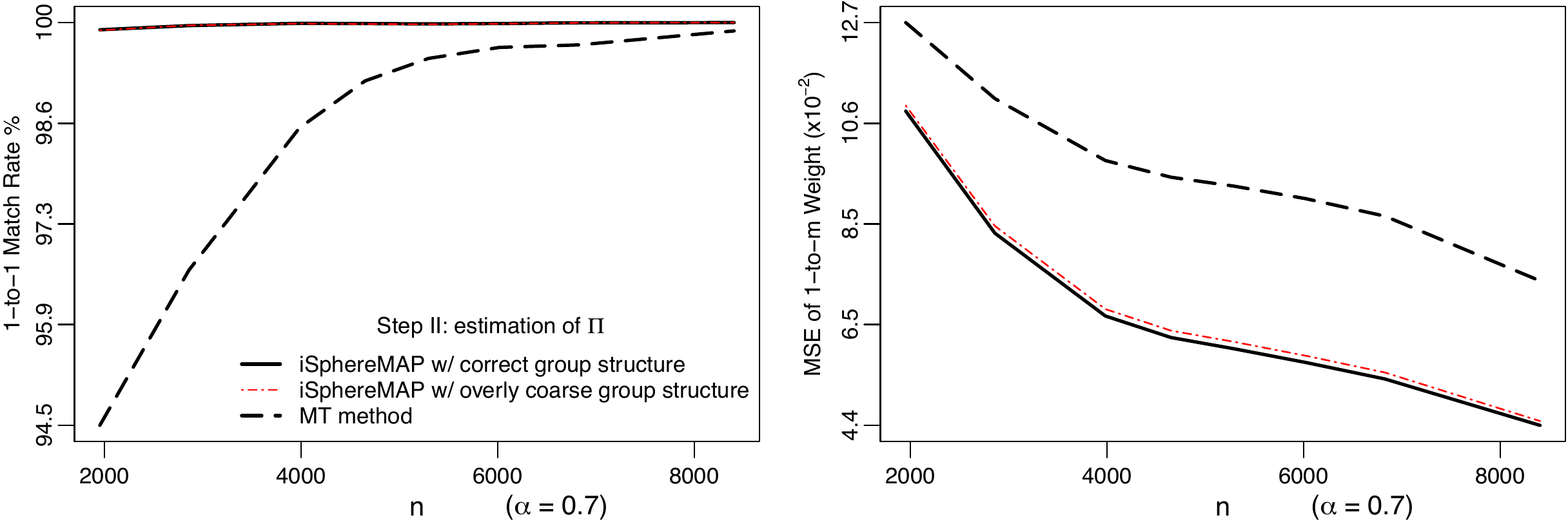}}
			\par \end{centering}
		\caption{\label{supp:fig:wrong_grp_info} Performance of $\bPihat\suptwo$ estimated from iSphereMAP with correct and overly coarse group structure as well as from the MT method in terms of the \onetoone match rate (left panel) and the MSE of \onetomany weight (right panel).
		}
	\end{figure}

	\subsection{Permutation only: no \onetomany mapping}\label{supp:perm_only_T}
	\begin{figure}[!h]
		\begin{centering}
			\makebox[\textwidth][c]{\includegraphics[height=5.5cm, scale=.4]{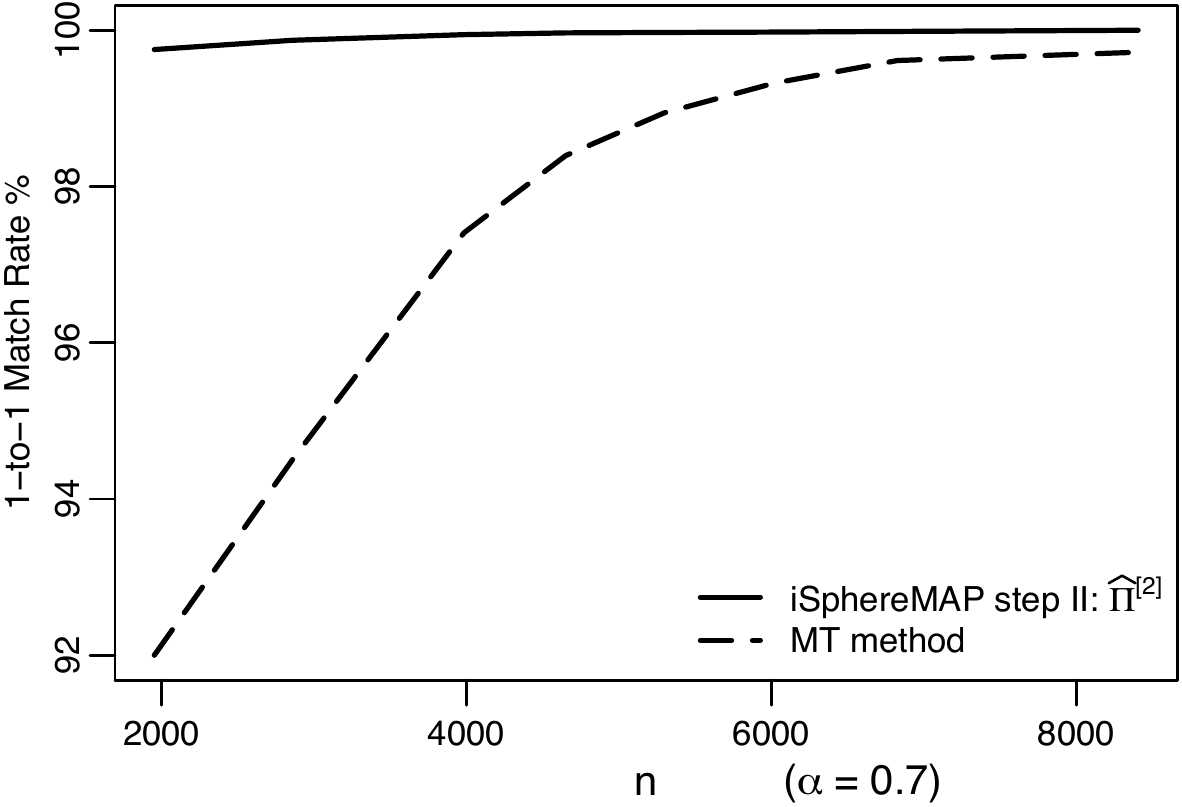}}
			\par \end{centering}
		\caption{\label{supp:fig:perm_only_T} Performance of $\bPihat\suptwo$ estimated from iSphereMAP and the MT method in terms of the \onetoone match rate in the scenario where the true $\bPi$ contains no \onetomany  mismatch but only \onetoone  mismatch.}
	\end{figure}
	We evaluate the match rate of our proposed method in the scenario where $\bPi$ is a permutation matrix. That is, only \onetoone  mapping is present, and there is no \onetomany mapping.
	Figure~\ref{supp:fig:perm_only_T} presents the performance of $\bPihat\suptwo$ estimated from our method with group information and from the MT method without group information, in terms of the match rate for \onetoone mapping.
	As is shown in Figure~\ref{supp:fig:perm_only_T}, the iSphereMAP estimator still outperforms the MT method. This is expected because the MT method aligns the SEV spaces via the ordinary least squares, which does not acknowledge the fact that all SEVs are unit-length vectors. In addition, it does not utilize the group information.

\subsection{Less noisy scenario}\label{supp:lessnoise}
We investigate the performance of our proposed method in the scenario where $\kappa=3000$. This is considered as a setting with less noise in data compared to the simulation studies in Section~\ref{simu} of the main manuscript. 

\begin{figure}[!h]
\begin{subfigure}[t]{1\textwidth}
\makebox[\textwidth][c]{\includegraphics[width=1\textwidth, scale=.5]{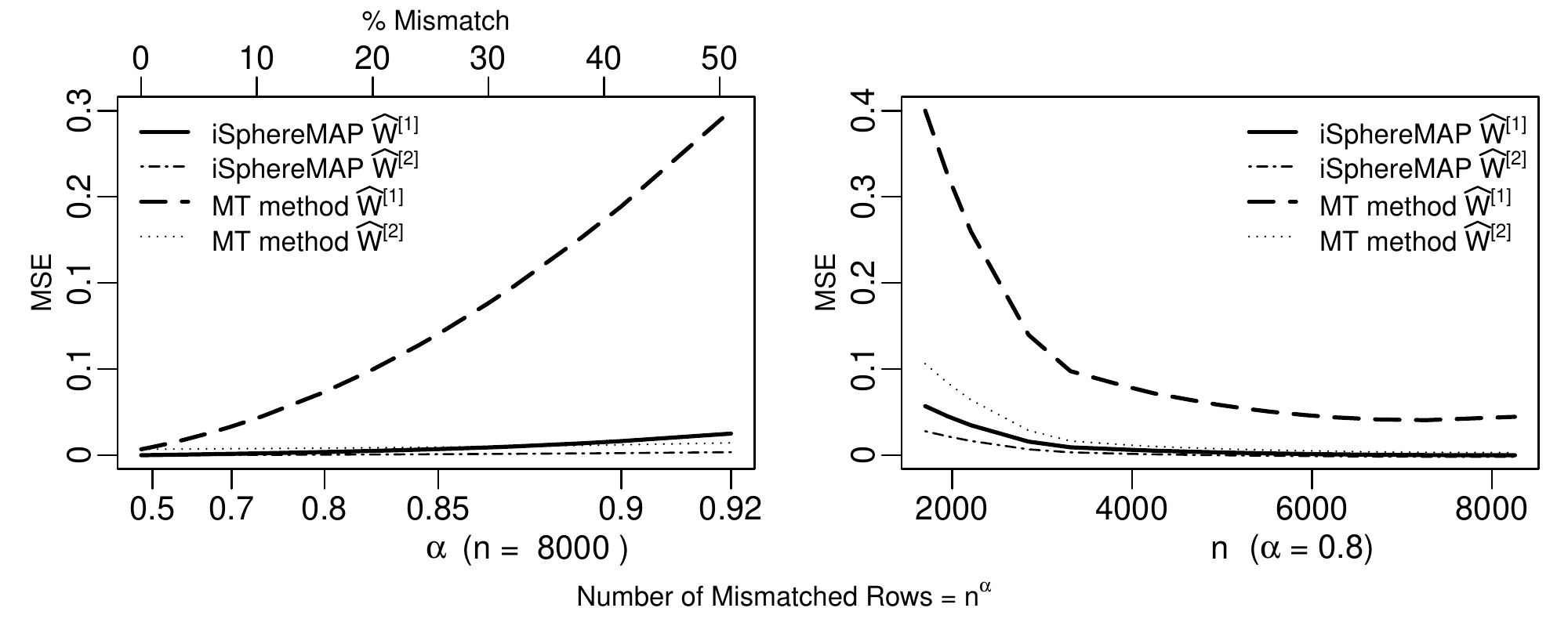}}
\caption{\label{supp:Wsimurslt} Performance of $\Wbbhat\supone$ and $\Wbbhat\suptwo$ obtained based on the proposed spherical regression and OLS in terms of the MSE (normalized by $p^{-1}=1/300$) under increasing amount of mismatch (left panel) and sample size (right panel).}
\end{subfigure}
\begin{subfigure}[t]{1\textwidth}
\makebox[\textwidth][c]{\includegraphics[width=1\textwidth, scale=.5]{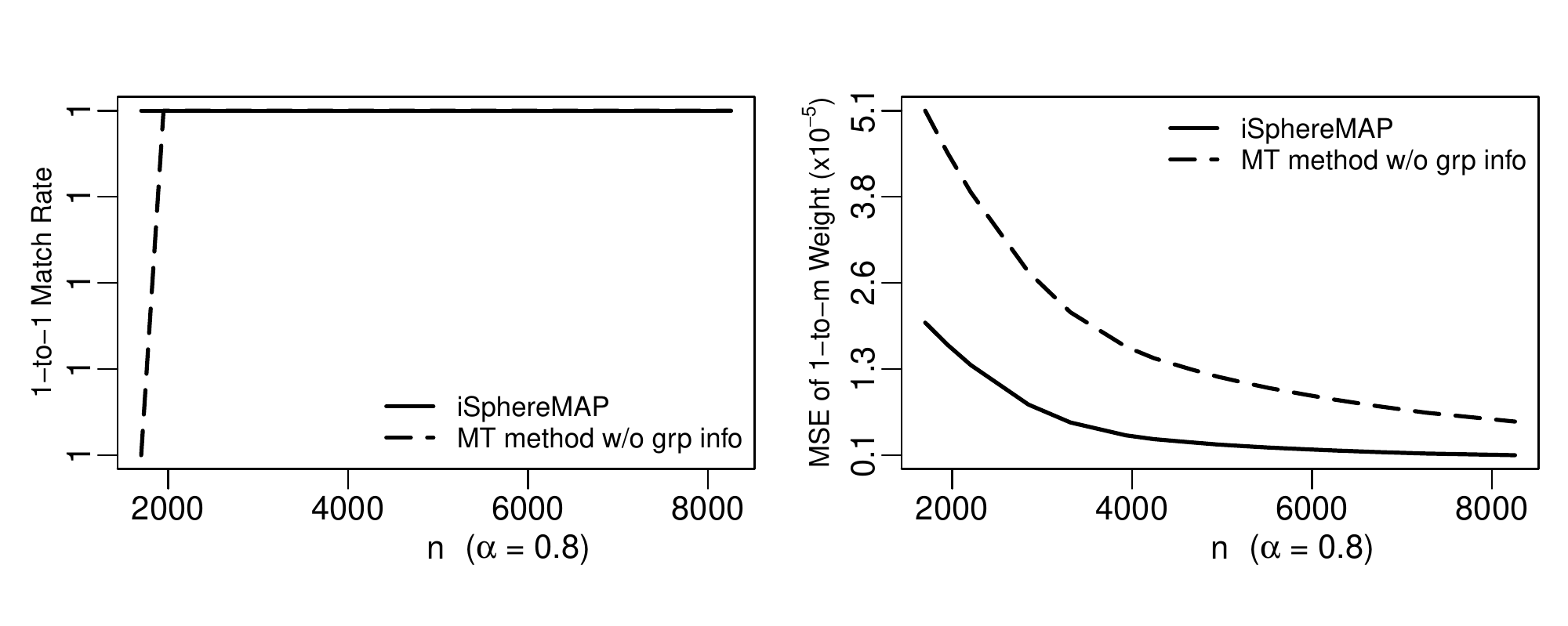}}
\caption{\label{supp:Pisimurslt} Performance of $\bPihat\suptwo$ estimated with and without group information in terms of the \onetoone match rate (left panel) and the MSE of \onetomany weight (right panel).}
\end{subfigure}
\caption{\label{supp:lessnoisefigs} Performance of $\Wbbhat\supone$, $\Wbbhat\suptwo$, and $\bPihat\suptwo$ in a less noisy scenario.}
\end{figure}

We summarize in Figure~\ref{supp:Wsimurslt} the MSEs of $\Wbbhat\supone$ and $\Wbbhat\suptwo$ from spherical regression and the MT method which uses the OLS. Figure~\ref{supp:Pisimurslt} presents the performance of $\bPihat\suptwo$ estimated from our method with group information and from the MT method without group information, in terms of the match rate for \onetoone mapping and the MSE of \onetomany mapping weight. Despite the fact that the estimators have relatively less MSE and match rate with less noise, we have the same observation as in Section~\ref{simu} that the iSphereMAP procedure generally outperforms the MT method, and the refinement of $\Wbb$ reduces the MSE.

\subsection{Refinement of $\Wbb$ using all \onetoone  (mis)matched data}\label{supp:add_mismatch_1to1}
In the refined estimation of $\Wbb$, we only use data deemed correctly matched according to $\bPihat\suptwo$ to obtain $\Wbbhat\suptwo$.
As discussed in Section~\ref{discussion}, removing mismatched codes yields negligible information loss under the current setting of sparse mismatch with $\nmis=o(n)$. However, for settings with a large amount of mismatch, it may potentially improve estimation if both one-to-one matched and mismatched data are used to obtain $\Wbbhat\suptwo$, i.e., adding $\bY_i$ indexed by $\{i\notin \mathcal{C}: \bPihat\suptwo_{i\cdot}\ne \Ibb_{i \cdot} \}$ and mapping them to the corresponding $\bX_i$ according to $\bPihat\suptwo$. 
Further including the \onetoone mismatched data may increase sample size and improve estimation.
\begin{figure}[!h]
	\begin{centering}
		\makebox[\textwidth][c]{\includegraphics[width=1\textwidth,
			scale=.5]{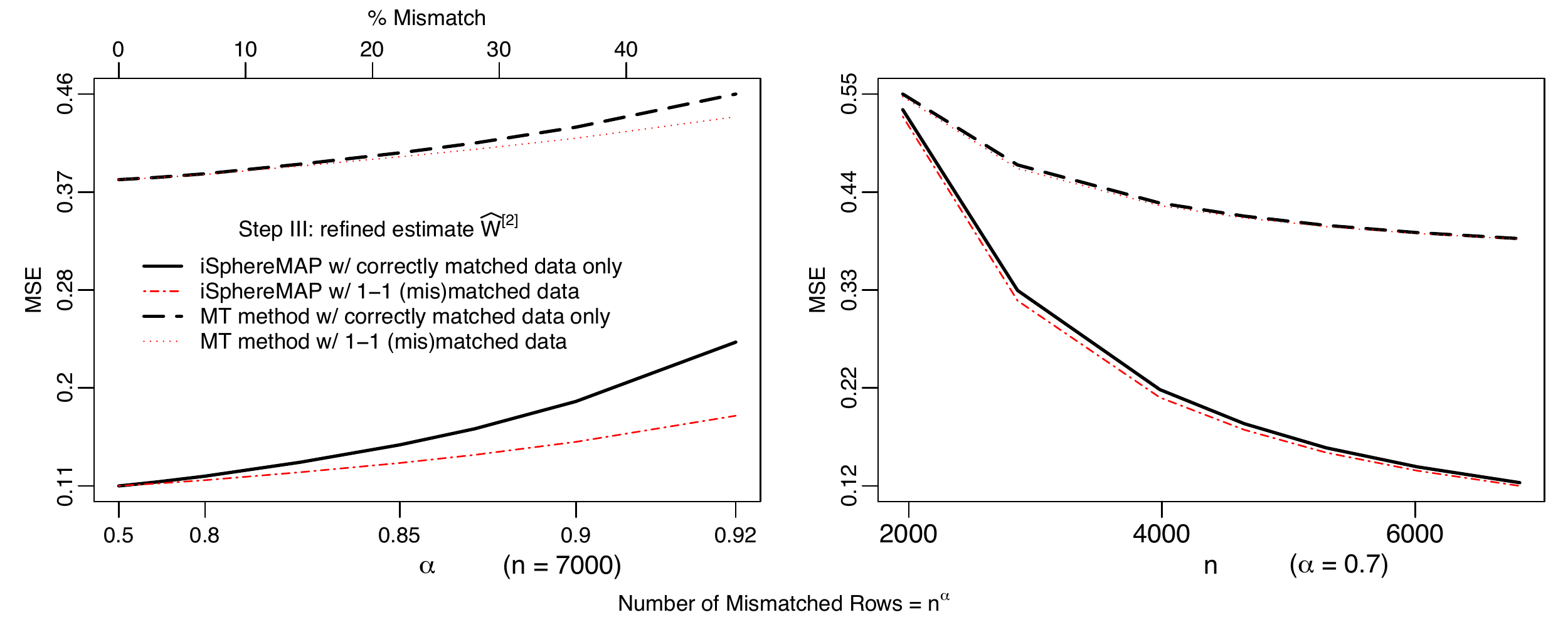}}
		\par \end{centering}
	\caption{\label{supp:fig:add_mismatch_1to1} Performance of $\Wbbhat\suptwo$ in terms of the MSE (normalized by $p^{-1}=1/300$) under increasing amount of mismatch (left panel) and sample size (right panel). Comparing $\Wbbhat\suptwo$ estimated using all \onetoone (mis)matched data versus using just correctly matched data.} 
\end{figure}

In this section, we evaluate the performance of $\Wbbhat\suptwo$ that is obtained using all \onetoone (mis)matched data, and compare it to the proposed method which uses just the \onetoone correctly matched data. 
We investigate whether further including the \onetoone mismatched data would lead to better estimation. 
Figure~\ref{supp:fig:add_mismatch_1to1} presents the MSEs of the refined estimate $\Wbbhat\suptwo$ from spherical regression and the MT method which uses the OLS. 
As is shown in Figure~\ref{supp:fig:add_mismatch_1to1}, there is some improvement for both methods in estimation of $\Wbbhat\suptwo$ when one uses all data that can be one-to-one mapped, and such improvement increases as the amount of mismatch increases. 

\newpage
\section{Alternative thresholding methods}\label{supp:adaptive_threshold}
The hard-thresholding procedure provides a framework to estimate the mapping matrix $\bPi$. Although a fixed threshold was proposed for model selection, i.e., to distinguish between \onetoone and \onetomany mappings, data-dependent adaptive thresholding  may further improve the performance.
In this section, we consider alternative strategies of thresholding that borrow information from (1) the group size $n_k$, (2) prior knowledge about the amount of \onetoone mapping within each group, and (3) the initial estimate $\Initialpi_{i\cdot}$, which we explain as follows.

As is stated in Theorem~\ref{thm:thresholding}, the hard-thresholding procedure needs to be insensitive to the estimation error of $\Initialpi$. Specifically, we require $\lambda_n \gg c_n$, where $c_n$ represents the order of $\max_{1\leq k \leq K}\|\Initialpi^k-\bPi^k\|_F$, which grows with group size $n_k$. A potential strategy to incorporate group size information is to define a group-specific threshold 
\vspace{-0.1in}\[\vspace{-0.1in}
\lambda_{n,k}\propto {\log(n_k)}\lambda_n,
\]
where $\lambda_n$ denotes an overall tuning parameter selected through cross-validation, and $\log(n_k)$ is chosen to introduce modest adjustment to the threshold based on group size. In addition, if we know a priori that group $k$ contains many \onetoone mappings, i.e., a large amount of true $\beta_i$ is zero, then it may help to use a larger threshold which encourages thresholding $\widetilde{\beta}_i$ to zero. One way to incorporate prior knowledge about the amount of \onetoone mapping is the following group-specific threshold
\vspace{-0.1in}\[\vspace{-0.1in}
\lambda_{n,k}\propto \eta_k \lambda_n,
\]
where $\eta_k$ is the proportion of \onetoone mapping in group $k$ assumed to be known a priori.

In practice, we may not have prior knowledge about the amount of  \onetoone mapping. In this case, we could consider learning the ``flatness" of the initial estimate $\Initialpi_{i\cdot}$. The ``flatness" of $\Initialpi_{i\cdot}$ indicates how distinguishable it is from a \onetoone mapping. We consider the following adaptive threshold
\vspace{-0.05in}\[\vspace{-0.01in}
\lambda_i\propto
\mathbbm{1}(\max_{j\in G_k} \Initialpi_{ij}>0)
\left\|
\frac{\Initialpi_{[i,G_k]}}{\max_{j\in G_k} \Initialpi_{ij}}	-\mathbf{1}_{n_k} 
\right\|_2\lambda_n,
\]
where $G_k$ indexes group $k$ to which item $i$ belongs, $\mathbf{1}_{n_k}$ is  a vector of ones with length $n_k=|G_k|$.
The term $\mathbbm{1}(\max_{j\in G_k}  \Initialpi_{ij}>0)$ ensures that when $\max_{j\in G_k}  \Initialpi_{ij}\leq 0$,  $\lambda_i=0$ and thus the corresponding mapping is one-to-many. The term $\|\frac{\Initialpi_{i\cdot}}{\max_j \Initialpi_{ij}}	-\mathbf{1}_{n_k} \|_2$ aims to pick up the following two patterns:
\begin{itemize}
	\item The more flat $\Initialpi_{i\cdot}$ is, the smaller $\|\frac{\Initialpi_{i\cdot}}{\max_j \Initialpi_{ij}}	-\mathbf{1}_{n_k} \|_2$ is. This will lead to a smaller $\lambda_i$ such that it is more likely to threshold $\beta_i$ to \onetomany mapping;
	\item The larger $n_k$ is, the larger $\|\frac{\Initialpi_{i\cdot}}{\max_j \Initialpi_{ij}}	-\mathbf{1}_{n_k} \|_2$ tends to be. This will lead to a larger $\lambda_i$ such that it is more likely to threshold $\beta_i$ to \onetoone mapping.
\end{itemize}

We evaluate the performance of data-adaptive thresholding method via simulation studies. Data are generated following the same procedure as Section~\ref{simu} with sample size $n=7000$ and amount of mismatch $\nmis=n^{0.88}$ for methods (1) and (3), $\nmis=n^{0.92}$ for method (2). Compared to the original iSphereMAP algorithm, the adaptive threshold will result in a different set of \onetoone and \onetomany mappings. Therefore, we evaluate the model selection performance based on the proportions of correctly identified \onetoone and \onetomany mappings. 
\begin{figure}[htp]
	\begin{subfigure}[t]{0.3\textwidth}
		\makebox[\textwidth][c]{\includegraphics[width=1\textwidth, scale=.5]{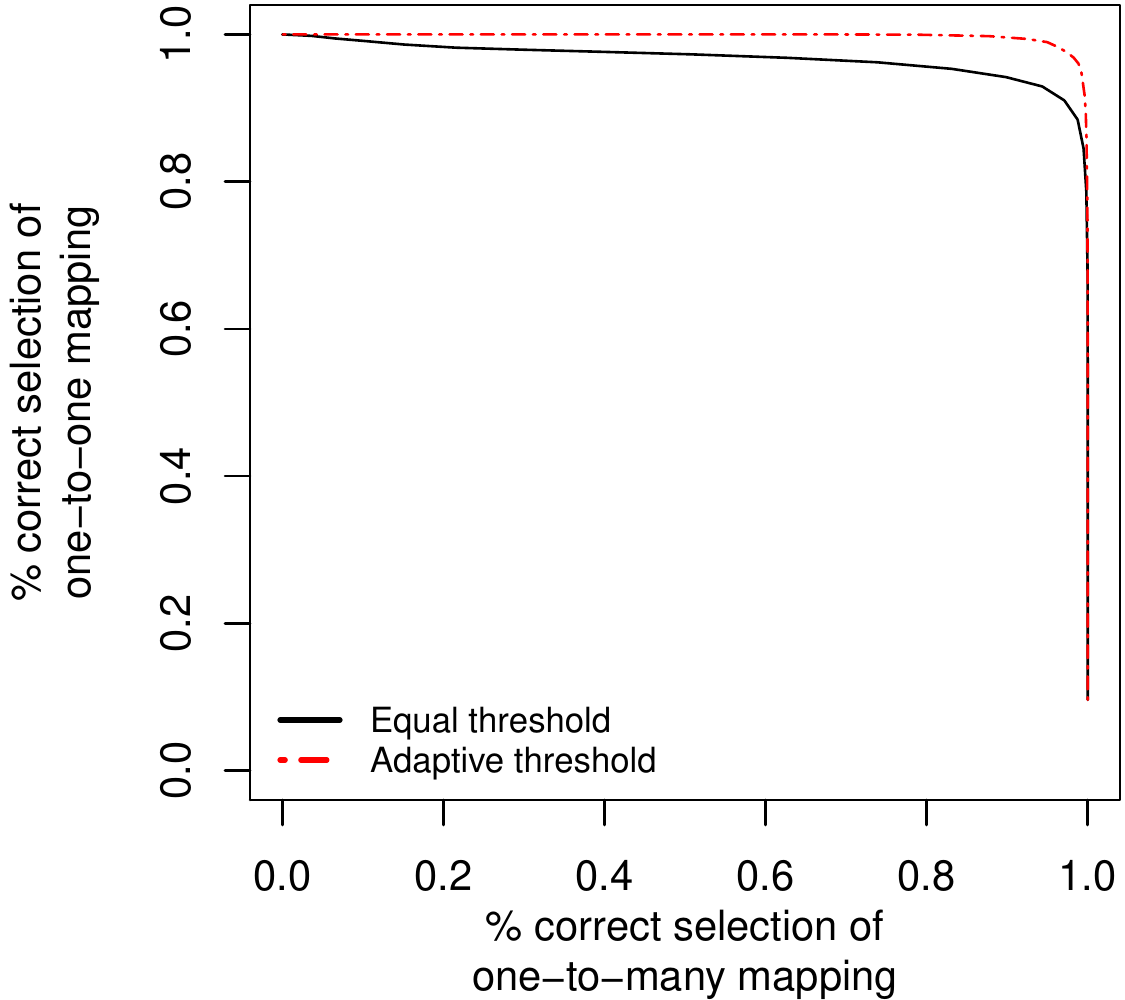}}
		\caption{\label{supp:fig:lognk} 
			$\lambda_{n,k}\propto {\log(n_k)}\lambda_n$}
	\end{subfigure}\hspace{0.2in}
	\begin{subfigure}[t]{0.3\textwidth}
		\makebox[\textwidth][c]{\includegraphics[width=1\textwidth, scale=.5]{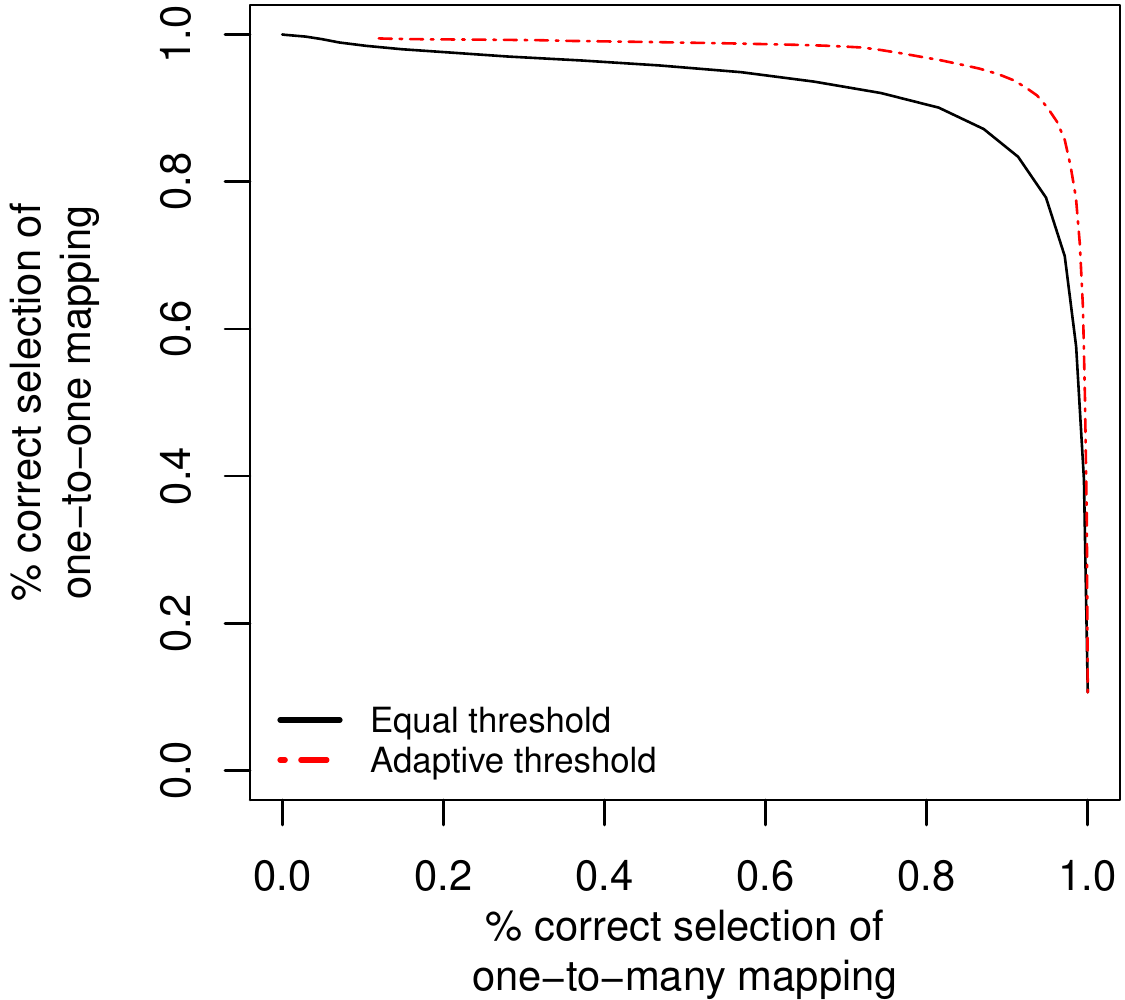}}
		\caption{\label{supp:fig:known_amount_1tom} 
			$\lambda_{n,k}\propto \eta_k \lambda_n$}
	\end{subfigure}\hspace{0.2in}
	\begin{subfigure}[t]{0.3\textwidth}
		\makebox[\textwidth][c]{\includegraphics[width=1\textwidth, scale=.5]{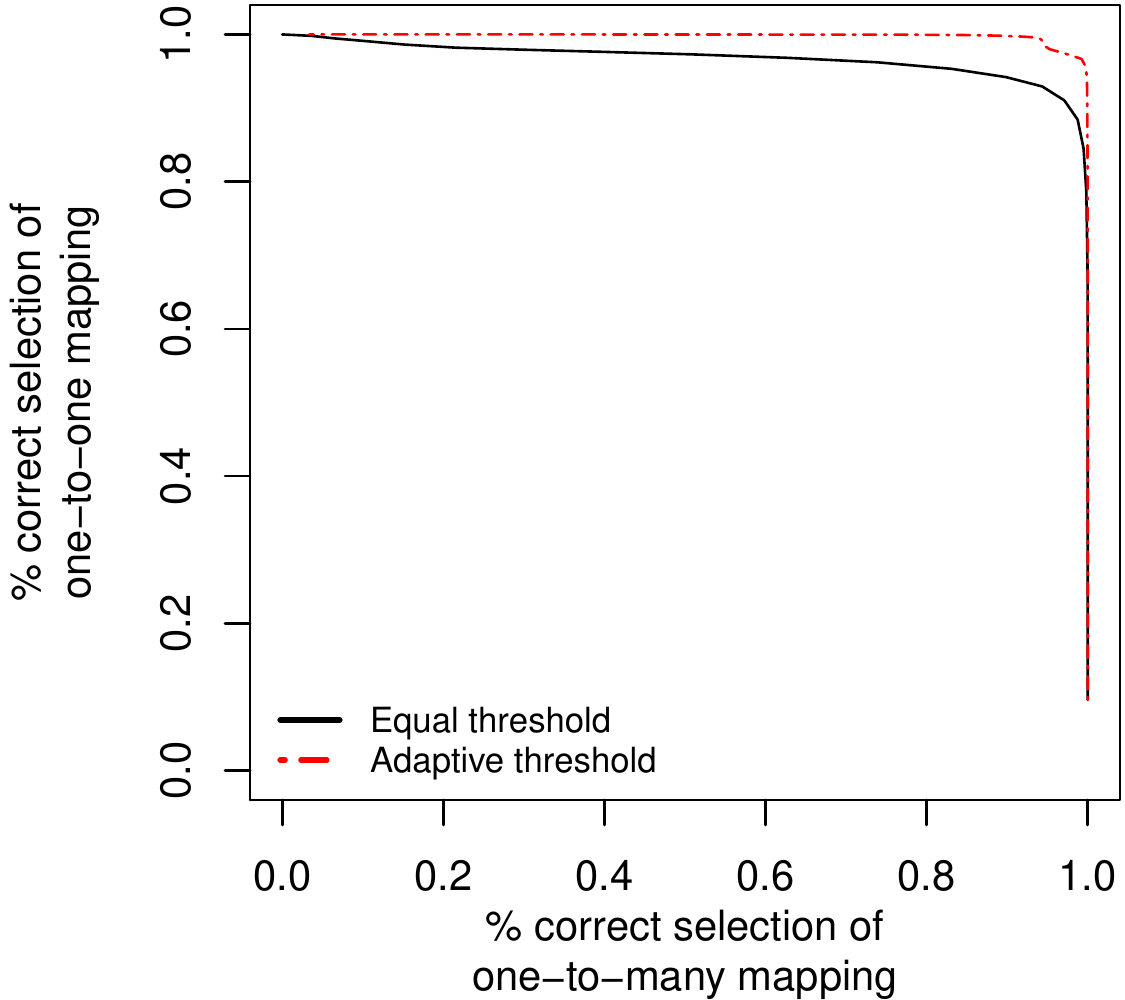}}
		\caption{\label{supp:fig:codewise_adaptive} 
			{\small{$\lambda_i\propto\mathbbm{1}\!(\max_j \!\Initialpi_{ij}>0)$\\ $\|\frac{\Initialpi_{i\cdot}}{\max_j \Initialpi_{ij}}	-\mathbf{1}_{n_k}	\|_2\lambda_n$}}}
	\end{subfigure}
	\caption{\label{all_adaptive_threshold_rslts}Performance of model selection in terms of the proportions of correctly identified \onetoone and \onetomany mappings, comparing the data-adaptive threshold and equal threshold.}
\end{figure}
Figure~\ref{all_adaptive_threshold_rslts} presents the performance of model selection for all three adaptive thresholding methods. Specifically, Figure~\ref{supp:fig:lognk} shows the performance of group-size-specific thresholding; Figure~\ref{supp:fig:known_amount_1tom} evaluates the contribution of prior knowledge on the amount of \onetoone mapping; Figure~\ref{supp:fig:codewise_adaptive} presents the performance of code-specific thresholding based on the initial estimate. All three methods have better model selection performance in terms of the percentages of correctly identified \onetoone and \onetomany mappings.

	\bibliographystyle{agsm}
	\bibliography{ref_W2V}

\end{document}